\documentclass[aps,10pt,groupedaddress,pra,superscriptaddress,twocolumn,balancelastpage]{revtex4-2}

\let\ifarxiv\iftrue

\RequirePackage[english]{babel}
\RequirePackage[T1]{fontenc}
\RequirePackage[utf8]{inputenc}
\RequirePackage{amsmath}
\RequirePackage{amsthm}
\RequirePackage{amssymb}
\RequirePackage{amsfonts}
\RequirePackage{amsbsy}
\RequirePackage{colonequals}
\RequirePackage{nccmath}
\RequirePackage{mathtools}
\RequirePackage{bm}
\RequirePackage{dsfont}
\RequirePackage{lipsum}
\RequirePackage[hidelinks]{hyperref}
\RequirePackage{ragged2e}
\RequirePackage{textcomp}
\RequirePackage{gensymb}
\RequirePackage{braket}
\RequirePackage{xifthen}
\RequirePackage{makecell}
\RequirePackage[inline]{enumitem}
\RequirePackage{booktabs}
\RequirePackage{tikz}
\RequirePackage[mode=build]{standalone}
\RequirePackage{tcolorbox}
\RequirePackage{mdframed}
\RequirePackage{geometry}
\geometry{margin=2cm}
\RequirePackage[createShortEnv]{proof-at-the-end}
\RequirePackage[classfont=sanserif,langfont=sanserif,funcfont=sanserif]{complexity}
\RequirePackage[ruled,lined,linesnumbered]{algorithm2e}
\SetKwComment{Comment}{$\triangleright$\ }{}
\RequirePackage{tabularx}
\newcolumntype{Y}{>{\centering\arraybackslash}X}

\RequirePackage{colortbl}
\RequirePackage{hhline}

\RequirePackage{booktabs}
\RequirePackage{pifont}
\RequirePackage[justification=justified,format=plain]{subcaption}

\RequirePackage{ytableau}
\ytableausetup{centertableaux}

\RequirePackage{multirow}

\RequirePackage{hhline}

\RequirePackage{physics}
\let\Tr\tr

\hbadness=99999

\usetikzlibrary{arrows,3d,shapes,calc,decorations.pathreplacing,decorations.markings,positioning,intersections,shapes.symbols,positioning,calligraphy,matrix}

\tikzset{
	BC/.style = {decorate, 
		decoration={calligraphic brace, amplitude=5pt, raise=1mm},
		very thick, pen colour={black}
	},
}

\DeclareMathAlphabet{\mymathbb}{U}{BOONDOX-ds}{m}{n}

\theoremstyle{plain}
\newtheorem{theorem}             {Theorem}
\newtheorem{proposition}{Proposition}

\newtheorem{lemma}{Lemma}

\newtheorem*{theorem*}    {Theorem}
\newtheorem*{proposition*}{Proposition}
\newtheorem*{lemma*}      {Lemma}
\newtheorem*{corollary*}  {Corollary}
\newtheorem*{conjecture*} {Conjecture}

\theoremstyle{definition}
\newtheorem{definition}{Definition}

\newtheorem*{definition*}{Definition}
\newtheorem*{example*}   {Example}

\newtheoremstyle{applicationstyle} 
    {0em}                    
    {0em}                    
    {}                   
    {\parindent}                           
    {\bfseries}                   
    {.}                          
    {3pt}                       
    {#1 #2: #3}  

\theoremstyle{applicationstyle}
\newtheorem{application}{Application}

\numberwithin{application}{subsection} 
\newtcolorbox{mybox}[2][]{
               = {yshift=-8pt},
  colback      = cyan!6!white,
  colframe     = cyan!1!black,
  halign       = flush left,
  fonttitle    = \bfseries\sffamily,
  colbacktitle = cyan!50!black,
  title        = #2,#1,
  }

\newcommand{\phsp}[2]{%
\ifthenelse{\isempty{#1}}
	{\rule[-.5\baselineskip]{0pt}{.5\baselineskip}}%
	{\rule[#1\baselineskip]{0pt}{#2\baselineskip}}%
}

\newcommand{\Id}{\mathbb{I}}


\newcommand{\intg}[3]{\int_{#1}\mathrm{d}#3 #2}

\newcommand{\aco}{\left\{}
\newcommand{\acf}{\right\}}
\newcommand{\lb}{\left|}
\newcommand{\rb}{\right|}
\newcommand{\lpr}{\left(}
\newcommand{\rpr}{\right)} 
\newcommand{\lbr}{\left[}
\newcommand{\rbr}{\right]} 
\renewcommand{\abs}[1]{\lb\!#1\rb}


\makeatletter
\newcommand{\leqnomode}{\tagsleft@true\let\veqno\@@leqno}
\newcommand{\reqnomode}{\tagsleft@false\let\veqno\@@eqno}
\makeatother

\makeatletter
\newcommand{\proglabel}[2]{%
   \protected@write \@auxout {}{\string \newlabel {#1}{{#2}{\thepage}{#2}{#1}{}} }%
   \hypertarget{#1}{#2}
}
\makeatother

\newcommand{\Ac}{\mathcal{A}}
\newcommand{\Bc}{\mathcal{B}}

\newcommand{\Dc}{\mathcal{D}}

\newcommand{\Fc}{\mathcal{F}}

\newcommand{\Hc}{\mathcal{H}}

\newcommand{\Kc}{\mathcal{K}}

\newcommand{\Mc}{\mathscr{M}}
\newcommand{\Oc}{\mathcal{O}}

\newcommand{\Pc}{\mathcal{P}}
\newcommand{\Uc}{\mathcal{U}}

\newcommand{\Xc}{\mathcal{X}}
\newcommand{\Yc}{\mathcal{Y}}
\newcommand{\Zc}{\mathcal{Z}}
\newcommand{\Ls}{\mathscr{L}}
\newcommand{\Bs}{\mathscr{B}}

\newcommand{\Ms}{\mathscr{M}}

\newcommand{\Vs}{\mathscr{V}}

\newcommand{\U}{\mathbb{U}}

\newcommand{\N}{\mathbb{N}}

\renewcommand{\S}{\textsf{S}}
\usepackage{mathrsfs}

\makeatletter
\renewcommand{\fnum@figure}{Figure \thefigure}
\renewcommand{\fnum@table}{Table \thetable}
\makeatother

\renewcommand{\thetable}{\arabic{table}}

\usepackage[nameinlink]{cleveref} 
\crefname{chapter}{Chapter}{Chapters}
\crefname{section}{Section}{Sections}
\crefname{algorithm}{Algorithm}{Algorithms}
\crefname{application}{Application}{Applications}
\crefname{line}{Line}{Lines}
\crefname{equation}{Equation}{Equations}
\crefname{figure}{Figure}{Figures}
\crefname{subfigure}{Panel}{Panels}
\crefname{table}{Table}{Tables}
\crefname{appendix}{Appendix}{Appendices}
\crefname{theorem}{Theorem}{Theorems}
\crefname{corollary}{Corollary}{Corollaries}
\crefname{lemma}{Lemma}{Lemmas}
\crefname{proposition}{Proposition}{Propositions}
\crefname{definition}{Definition}{Definitions}
\crefname{footnote}{Footnote}{Footnotes}

\makeatletter
\renewcommand\p@subfigure{}

\DeclareCaptionLabelFormat{mysublabelfmt}{(\alph{sub\@captype})}
\makeatother
\captionsetup{labelsep=colon,labelfont=bf}
\captionsetup[sub]{labelformat=mysublabelfmt}

\let\autoref\cref

\newcommand{\bs}[1]{\boldsymbol{#1}}
\newcommand{\bigo}[1]{\Oc\!\left(#1\right)}

\newcommand{\pr}[2][]{
	\mathop{
		\ifx &#1&
		\mathrm{Pr}
		\else
			\mathrm{Pr}_{#1}
		\fi
		\left[#2\right]}
}

\newcommand{\e}[2][]{
	\mathop{
		\ifx &#1&
			\mathbb{E}
		\else
			\underset{#1}{\mathbb{E}}
		\fi
		\left[#2\right]}
}

\renewcommand{\var}[1]{\mathrm{Var}\left[#1\right]}

\newcommand{\s}{{\bm{s}}}

\newcommand{\n}{{\bm{n}}}

\newcommand{\x}{{\bm{x}}}
\newcommand{\y}{{\bm{y}}}
\renewcommand{\t}{{\bm{t}}}

\def\multiset#1#2{\ensuremath{\left(\kern-.3em\left(\genfrac{}{}{0pt}{}{#1}{#2}\right)\kern-.3em\right)}}

\usepackage{pgfplots}
\pgfplotsset{compat=newest}
\usepgfplotslibrary{groupplots}
\usepgfplotslibrary{polar}
\usepgfplotslibrary{smithchart}
\usepgfplotslibrary{statistics}
\usepgfplotslibrary{dateplot}
\usepgfplotslibrary{ternary}
\usetikzlibrary{arrows.meta}
\usetikzlibrary{backgrounds}
\usepgfplotslibrary{patchplots}
\usepgfplotslibrary{fillbetween}
\pgfplotsset{%
    layers/standard/.define layer set={%
        background,axis background,axis grid,axis ticks,axis lines,axis tick labels,pre main,main,axis descriptions,axis foreground%
    }{
        grid style={/pgfplots/on layer=axis grid},%
        tick style={/pgfplots/on layer=axis ticks},%
        axis line style={/pgfplots/on layer=axis lines},%
        label style={/pgfplots/on layer=axis descriptions},%
        legend style={/pgfplots/on layer=axis descriptions},%
        title style={/pgfplots/on layer=axis descriptions},%
        colorbar style={/pgfplots/on layer=axis descriptions},%
        ticklabel style={/pgfplots/on layer=axis tick labels},%
        axis background@ style={/pgfplots/on layer=axis background},%
        3d box foreground style={/pgfplots/on layer=axis foreground},%
    },
}

\renewcommand{\geq}{\geqslant}
\renewcommand{\leq}{\leqslant}

\definecolor{lightblue}{RGB}{84,189,220}
\definecolor{navy}{RGB}{47,60,126}
\definecolor{quandelablue}{HTML}{435BEC}
\definecolor{quandeladarkblue}{HTML}{224C98}
\definecolor{quandelared}{HTML}{F07362}
\definecolor{darkviolet}{RGB}{99,56,142}
\definecolor{darkgreen}{RGB}{39,174,96}
\definecolor{violet}{RGB}{222,49,99}
\definecolor{lightblue}{RGB}{84,189,220}
\definecolor{navy}{RGB}{47,60,126}
\definecolor{vertsauj}{HTML}{D4E3CF}
\definecolor{oranj}{HTML}{FFAB0D}

\newclass{\BosonPCC}{BosonP}

\newcommand{\per}[1]{\mathrm{Per}\left(#1\right)}

\renewcommand{\S}{S}

\RequirePackage{anyfontsize}

\newcommand{\VS}{\text{VS}}

\newcommand{\Minv}[2]{\Mc^{(#1)^{-1}}\!\!\lpr#2\rpr}
\newcommand{\mdiag}{\mathfrak{D}}
\let\gt\mathfrak

\newcommand{\psn}{\mathscr{P}}

\makeatletter
\def\section{%
  \@startsection{section}{1}{\z@}%
    {0.4cm \@plus 1ex \@minus .2ex}
    {0.2cm}
    {\centering\normalfont\small\bfseries}%
}
\def\subsection{%
  \@startsection{subsection}{2}{\z@}%
    {0.4cm \@plus 1ex \@minus .2ex}%
    {0.15cm}%
    {\centering\normalfont\small\bfseries}%
}
\def\subsubsection{%
  \@startsection{subsubsection}{3}{\z@}%
    {0.25cm \@plus 1ex \@minus .2ex}%
    {0.12cm}%
    {\centering\normalfont\small\itshape}%
}
\makeatother

\makeatletter

\newif\ifinappendix
\inappendixfalse

\pretocmd{\appendix}{\inappendixtrue}{}{}

\let\orig@section\section
\let\orig@subsection\subsection
\let\orig@subsubsection\subsubsection

\renewcommand{\section}{%
  \@ifstar
    {\orig@section*}%
    {\app@section}%
}

\newcommand{\app@section}[1]{%
  \orig@section{#1}%
  \ifinappendix
    \addcontentsline{atoc}{section}{%
      \protect\numberline{\thesection}#1%
    }%
  \fi
}

\renewcommand{\subsection}{%
  \@ifstar
    {\orig@subsection*}%
    {\app@subsection}%
}

\newcommand{\app@subsection}[1]{%
  \orig@subsection{#1}%
  \ifinappendix
    \addcontentsline{atoc}{subsection}{%
      \protect\numberline{\thesubsection}#1%
    }%
  \fi
}

\renewcommand{\subsubsection}{%
  \@ifstar
    {\orig@subsubsection*}%
    {\app@subsubsection}%
}

\newcommand{\app@subsubsection}[1]{%
  \orig@subsubsection{#1}%
  \ifinappendix
    \addcontentsline{atoc}{subsubsection}{%
      \protect\numberline{\thesubsubsection}#1%
    }%
  \fi
}

\makeatother

\ifarxiv
\newcommand{\arxiv}[2]{#1}
\else
\newcommand{\arxiv}[2]{#2}
\fi

\hypersetup{
    colorlinks=true,
    linkcolor=navy,
    citecolor=navy,
    urlcolor=navy,
    breaklinks=true,
}

\begin{document}

\title{Learning photonic quantum states}
\author{Hugo Thomas}
\thanks{hugo.thomas@quandela.com}
\affiliation{Quandela, 7 rue Léonard de Vinci, 91300 Massy, France}
\affiliation{Sorbonne Université, CNRS, LIP6, 75005 Paris, France}
\affiliation{DIENS, \'Ecole Normale Supérieure, PSL University, CNRS, INRIA, 45 rue d'Ulm, 75005 Paris, France}
\author{Ulysse Chabaud}
\affiliation{DIENS, \'Ecole Normale Supérieure, PSL University, CNRS, INRIA, 45 rue d'Ulm, 75005 Paris, France}
\author{Pierre-Emmanuel Emeriau}
\affiliation{Quandela, 7 rue Léonard de Vinci, 91300 Massy, France}

\date{\today}

\begin{abstract}

Learning quantum state properties is both a fundamental and
practical problem in quantum information theory. Classical shadows have emerged
as an efficient method for estimating properties of unknown quantum states, with
rigorous statistical guarantees, by performing randomized measurement on few
copies of the state. With the advent of photonic technologies, formulating
efficient learning algorithms for such platforms comes out as a natural problem.
Here, we introduce a practical classical shadow protocol for learning photonic quantum
states via randomized passive linear optical transformations and photon-number
measurement. We provide rigorous theoretical guarantees showing that our scheme is sample- and time-efficient for measuring physical observables of interest. We experimentally demonstrate our photonic classical shadow protocol on both a twelve-mode and a
twenty-four-mode integrated quantum processing unit, and showcase its versatility with five different applications, including Hamiltonian measurement and learning complex photonic states.

\end{abstract}

\maketitle

\ifarxiv

\section{Introduction}

\noindent Characterizing unknown quantum states is a fundamental challenge in
quantum information science. Quantum state tomography allows to reconstruct the
full density matrix of a state $\rho$, but it suffers from an unavoidable
scaling: the number of copies required grows exponentially with the number of
subsystems. As ever-bigger experimental systems become accessible thanks to
recent quantum hardware development
\cite{nigmatullin_experimental_2025,larsen_integrated_2025,liu_robust_2025,salesrodriguez_experimental_2025},
full-fledged quantum state tomography rapidly becomes intractable, since even
storing the density matrix in classical memory is already hopeless. The question
is no longer on how to reconstruct the full state, but rather about how much
useful information can be extracted from a limited number of copies.

A breakthrough in this direction came with shadow tomography.
Aaronson~\cite{aaronson_shadow_2018} first showed that for predicting physical
properties, reconstructing the entire density matrix is unnecessary. Instead, a
well-chosen, lightweight description of the quantum state---a \textit{shadow} of
the state---suffices. Huang et al.~\cite{huang_predicting_2020} subsequently
introduced the notion of \textit{classical shadows} produced from randomized
measurements on only a few copies of the state and classical post-processing.
These shadows enable the efficient prediction of a wide variety of quantum
features, including expectation values, fidelities, and entropies. Since then,
the framework has been widely adopted and refined, incorporating error
mitigation~\cite{jnane_quantum_2024,hu_demonstration_2025,chen_robust_2021},
adaptation to hardware
constraints~\cite{west_real_2025,koh_classical_2022,zhao_fermionic_2021,wan_matchgate_2023,sauvage_classical_2024,ippoliti_classical_2024,low_classical_2024},
including continuous-variable bosonic systems
\cite{gandhari_precision_2024,becker_classical_2024} and improvements of the
protocol
itself~\cite{nguyen_optimizing_2022,helsen_thrifty_2023,zhou_performance_2023,chen_adaptivity_2024,fawzi_learning_2024}.
As a result, classical shadows have emerged as a practical tool across domains,
from
benchmarking~\cite{levy_classical_2024,kunjummen_shadow_2023,zhu_crossplatform_2022,elben_crossplatform_2020}
to error correction \cite{conrad_chasing_2025}, Hamiltonian
simulation~\cite{hadfield_measurements_2022,mcnulty_estimating_2023,dutt_practical_2023,huang_provably_2022}
and quantum machine
learning~\mbox{\cite{huang_power_2021,jerbi_shadows_2024,haug_quantum_2023}}.

Despite this prompt progress, most developments of shadow tomography have
focused on qubit-based architectures. Yet, photonics has emerged as one of the
most promising candidates for scalable quantum information
processing~\cite{knill_scheme_2001,bartolucci_fusionbased_2023,degliniasty_spinoptical_2024}.
Photonic qubits are inherently modular and naturally suited for networking, as
they can be routed, multiplexed and distributed with minimal cross-talk
\cite{wang_integrated_2020}. Integrated photonics offers a path to large-scale
quantum technologies by leveraging semiconductor fabrication techniques \cite{psiquantumteam_manufacturable_2025}.
Moreover, NISQ photonic devices are already publicly
accessible~\cite{maring_versatile_2024}, and photonics is a strong candidate for
quantum advantage demonstration through boson
sampling~\cite{aaronson_computational_2011}. Experimental realizations of boson
sampling have been reported with increasing scale and
sophistication~\cite{spring_boson_2013,broome_photonic_2013,loredo_boson_2017,wang_highefficiency_2017,tillmann_experimental_2013,hoch_reconfigurable_2022,hoch_quantum_2025},
motivating algorithms that explicitly exploit the hardness of simulating linear
optics~\cite{salavrakos_photonnative_2025}. However, throughout these advances
in photonic quantum information, characterizing multi-photon states in many
modes remains a central bottleneck. While randomized benchmarking protocols have
been introduced to assess the gate quality of discrete-variable bosonic systems
\cite{arienzo_bosonic_2025,wilkens_benchmarking_2024}, full tomography remains
prohibitively demanding~\cite{banchi_multiphoton_2018}.

Here, we bridge this gap by introducing a classical shadow protocol tailored to
photonic platforms in a practically motivated setting. This setting consists of
passive linear optics combined with photon-number resolving (PNR) detectors, an
architecture that is both experimentally mature and conceptually distinct from
qubit systems. This setting imposes three constraints: transformations are
limited to passive linear-optical transformations, states are encoded in Fock
states rather than qubits, and measurements are restricted to photon-number
detection. At first sight, these restrictions appear to preclude the versatility
of shadow tomography.

\begin{figure*}[t]
    \includegraphics[width=\textwidth]{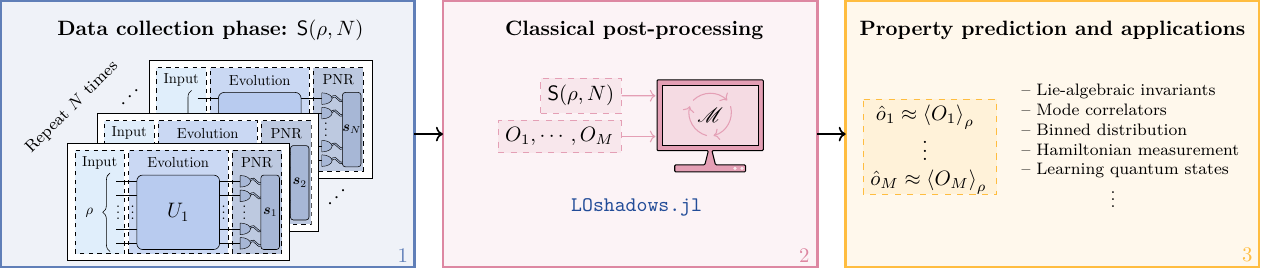}
    \caption{\justifying \textbf{Classical shadow pipeline for learning photonic
    states.} Illustration of the photonic classical shadow protocol. The classical shadow
    $\textsf{S}(\rho, N)$ consists of all pairs $(U_i, \bm s_i)$ of random
    linear optical network and associated measurement outcome obtained during
    the data collection phase. We provide a Julia package
    \cite{github_loshadows} implementing the post-processing required to
    estimate properties of the unknown input state; details can be found in
    \arxiv{\autoref{sec:channel}}{Supplementary Information}.\label{fig:photonicShadows}}
\end{figure*}

We show that scalable classical shadow protocols remain feasible under these
conditions. The main obstacle is PNR detection which erases coherence between
states of different photon-number, restricting tomography to individual
photon-number subspaces~\cite{banchi_multiphoton_2018}. However, this matches
most photonic applications, which operate at fixed photon-number, including
linear optical operations and dual-rail qubit encodings for universal
computation. 
Crucially, our protocol applies to generic photonic input states including those produced by nonlinear processes. 
This means that a non-universal, linear-optical
device suffices to characterize highly complex states produced by a universal
photonic quantum computer such as that envisioned in the
Knill--Laflamme--Milburn scheme~\cite{knill_scheme_2001}. Finally, we
demonstrate experimentally the versatility of our protocol on a twelve-mode and
a twenty-four-mode linear optical quantum computer using up to four
single-photons. We show that many physical properties of a photonic state can be
efficiently and accurately predicted, with applications to benchmarking and
certification, and learning among others, as pictured in
\autoref{fig:photonicShadows}.

\else

\section{Introduction}

\noindent Characterizing unknown quantum states is a fundamental challenge in
quantum information science. Quantum state tomography reconstructs the
full density matrix of a state $\rho$, but suffers from exponential scaling with the system size. As ever-bigger experimental systems become accessible thanks to
recent quantum hardware development
\cite{nigmatullin_experimental_2025,larsen_integrated_2025,liu_robust_2025,salesrodriguez_experimental_2025},
full-fledged quantum state tomography rapidly becomes intractable, since even
storing the density matrix in classical memory is hopeless. The question
is no longer on how to reconstruct the full state, but rather about how much
useful information can be extracted from a limited number of copies.

A breakthrough in this direction came with shadow tomography.
Aaronson~\cite{aaronson_shadow_2018} first showed that  predicting physical
properties, reconstructing the entire density matrix is unnecessary. Instead, a
well-chosen, lightweight description of the quantum state---a \textit{shadow} of
the state---suffices. Huang et al.~\cite{huang_predicting_2020} subsequently
introduced the notion of \textit{classical shadows} produced from randomized
measurements on only a few copies of the state and classical post-processing.
These shadows enable the efficient prediction of a wide variety of quantum
features, including expectation values, fidelities, and entropies. Since then,
the framework has been widely adopted and refined, incorporating error
mitigation~\cite{jnane_quantum_2024,hu_demonstration_2025,chen_robust_2021},
adaptation to hardware
constraints~\cite{west_real_2025,koh_classical_2022,zhao_fermionic_2021,wan_matchgate_2023,sauvage_classical_2024,ippoliti_classical_2024,low_classical_2024},
including continuous-variable bosonic systems
\cite{gandhari_precision_2024,becker_classical_2024} and improvements of the
protocol
itself~\cite{nguyen_optimizing_2022,helsen_thrifty_2023,zhou_performance_2023,chen_adaptivity_2024,fawzi_learning_2024}.
As a result, classical shadows have emerged as a practical tool across domains,
from
benchmarking~\cite{levy_classical_2024,kunjummen_shadow_2023,zhu_crossplatform_2022,elben_crossplatform_2020}
to error correction \cite{conrad_chasing_2025}, Hamiltonian
simulation~\cite{hadfield_measurements_2022,mcnulty_estimating_2023,dutt_practical_2023,huang_provably_2022}
and quantum machine
learning~\mbox{\cite{huang_power_2021,jerbi_shadows_2024,haug_quantum_2023}}.

Despite this prompt progress, most developments of shadow tomography have
focused on qubit-based architectures. Yet, photonics has emerged as one of the
most promising candidates for scalable quantum information
processing~\cite{knill_scheme_2001,bartolucci_fusionbased_2023,degliniasty_spinoptical_2024}.
Photonic qubits are inherently modular and naturally suited for networking, as
they can be routed, multiplexed and distributed with minimal cross-talk
\cite{wang_integrated_2020}. Integrated photonics offers a path to large-scale
quantum technologies by leveraging semiconductor fabrication techniques.
Moreover, NISQ photonic devices are already publicly
accessible~\cite{maring_versatile_2024}, and photonics is a strong candidate for
quantum advantage demonstration through boson
sampling~\cite{aaronson_computational_2011}. Experimental realizations of boson
sampling have been reported with increasing scale and
sophistication~\cite{spring_boson_2013,broome_photonic_2013,loredo_boson_2017,wang_highefficiency_2017,tillmann_experimental_2013,hoch_reconfigurable_2022,hoch_quantum_2025},
motivating algorithms that explicitly exploit the hardness of simulating linear
optics~\cite{salavrakos_photonnative_2025}. However, throughout these advances
in photonic quantum information, characterizing multi-photon states in many
modes remains a central bottleneck. While randomized benchmarking protocols have
been introduced to assess the gate quality of discrete-variable bosonic systems
\cite{arienzo_bosonic_2025,wilkens_benchmarking_2024}, full tomography remains
prohibitively demanding~\cite{banchi_multiphoton_2018}.

Here, we bridge this gap by introducing a classical shadow protocol tailored to
photonic platforms in a practically motivated setting. This setting consists of
passive linear optics combined with photon-number resolving (PNR) detectors, an
architecture that is both experimentally mature and conceptually distinct from
qubit systems. This setting imposes two constraints: transformations are
limited to passive linear-optical transformations and measurements are restricted to photon-number
detection. At first sight, these restrictions appear to preclude the versatility
of shadow tomography.

\begin{figure*}[t]
    \includegraphics[width=\textwidth]{shadow-pipeline.pdf}
    \caption{\justifying \textbf{Classical shadow pipeline for learning photonic
    states.} Illustration of the photonic classical shadow protocol. The classical shadow
    $\textsf{S}(\rho, N)$ consists of all pairs $(U_i, \bm s_i)$ of random
    linear optical network and associated measurement outcome obtained during
    the data collection phase. We provide a Julia package
    \cite{github_loshadows} implementing the post-processing required to
    estimate properties of the unknown input state; details can be found in
    Supplementary Information.\label{fig:photonicShadows}}
\end{figure*}

We show that scalable classical shadow protocols remain feasible under these
conditions. The main obstacle is PNR detection which erases coherence between
states of different photon-number, restricting tomography to individual
photon-number subspaces~\cite{banchi_multiphoton_2018}. However, this matches
most photonic applications, which operate at fixed photon-number, including
linear optical operations and dual-rail qubit encodings for universal
computation. 
Crucially, our protocol applies to generic photonic input states including those produced by nonlinear processes. 
This means that a non-universal, linear-optical
device suffices to characterize highly complex states produced by a universal
photonic quantum computer such as that envisioned in the Knill--Laflamme--Milburn scheme \cite{knill_scheme_2001} or in fusion networks \cite{bartolucci_creation_2021}.
Finally, we
demonstrate experimentally the versatility of our protocol on a twelve-mode and
a twenty-four-mode linear optical quantum computer using up to four
single-photons. We show that many physical properties of a photonic state can be
efficiently and accurately predicted, with applications to benchmarking,
certification and learning among others, as pictured in
\autoref{fig:photonicShadows}.

\fi

\section{Photonic classical shadows}

\textbf{Background.} Classical shadows \cite{huang_predicting_2020} provide
lightweight classical descriptions of an unknown quantum state $\rho$, enabling
the estimation of a collection of properties of that state: linear properties,
such as expectation values of quantum observables, or nonlinear properties, such
as sub-system entropies. The protocol consists in three steps: 1) a data
acquisition stage, 2) a classical post-processing stage and 3) a property
prediction stage. 

The data-acquisition stage relies on the following measurement primitive: a
random unitary evolution $U$ drawn uniformly at random from a set $\Uc$ is applied to
the input state, followed by a projective measurement in the computational basis
$\Bc$ yielding $b$. A snapshot corresponds to the classical description
$U^\dagger \ketbra{b} U$. The process of repeatedly applying the measurement
primitive may be seen as a quantum channel 
\begin{equation}
    \Mc(\rho) = \e[\substack{U \sim \Uc \\ b \sim \Dc_U}]{U^\dagger \ketbra{b} U},
\end{equation}
where $\Dc_U$ denotes the probability distribution induced by the Born rule on
the post-evolution state $U\rho U^\dagger$. The amount of information that can
be recovered from a classical shadow is described by the \emph{visible space}
\cite{kirk_hardwareefficient_2022} of the channel. More precisely, the visible
space of $\Mc$ is the span of $U^\dagger\ketbra{b}U$ for every ${U \in \Uc}$ and
${b \in \Bc}$. Finally, by linearity of the trace and the expectation value,
linear functions $O$ in the visible space satisfy
\begin{equation}
    \expval{O}_{\rho} = \e[U, b]{\expval{O}_{\Mc^{-1}(U^\dagger\ketbra{b}U)}}.
\end{equation}

Our photonic classical shadow protocol is defined in the following context. The
space of $m$-mode photonic states is the infinite-dimensional Fock-Hilbert space
$\smash{\Fc_m = \bigoplus_{n \geq 0} \Hc_m^n}$, where $\Hc_m^n$ is the
$n$-photon subspace of dimension $\binom{n+m-1}{n}$ spanned by $m$-tuples of
nonnegative integers summing to $n$, which we denote $\Phi_m^n$.
Linear-optical transformations acting on $m$ bosonic modes are described by the
unitary group $\U(m)$. A transformation $U \in \U(m)$ acts linearly on the mode
operators and induces, via second quantization, a unitary representation
$\varphi_m$ of $\U(m)$ on $\Fc_m$ \cite{aniello_exploring_2006}. 
Linear-optical transformation preserve the total photon-number, $\varphi_m$
decomposes into irreducible representations (\emph{irreps} for short) living in
subspaces characterized by fixed photon-number.

\medskip

\textbf{Protocol.} We now describe how to perform shadow tomography of an
unknown $m$-mode Fock state $\rho$, i.e., how to produce a classical
representation $\hat\rho$ that behaves like $\rho$ on average. The only a priori
knowledge we assume is $m$, the number of modes. Our protocol only requires
passive linear-optical transformations and PNR measurement and applies to
generic photonic quantum states (not necessarily prepared by passive linear
optical transformations). In this setting, we show that the visible space is
delineated by the measurements that project onto a fixed photon-number subspace.

The data collection process is similar to the classical shadow protocol for
qubits. A snapshot is obtained by drawing a unitary matrix $U \sim \mu_H$ at
random from the Haar measure (which can easily be done numerically
\cite{mezzadri_how_2007}), applying the transformation ${\omega_m(U):\rho
\mapsto \varphi_m(U)\,\rho\,\varphi_m^\dagger(U)}$ by letting $\rho$ evolve
through the linear network described by $U$, and measuring in the computational
basis with PNR detection. Upon measuring a total of $n$ photons, the measurement
outcome is $\bs s \in \Phi_m^n$ with probability
$\tr{\omega_m(U)(\rho)\ketbra{\s}}$.

The protocol therefore produces a description of an arbitrary input state in the
form of a convex combination of unitarily evolved Fock basis states. Unlike
Clifford or Pauli shadows for qubits, efficiently computing basis state evolution
is not possible in linear optics in general. To this end, we call a classical
shadow of the photonic state $\rho$ the collection
\begin{equation}
    \S(\rho, N) = \{(U_1, \bm s_1), \cdots,(U_m, \bm s_m)\}.
\end{equation}

The average mapping (over the choice of unitary transformations and measurement
outcome) of the input state $\rho$ naturally takes the form of a quantum channel
$\Mc$. As unitary transformations preserve the number of photons and
photon-number detection projects onto one of the subspaces of $\Fc_m$, the
measurement channel is not tomographically complete. In particular, applying
$\Mc$ yields a block diagonal operator, i.e., we have
\begin{equation}\label{eq:channel}
    \Mc = \bigoplus_{n \geq 0}{\Mc^{(n)}} \circ \Pc^{(n)},
\end{equation}
where $\Pc^{(n)}$ projects onto the $n$-photon subspace $\Hc_m^n$. Using the
decomposition of $\omega_m$ into irreducible representations, we give in
\arxiv{\autoref{sec:channel}}{Supplementary Information} a closed form for the channel, whose derivation
follows from Schur's lemma.
It follows that the visible space $\Vs$ associated with the channel of
\autoref{eq:channel} also has a block-diagonal structure, more precisely ${\Vs =
\bigoplus_{n \geq 1} \Ls(\Hc_m^n)}$. 
In the remainder, all observables $O$ we consider belong to the visible space
$\Vs$. Using the self-adjointness property of the measurement channel, we find
that
\begin{equation}\label{eq:trORhShadows}
    \expval{O}_{\rho}=
    \e[\substack{U \sim \mu_{H} \\ \s \sim \Dc_U}]{\bra{\s}\varphi_{m}(U)\Mc^{-1}(O)\varphi_m^{\dagger}(U)\ket{\s}}\!,
\end{equation}
where $\Mc^{-1}$ denotes the inverse of $\Mc$ viewed as a linear map. Hence,
linear functions of an unknown quantum state $\rho$ can be learned by averaging
the elements of a photonic classical shadow $\textsf{S}(\rho, N)$. 

\medskip

\textbf{Sample complexity.} 
Next, we provide a rigorous upper bound on the number of samples for
guaranteeing a desired precision and a given confidence. 

To that end, akin to the qubit shadow-norm
\cite{huang_predicting_2020}, we introduce the \emph{photonic shadow-norm} and
denote it $\|\cdot\|_{\psn}$, see \arxiv{\autoref{sec:varianceBound}}{Supplementary Information} for a formal
definition. Importantly, we show that for all observables $O$ that are
polynomials in the creation and annihilation operators, the photonic shadow-norm
is bounded by a function of their degree as 
\begin{equation}
        \| O \|_{\psn}^2 =  O(\|O\|_{\infty}^2m^{3\deg(O)}).
\end{equation}
It follows that the sample complexity of the scheme is dictated by the degree of
the observables of interest. This is summarized by the following result \arxiv{(see the
\autoref{sec:estExpVal} for a formal generalization)}{}:

\begin{theorem}[Sample complexity of photonic classical
    shadows]\label{thm:informalSample} A collection of $T$ linear functions
    $\expval{O_1}_{\rho}, \dots, \expval{O_T}_{\rho}$ can be collectively
    estimated to within additive precision $\varepsilon$ with a classical shadow
    of size ${O(\max_t\|O_{t_0}\|_{\psn}^2\log (T) / \varepsilon^2)}$ with
    constant success probability, where $O_{t_0} = O_{t} -
    \frac{\tr{O_t}}{\dim{\Hc_m^n}}\Id$ is the traceless part of the observable.
\end{theorem}

We provide a detailed analysis in \arxiv{\autoref{sec:estExpVal}}{the Supplementary Information} showing that
learning low-degree observable features of photonic quantum state is remarkably
efficient using representation-theoretic tools. This analysis is supported by
experimental results exposed in \autoref{sec:experiments}. 

Notably, this generalises the locality condition for Pauli-based qubit classical
shadows \cite{huang_predicting_2020}, as low-weight Pauli observables in
dual-rail qubit encoding are low-degree monomials in the creation and
annihilation operators \cite{knill_scheme_2001}.

\medskip

\textbf{Time complexity.} Beyond their sample complexity, classical shadows
protocols also rely on heavy classical post-processing
\cite{huang_predicting_2020}. As such, it is important to ensure that the
corresponding running time is not prohibitive for practical applications. The
standard classical shadow protocol requires a post-processing that would not be
efficient is our setting, as it amounts to computing matrix permanents
\cite{aaronson_computational_2011}. Rather, we introduce in the \arxiv{\autoref{app:exactTechnique}}{Supplementary
Information} a novel technique for computing $\expval{O}_{\varphi_m(U)\ket{\bm
s}}$ from the pair $(U, \bm s)$ in time $\smash{\bigo{m^{\deg(O)}}}$, allowing
for efficient post-processing.

The computational complexity is summarized by the following result, which shows
in particular that our protocol is also computationally efficient for learning
low-degree observables:

\begin{theorem}[Computational complexity of photonic classical
    shadows]\label{thm:informalComp} Given a classical shadow $\S(\rho^{(n)},
    N)$ of an $m$-mode $n$-photon state, a linear function
    $\expval{O}_{\rho^{(n)}}$ can be computed exactly in time $\bigo{\poly(N, n,
    m^{\deg(O)})}$.
\end{theorem}

We note that while \autoref{thm:informalSample,thm:informalComp} provide pessimistic
upper bounds on the sample and time complexities of photonic classical shadows, in practice, the sample complexity and running time of the protocol can be
significantly lower, as we demonstrate experimentally in the next section. Additionally, note that these bound still ensure efficiency for learning observable features of constant degree.

\section{Experiments}\label{sec:experiments}

\begin{figure*}
    \centering

    \subfloat[Two-mode correlators.\label{fig:expTPC}]{
        \includegraphics[width=.9\textwidth]{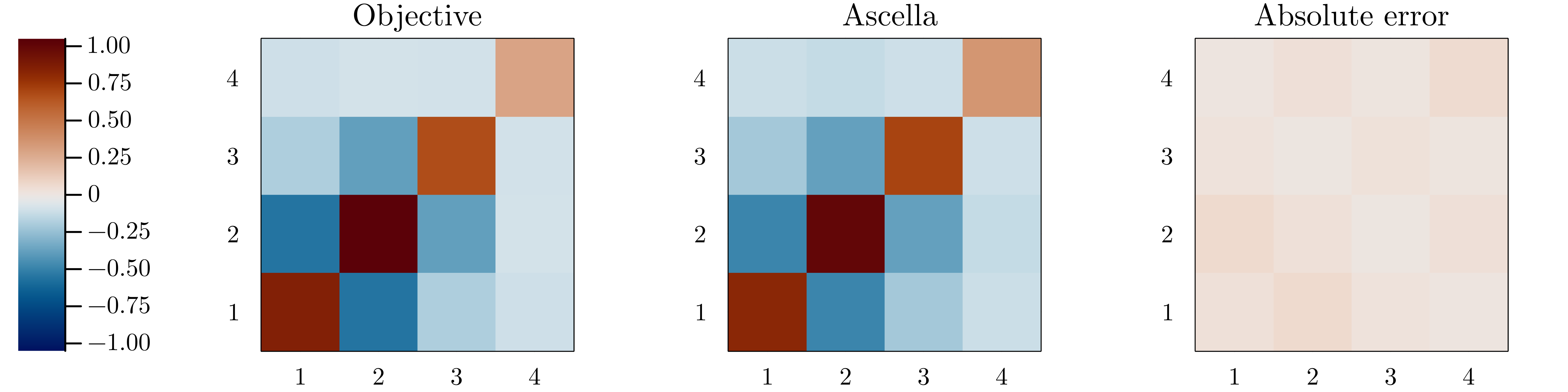}
    }

    \subfloat[Lie-algebraic linear optical invariants.\label{fig:expINV}]{
        \includegraphics[width=.9\textwidth]{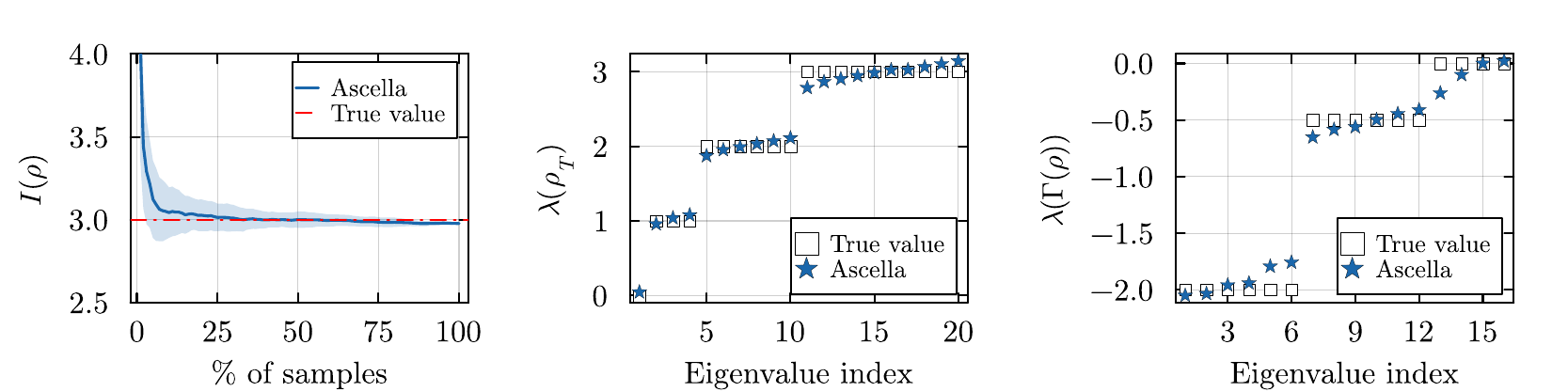}
    }

    \subfloat[Binned probability distributions.\label{fig:expBPB}]{
        \includegraphics[width=.9\textwidth]{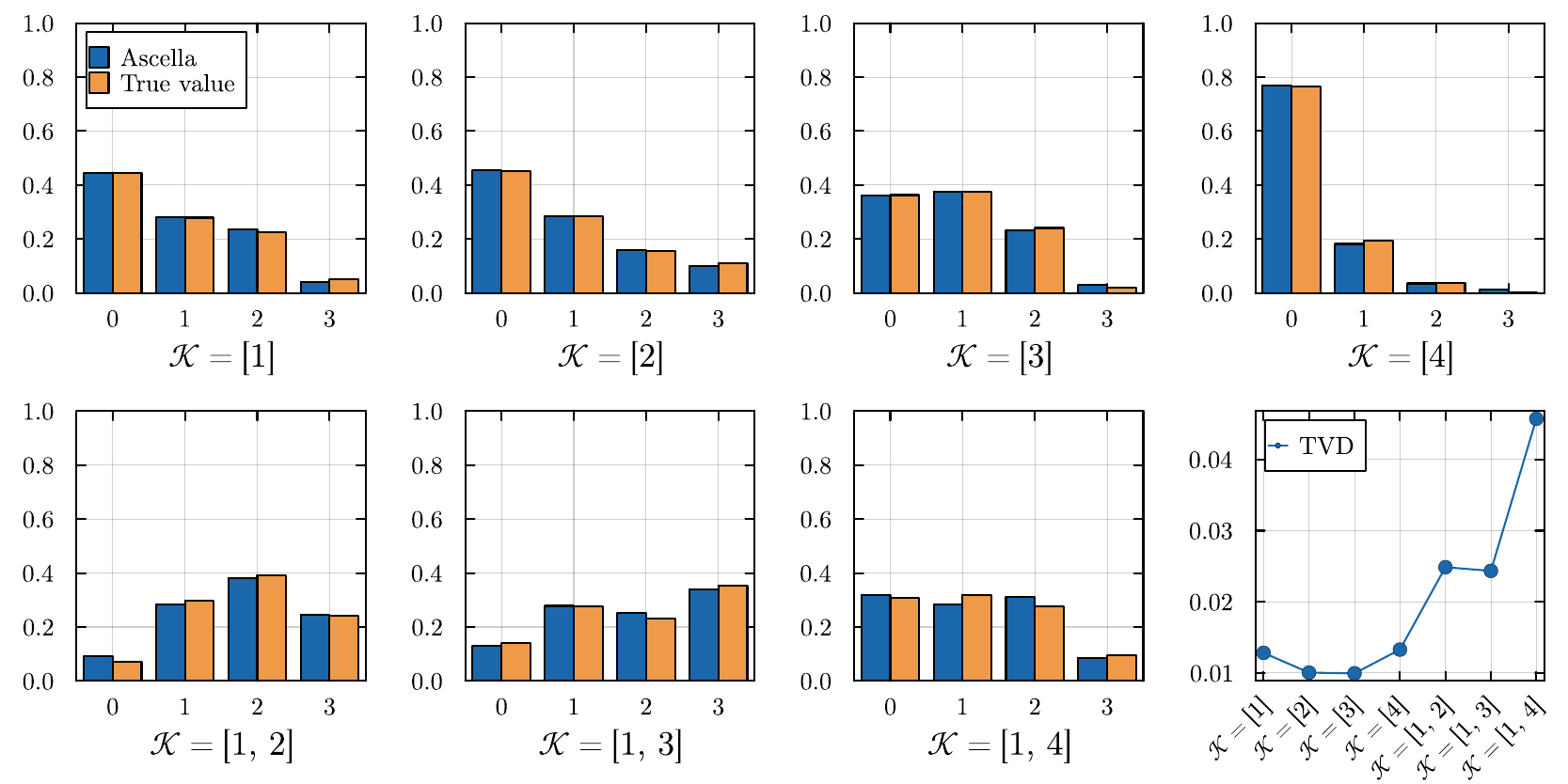}
    }

    \caption{
        \justifying
        \textbf{Experimental property estimation via classical shadows on
        \emph{Ascella}.} Experimental results of
        \autoref{exp:loc,exp:inv,exp:bpb} performed on \emph{Ascella} are
        documented in \autoref{fig:expTPC,fig:expINV,fig:expBPB}, respectively.
        The classical shadow of the input is the same for all three experiments.
        It consists of $N=1100$ randomly sampled $U_i$; for each of which an
        average of 19 3-photon samples were obtained. \autoref{fig:expTPC}:
        Comparison of the two-mode correlation matrices of the true one and that
        obtained from the shadow. (right) true correlation matrix (center)
        estimate obtained from the shadows (right) entry-wise absolute
        estimation error. \autoref{fig:expINV}: Numerical values of the
        Lie-algebraic invariants. (left) evolution of the estimation via
        bootstrapping, (center and right) estimated spectrum of $\rho_T$ and
        $\Gamma(\rho)$. \autoref{fig:expBPB}: Binned-distribution for all
        possible bipartitions of the modes. This way, each bipartition is
        defined by a subset of the modes and its complement. For instance, $\Kc
        = [1]$ represents the partition $(\{1\}, \{2, 3,4\})$. Certain bins are
        omitted due to symmetry. The horizontal axis corresponds to the possible
        occupations of the labelled bin.}
    \label{fig:experimentsAscella}
\end{figure*}

\begin{figure*}
    \subfloat[Ground energy estimation.\label{fig:expGEE}]{%
     \includegraphics[width = \linewidth]{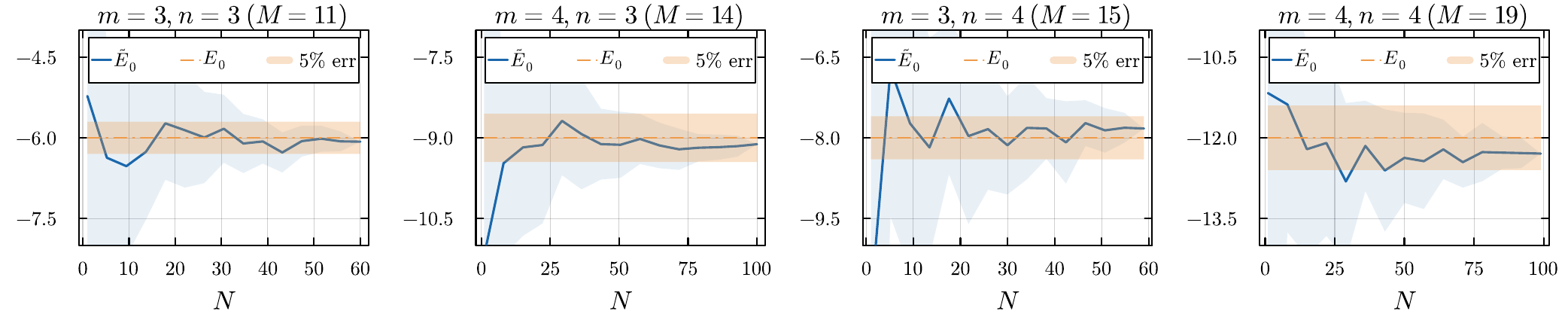}    
    }

    \subfloat[Learning Boson Sampling states.\label{fig:expBSL}]{%
        \includegraphics[width = .85\linewidth]{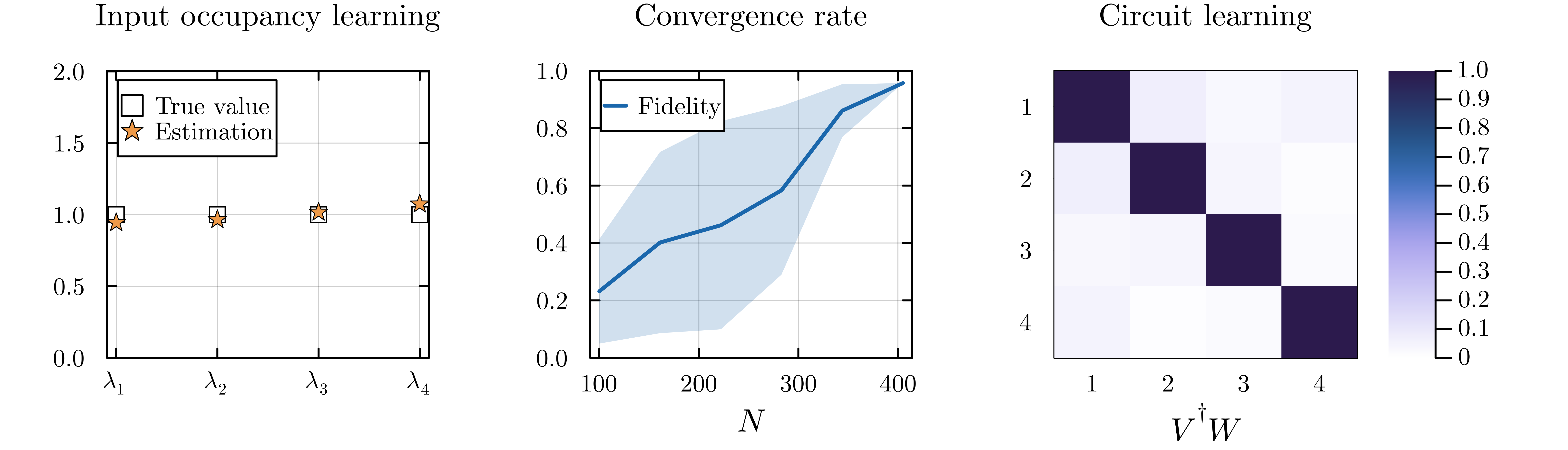}    
    }
   
    \caption{\justifying \textbf{Experiments on \emph{Belenos}.} Experimental
    results of \autoref{exp:gee,exp:bsl} performed on \emph{Belenos}.
    \autoref{fig:expGEE} We prepare the ground state of the $m = 3, 4$--site
    superfluid Bose-Hubbard Hamiltonian with $n = 3, 4$ photons, and estimate
    the associated energy using shadows of size $60$ and $100$ for $m=3$ and $4$
    respectively. The state preparation and measurement induce an overhead in
    the effective number of mode $M$ that are required, see Methods. Here, we
    use up to 19 modes of the photonic processor. \autoref{fig:expBSL} We
    consider the four-photon input state ${\ket{\bm n} = \ket{1, 1, 1, 1}}$ and
    a pre-selected randomly chosen unitary matrix $W$. We collect a classical
    shadow of $\varphi_4(W)\ket{1,1,1,1}$ of size 405, and plot (left) the
    estimate of $\ket{\n}$ obtained by rounding the eigenvalues of a matrix
    constructed from first moments, (center) the evolution of the computed
    fidelity of the state obtained from a shadow of size $N$, and (right)
    $V^\dagger W$, that shows how close the learned circuit $V$ is from the
    ground truth.}
    \label{fig:experimentBelenos}
\end{figure*}

\noindent We report experimental implementations of the classical shadow
protocol based on passive linear optical transformations and photon-number
resolving measurements for predicting properties of the input state. 

We use two quantum processing units (QPUs): \emph{Ascella}
\cite{maring_versatile_2024} (QPU A) and \emph{Belenos} (QPU B). The
experimental setups consist in multi-photon states
produced by a time-demultiplexed semiconductor quantum dot based on demand
single-photon source, evolved through a twelve- or twenty-four-mode photonic
circuit made of an array of fully-programmable Mach--Zehnder interferometers
implementing Clements \emph{et. al} universal design
\cite{clements_optimal_2016}. Photons are detected at the output of the chip by
superconducting nanowire single-photon detectors (SNSPDs). 

For both platforms, the protocol starts with a state preparation stage,
involving single-photons sources, linear-optical evolution and heralding to
prepare an initial state $\rho$, followed by random linear-optical
interferometer with photon-number measurements and a classical data processing
stage (see \autoref{fig:photonicShadows}). In the experiments, we achieve
photon-number resolving measurements via pseudo-PNR data processing (see
\arxiv{\autoref{sec:ppnr}}{Supplementary Information}). Chip reconfiguration is costly, yet it remains necessary during the data-collection phase. Moreover, because of the number of photons used in the experiment, a large number of samples can be collected. For these reasons, we adopt a multi-shot protocol, in which several samples are taken for the same unitary. The experimental setups are detailed further in the Methods. 

We demonstrate the versatility of our photonic classical shadows protocol with
five applications: evaluating low-order correlation functions
(\autoref{exp:loc}), evaluating Lie-algebraic linear optical invariants
(\autoref{exp:inv}), estimating binned-probability distributions
(\autoref{exp:bpb}),  estimating the energy of the Bose-Hubbard Hamiltonian
(\autoref{exp:gee}), and learning Boson Sampling states (\autoref{exp:bsl}). For
each experiment, we briefly explain the theoretical background behind it and
report the experimental results in
\autoref{fig:experimentsAscella,fig:experimentBelenos} respectively. Finally, we
discuss the scaling of the approaches. In all cases, we emphasise that the
experimental protocols have the same structure---state preparation, followed by
our photonic shadow protocol---while differing at the classical post-processing
stage.

\subsection{QPU A: \emph{Ascella}} \label{sec::qpuA} We report in
\autoref{fig:experimentsAscella} the results of the experimental implementation
of the classical shadow protocol for \autoref{exp:loc,exp:inv,exp:bpb}. We
employ a twelve-mode integrated universal interferometer, and we utilize the unused modes to
genuinely emulate photon-number resolving measurements with threshold detectors
on chip.   
The input state is obtained from a random preselected evolution $U_{\text{prep}}$ applied to $\ket{1,1,1,0}$.
Once $U_{\text{prep}}$ is fixed, to proceed to the
data-collection phase, we further sample $N=1100$ random unitary evolutions $U_i$. Crucially, we demonstrate the resourcefulness of the input
state's shadow by computing all the results from a unique classical shadow,
i.e., the data collection phase is ran once, and the classical shadow thus
constructed is used in all three applications. 

\medskip

\begin{application}[Low-order correlators]\label{exp:loc} Low-order quantum
correlators, whose purpose are to quantify the correlations of detection events,
comprise a large amount of information about many-body states. They are at the
core of numbers of applications, including bosonic quantum walks
\cite{mayer_counting_2011}, the study of statistical properties of many-particle
interferences for certifying Boson Sampling experiments
\cite{walschaers_statistical_2016,giordani_experimental_2018,flamini_validating_2020},
or the measurement of many-particle entanglement \cite{brunner_manybody_2022}.
One of the simplest mode correlators is the two-mode correlator
$
    C_{i,j} = \expval*{\hat n_i\hat n_j} - \expval*{\hat n_i}\!\!\expval*{\hat n_j},
$
where $\hat n_k=\hat a_k^\dagger \hat a_k$ is the \text{$k$-th} mode number
operator. Higher-order correlators can be obtained from higher-degree products
of the mode number operators straightforwardly.

For the two-mode correlators, we report in \autoref{fig:expTPC} an estimation of
the correlation matrix $(C_{i,j})_{1\leq i, j, \leq m}$ with an entry-wise
average absolute error of $0.03$.
\end{application}

\medskip

\begin{application}[Lie-algebraic linear optical invariants]\label{exp:inv}
Linear optical invariants are quantities (or, properties of a state) that remain
constant under linear optical evolution
\cite{migdal_multiphoton_2014,parellada_lie_2024,draux_invariants_2025}. On the
theoretical side, such invariants can be used to derive no-go theorems for Fock
state to Fock state transformations \cite{parellada_nogo_2023}. Moreover, the
experimental observation of Lie-algebraic invariant, as achieved in
\cite{rodari_observation_2025,yang_experimental_2025}, plays a crucial role for
the characterization of integrated photonic processors. For instance, they allow
for the detection of ongoing nonlinear effects. We demonstrate here that the
classical shadow protocol can be used to measure linear-optical invariants.

Let $\{O_i\}_{i=1}^{m^2}$ be a basis of the space of linear optical Hamiltonians
over $m$ modes and $\rho$ denote an arbitrary $m$-mode Fock state (details are
given in Supplementary Information). We experimentally evaluate three
linear-optical invariants. The first (and arguably the simplest) is $I(\rho) =
\sum_{i=1}^{m^2}\expval{O_i}^2.$ The second is the spectrum of $\rho_T =
\sum_{i=1}^{m^2} \expval{O_i} O_i$, which we write $\lambda(\rho_T)$. The third
is the spectrum of the covariance matrix $\Gamma(\rho)$ with entries
$
    \lpr\Gamma(\rho)\rpr_{i,j} 
        = \expval*{O_i}\!\!\!\expval*{O_j} - \frac{1}{2}\expval*{O_iO_j - O_jO_i},
$
which we write $\lambda(\Gamma(\rho))$. Measuring the latter directly is
challenging experimentally as it requires measuring degree-two observables,
while our classical shadow protocol efficiently achieves this task.

We successfully demonstrate in \autoref{fig:expINV} that their values can be
recovered efficiently. The experimental results are compatible with the
theoretical results, despite a slightly larger deviation than Ref.\
\cite{rodari_observation_2025}. We obtain $I(\rho) = 2.98$ compared to a
theoretical value of $I(\ket{1110}) = 3$ giving a relative error of $0.67
\%$. 
\end{application}

\medskip

\begin{application}[Binned probabilities]\label{exp:bpb} Binned distributions,
i.e., output distribution of photons distributing in partitions of the output
modes, are yet another tool to validate the correct functioning of linear
optical devices \cite{seron_efficient_2024} which is experimentally feasible
\cite{anguita_experimental_2025}. Probing the total variation distance between
an actual experiment binned-distribution and that of the expected distribution
allows to one to detect fallacious boson samplers.

The $m$ output modes are clustered into $K$ bins ${\Kc = \{\Kc_1, \cdots,
\Kc_K\}}$,
and binned-probabilities are given by the multidimensional discrete Fourier
transform of the characteristic function of bin-number operators.
Importantly, binned-distribution can be approximately sampled from classically
within small total variation distance if the number of bins $K$ is constant
\cite{seron_efficient_2024}. In this setting, the computation of the expected
distribution can be performed classically and used to validate the experimental
setup at hand.

We show in \autoref{fig:expBSL} that the binned-distribution can be recovered within an average TVD of
$0.020$ across all bipartitions of the modes, and an average TVD of $0.031$ over
all two-mode binning which is of the order of the experimental results reported
in Ref.\ \cite{anguita_experimental_2025}.
\end{application}

\subsection{QPU B: \emph{Belenos}}

We report in \autoref{fig:experimentBelenos} the experimental results obtained
for \autoref{exp:gee,exp:bsl} on \emph{Belenos}.

\medskip

\begin{application}[Hamiltonian energy estimation]\label{exp:gee} The
    Bose-Hubbard Hamiltonian is parametrized by a hopping strength ${J>0}$ and interaction strength
    $U$ and chemical potential $\mu$, that describes interacting bosonic
    particles in a lattice.
    In the
    superfluid regime ($U = 0$), the ground state can be produced from a Fock
    basis state acted on by linear optical transformation
    \cite{yalouz_encoding_2021}. 
    
    For the experiment, we consider a hopping strength $J = 1$, and
    set $\mu = 0$ and $U=0$. We use a classical shadow of the ground state that
    we prepare in order to estimate its ground energy. The state preparation
    step consists in producing the state $\ket{3}$ (resp. $\ket{4}$) which we achieve by sending three (resp. four) single photons
    in different modes, evolving them in a
    Fourier interferometer (which generalises the Hong--Ou--Mandel effect) and
    post-selecting on measuring the vacuum in all but the last mode. 
    Our experimental results demonstrate that classical shadows of small size
    suffice to estimate the ground energy within $5\%$. 
    Additionally, we numerically investigate the interacting regime $(U > 0)$
    where the ground state cannot be prepared by linear optical quantum dynamics
    alone in \arxiv{\autoref{sec:BoseHubbard}}{the Supplementary Information}. We show that the classical shadow
    protocol successfully estimate the ground energy of the Bose-Hubbard
    Hamiltonian in this regime.
\end{application}

\medskip

\begin{application}[Learning Boson sampling states]\label{exp:bsl} Fock basis
    states acted on by passive linear optical transformations are at the core of
    many applications of near-term photonic quantum computing, including quantum
    simulation \cite{sturges_quantum_2021} or photonic quantum machine learning
    \cite{gan_fock_2022,yin_experimental_2025,salavrakos_errormitigated_2025},
    and sampling tasks conjectured to be classically hard such as Boson Sampling
    \cite{aaronson_computational_2011}. We refer to such states as Boson
    Sampling states. Interestingly, it was recently shown that such states can
    be learned efficiently \cite{iosue_higher_2025}. More precisely, given
    access to copies of $\ket{\psi} = \varphi_m(W)\ket{\bm n}$ for an unknown $W
    \in \U(m)$ and input occupancy $\bm n$, it is possible to recover a $\bm t$
    and construct a $V$ satisfying $\|V - W\Phi P\| \leq \varepsilon$, for some
    irrelevant diagonal unitary matrix $\Phi$ and a permutation matrix $P$, such
    that $\abs{\mel{\bm n}{\varphi_m^\dagger(W)\varphi_m(V\Phi P)}{\bm t}} \geq 1-\delta$,
    with some $\delta$ dependent on $\varepsilon$. Importantly, $\bm t$ and $V$
    can be learned from overlaps with non-Hermitian operators. Luckily,
    our classical shadow protocol works for estimating the overlap with any
    linear operator and not only quantum observables.
    We provide the first experimental implementation of a Boson Sampling state
    learning task via our classical shadow protocol, following the strategy
    outlined in \cite{iosue_higher_2025}. We prepare the input state $\ket{\bm
    n} = \ket{1,1,1,1}$ and use $W = U_{\text{prep}}$ as in \autoref{sec::qpuA}
    as unknown unitary evolution. Then we proceed to the data-collection phase. Here, we demonstrate that the input Fock state
    is exactly recovered, and we learn a unitary matrix $V$ such that $\|V - W\|
    = 0.12$, up to the irrelevant phases and permutation. We find that the fidelity between $\varphi_m(W)\ket{\bm n}$ and $\varphi_m(V)\ket{\bm t}$
    is 0.96.
\end{application}

\section{Discussions}

In this work, we have introduced a classical shadow protocol tailored to passive
linear optics with photon-number resolving (PNR) measurements and provided an
extensive experimental demonstration of the scheme. We demonstrated that meaningful and
scalable property learning is possible, thereby extending the realm of classical
shadows to experimentally and publicly accessible photonic platforms. 

We have shown that the scheme is highly versatile as demonstrated experimentally in five types of applications. In all these applications, we found that classical shadows provides a scheme for  predicting properties of photonic quantum states up to small deviations. Several factors that we identify explain
these small deviations. The first factor is the total variation distance of the
sampled distribution with respect to the true distribution at the
data-collection level due to pseudo-PNR. We explain in \arxiv{\autoref{sec:ppnr}}{the Supplementary Information}
how this can be mitigated by taking more samples. Second, the deviations
originate from typical experimental imperfections of the apparatus, whose main
representative are partial photon distinguishability, chip characterization, and multi-photon emission (see Methods for the details). 

The input state generation rate arguably constitutes the primary limitation to
experimental realizations. The reason is that, as of today, spatially encoded
photonic states are produced from a single time-demultiplexed single-photon
source. Considerable research efforts are directed towards using multiple
single-photon sources  
\cite{pont_indistinguishability_2024}, as this is an important building-block of
photonic fault-tolerant architectures. Along with experimental demonstrations of
measurement-based state preparation
\cite{prevedel_highspeed_2007,thiele_cryogenic_2024}, this suggests that
highly-complex states will shortly be within reach. 

On the theoretical front, while we have focused on estimation of linear properties, we expect our protocol to remain efficient for nonlinear functions of quantum states of constant degree in the creation and annihilation operators via U-statistics \cite{huang_predicting_2020}.

Beyond predicting physical properties, another promising venue for classical
shadows obtained from a quantum computer are machine learning applications
\cite{huang_power_2021,jerbi_shadows_2024,cho_machine_2024}. For instance,
classical shadows can be used to construct a classical representation of the
output of a parametrized circuit that is then employed during a fully classical
classification phase. Photonic variational quantum circuits have recently
attracted a lot of attention, both theoretically
\cite{gan_fock_2022,pappalardo_photonic_2025} and experimentally
\cite{salavrakos_errormitigated_2025,hoch_quantum_2025,yin_experimental_2025}. 
We anticipate that our classical shadow protocol will help scaling up these experiments thanks to its ease of use.

Additionally, it will be crucial to study the effect of noise on the protocol at
larger scale. Photonic architectures are subject to characteristic types of
noise, namely, photon losses and photon indistinguishability. On this basis,
incorporating error mitigation techniques tailored to linear optics
\cite{mills_mitigating_2024,taylor_quantum_2024} can be considered.

The development of integrated photonic technologies enables the fabrication of
large-scale and publicly available machines. On the other hand, the framework of classical shadows
is a powerful tool that benefits from an ever-growing number of refinements \cite{jnane_quantum_2024,hu_demonstration_2025,chen_robust_2021,west_real_2025,koh_classical_2022,zhao_fermionic_2021,wan_matchgate_2023,sauvage_classical_2024,ippoliti_classical_2024,low_classical_2024,gandhari_precision_2024,becker_classical_2024,nguyen_optimizing_2022,helsen_thrifty_2023,zhou_performance_2023,chen_adaptivity_2024,fawzi_learning_2024}. In
this context, our work naturally opens the door to a wealth of new applications
for photonic quantum computers via classical shadows \cite{hadfield_measurements_2022,mcnulty_estimating_2023,dutt_practical_2023,huang_provably_2022,huang_power_2021,jerbi_shadows_2024,haug_quantum_2023}.

\medskip

\begin{acknowledgments}
\noindent The authors acknowledge Hela Mhiri, Emilio Annoni, Ariane Soret,
Leonardo Banchi, Micha\l{} Oszmaniec, Markus Heinrich, Tommaso Francalanci, Fabio Sciarrino and Elham Kashefi for
fruitful discussions. This work has been co-funded by the European Commission as
part of the EIC accelerator program under the grant agreement 190188855 for
SEPOQC project, by the Horizon-CL4 program under the grant agreement 101135288
for EPIQUE project, and by the CIFRE grant n$\degree$2023/1746.
\end{acknowledgments}

\newpage

\bibliographystyle{apsrev4-2}

\bibliography{bibliography,urls}

\section*{Methods}

\subsection*{\autoref{exp:loc,exp:inv,exp:bpb}}

\paragraph{Hardware.} \emph{Ascella} is a twelve-mode six-photon photonic
quantum processing unit. The detailed specifications can be found in
\cite{maring_versatile_2024}. It is controlled by the Perceval Python library
\cite{heurtel_perceval_2023} and we have developed an open-source Julia package provided here
\cite{github_loshadows} for photonic classical shadows, designed to work seamlessly with Perceval.

\paragraph{State preparation.} The input state for the experiment is obtained by
preparing the 3-photon 4-mode Fock state $\ket{1,1,1,0}$ and letting it evolve
through a fixed state preparation circuit $U_{\text{prep}}$ (chosen at random).

\paragraph{Data-collection phase.}
For each choice of a random matrix $U_i$, we perform 2000 shots (a \emph{shot} is
defined as an event with at least one click), and perform pseudo-PNR measurement
of the output state, yielding an average of $19$ 3-photon events. As pseudo-PNR
induces a bias in the resulting distribution, we perform the mitigation step
described in \arxiv{\autoref{sec:ppnr}}{the Supplementary Information}. This requires a sample overhead to achieve perfect
mitigation, which we illustrate in \autoref{fig:TVDPPNRExperiment} for measuring
the output distribution of $U_{\text{prep}}$. We take advantage of this to
perform multi-shot shadow \cite{zhou_performance_2023} at a low cost. 

\paragraph{Classical post-processing phase.}
The moderate dimension of the Hilbert space allows us to compute the density
matrix of each snapshot by averaging over all output samples. More precisely,
$\dim \Hc_4^3 = 6$, i.e., each snapshot requires one to compute thirty-six
$3\times 3$ matrix permanents using Ryser's algorithm
\cite{ryser_combinatorial_1973}. Therefore, the classical shadow we obtain is a
set of 1100 density matrices. 

\paragraph{Property prediction phase}
Due to the small size of the classical shadow $\S(\rho, 1100)$, expectation
values of observables are obtained by computing the average
\begin{equation}
    \tr{O\rho} \approx \frac{1}{1100}\sum_{\hat \rho \in \S(\rho, 1100)} \tr{O\hat\rho}.  
\end{equation} 
Storing the matrix representation of the observables and performing the matrix
multiplication $O\hat\rho$ is not an issue for such an experiment, but this begs
the question of having an efficient representation of quantum observables when
both the number of modes and photons grow. We address this issue for \autoref{exp:gee,exp:bsl} by showing that expectation values of observables can be obtained from the polynomial representation of the observables.

\begin{figure*}[!t]
    \centering

    \subfloat[ \justifying Configuration of the chip for experiment reported in
        \autoref{fig:experimentsAscella}. The Pseudo-PNR phase consists in plugging
        each output mode of the evolution step to the top mode of a 3-mode
        Fourier interferometer ($\Fc_3$)---maximising the scattering probability
        of the input state into different modes---whose output is measured with
        thresholds SNSPD to yield $\bm s = (s_1, s_2, s_3, s_4)$ (see
        \autoref{sec:ppnr} for details).\label{fig:ascella}]{
        \includegraphics[width=0.45\linewidth]{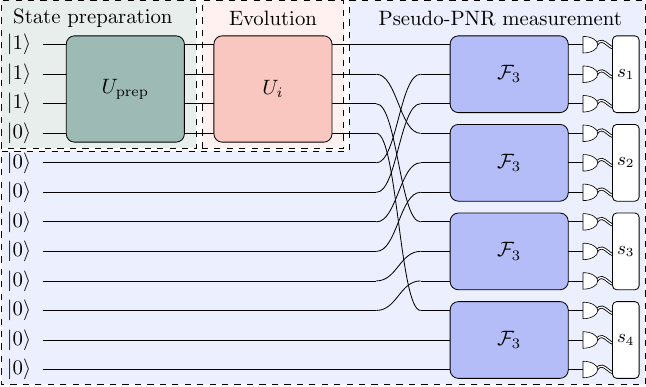}
    }
    \hfill
    \subfloat[ \justifying Effect of the Pseudo-PNR mitigation on the total
        variation distance (TVD) to the distribution obtained with perfect PNR
        measurement for the output distribution associated with the four-mode unitary matrix $U_{\text{prep}}$. The TVD of
        non-mitigated distribution reaches a plateau at around $0.12$. On the
        contrary, the TVD of the mitigated distribution eventually attains zero
        provided enough samples are available.\label{fig:TVDPPNRExperiment}]{
        \includegraphics[width=0.45\linewidth]{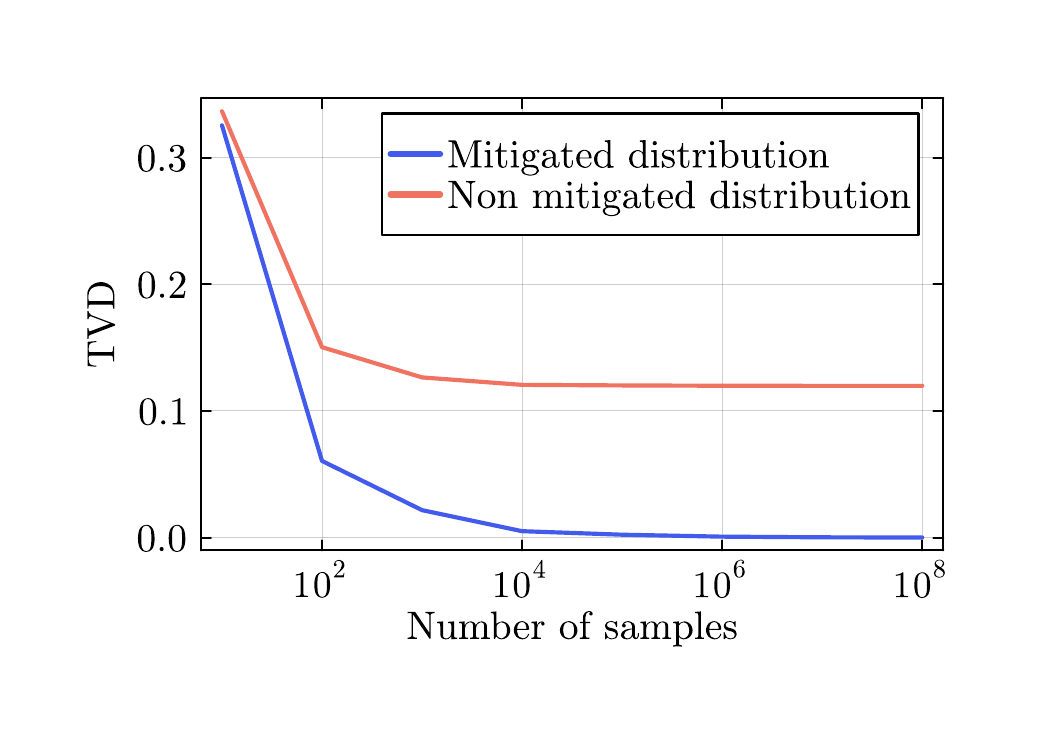}
    }

    \caption{
        \textbf{Pseudo photon-number resolving measurement.}
        Illustration of the pseudo-PNR circuit and the effect of mitigating
        the output distribution on the total variation distance.
    }
    \label{fig:ppnr}
\end{figure*}

\paragraph{Matrix-form of the channel.}
A crucial step for the photonic shadow protocol is the evaluation of the quantum
channel $\Ms^{-1}$. Quantum channels acting on $d$-dimensional operators are
represented by $d^2\times d^2$ matrices, and applying the channel reduces to
performing the matrix-vector multiplication, the vector being the vectorised
form of that said operator. Therefore, in matrix form, $\Ms^{(n)^{-1}}$ is
described by a $\dim\Hc_m^{n^2}\times \dim\Hc_m^{n^2}$ matrix, which, even for
small-scale experiment, is unmanageable on a modern personal computer.

As described in \arxiv{\autoref{sec:closedChannelForm}}{the Supplementary Information}, the channel matrix is populated with
sums of Clebsch-Gordan coefficients. We observe that the scaling of the number
of nonzero Clebsch-Gordan coefficients is moderated compared to $\dim
\Hc_m^{n^4}$, which makes the projectors $\Pi_{\lambda_k}$ sparse
symmetric matrices of order $\dim \Hc_m^{n^2}$. The upper triangular part is
stored in memory in Compressed Sparse Column (CSC) format, and the matrix-vector
multiplication is performed using
\begin{equation}
  Ax = A^{\triangle}x + (x^TA^{\triangle})^T - \mdiag(A)x,
\end{equation}
where $A^{\triangle}$ is the upper-triangular part of $A$ and $\mdiag(A)$ its
diagonal. We provide a Julia implementation of the computation of the matrix
form of the channel\arxiv{ (see in particular \autoref{eq:channelMatrixForm})}{}, together
with an interface for using the Perceval \cite{heurtel_perceval_2023} (Python)
library. 

The calculation of the matrix-form of the channel is a computationally demanding
task; we show in the Supplementary Information how the explicit calculation can
be avoided. We use this refined procedure to avoid the computation of the
matrix-form of the channel for \autoref{exp:gee,exp:bsl}.

\subsection*{\autoref{exp:gee,exp:bsl}}

\paragraph{Hardware.} \emph{Belenos} is a
twenty-four-mode cloud accessible photonic quantum processing unit
\cite{quandela_cloud}.
Temporal-to-spatial multiplexing is implemented using a resonantly driven Pockels-cell-based demultiplexer (overall transmission exceeding 85\%). The system includes custom polarization control modules to prepare and stabilize the input states. The photonic integrated circuit used for light manipulation is a 24-mode silica glass interferometer \cite{barzaghi_24mode_2025} with 552 beam splitters integrated as directional couplers and 676 reconfigurable thermo-optic phase shifters driven by custom low-noise current electronics. The circuit is calibrated using machine learning \cite{fyrillas_scalable_2024}. The processor implements a self-calibration strategy, which automatically recalibrates the photonic integrated circuit using periodically acquired data samples, to maintain a high-accuracy of control over the long term. The detection system consists of 16 polarization-resolved photon-number-resolving (P-PNR) detectors, each based on 28 interleaved pixels, enabling reliable discrimination of up to two simultaneous photons per spatial mode via multi-trigger events and 8 superconducting nanowire single-photon detectors (SNSPD). The single-photon detection probability is measured to be 90–92\%, while the two-photon detection probability reaches $\sim$78\%.
Similar to \emph{Ascella}, it is controlled by the Perceval Python library \cite{heurtel_perceval_2023}.

\paragraph{State preparation for \autoref{exp:gee}.} The ground state
preparation of the Bose-Hubbard Hamiltonian can be done by acting with a passive
linear optical circuit on a Fock basis state. In practice, it consists in
preparing the resource input state $\ket{n}$ from $n$ single-photons in
different modes, which is achieved using a quantum Fourier interferometer and
post-selecting on measuring $0$ photons in the first $n-1$ modes. The state thus
obtained is fed into a cascade of parametrized beam-splitters that prepare the
ground state. We show the overall circuit, including the measurement step, in
\arxiv{\autoref{fig:BelenosForBH}}{the Supplementary Information}. To showcase the capability of our protocol to learn properties of states beyond linear-optics, we numerically try and learn the energy of the ground state of the Bose-Hubbard Hamiltonian in the regime $U>0$, that can provably not be implemented with linear-optics only \cite{yalouz_encoding_2021}.

\paragraph{Classical post-processing phase.}
The classical post-processing phase for \autoref{exp:gee,exp:bsl} does not rely
on computing the matrix form of the channel to find the projections. Rather, the
decomposition in each subspace is obtained via a procedure described in
\arxiv{\autoref{pr:individualProj}}{the Supplementary Information}. Moreover, we also show \arxiv{in \autoref{pr:degreeKObs}}{} that expectation values of low-degree observables
can then efficiently computed. The algorithm we introduce for exactly computing expectation values of observables of degree $d$ exploits the fact that if $d = O(1)$, they can be described by a polynomial number of $d \times d$ matrix permanents, each of which can be computed efficiently exactly with Ryser's algorithm \cite{ryser_combinatorial_1973}.

\arxiv{
\begin{figure*}[!ht]
    \includegraphics{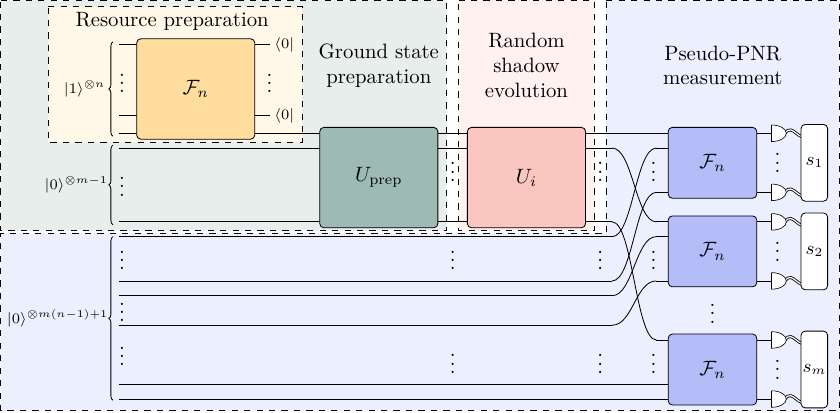}
    \caption{\justifying \textbf{Illustration of the circuit implemented on
    Belenos}. The resource preparation step consists in preparing the Fock state
    $\ket{n}$ from $n$ single-photons via a QFT interferometer and
    post-selecting on the first $n-1$ modes being occupied by the vacuum.
    $U_{\text{prep}}$ implements the ground state via a cascade of
    beam-splitters. The remainder of
    the circuit, which consists of a random evolution followed by pseudo-PNR
    measurement, mimics that the experimental runs on Ascella.}
    \label{fig:BelenosForBH}
\end{figure*}}{}

\subsection*{Background on classical shadows}
\noindent In this section, we review the main concepts behind the classical
shadow framework \cite{huang_predicting_2020}. Let $\rho$ be a density matrix on a $d$-dimensional Hilbert space $\Hc$ and denote by $\Ls(\Hc)$ the
space of linear operators on $\Hc$. Classical shadows provide lightweight
classical descriptions of an unknown quantum state $\rho$, enabling the
estimation of a collection of properties of that state: linear properties, such as
expectation values of quantum observables, or nonlinear properties, such as sub-system entropies. The
protocol consists in three steps: 1) a data acquisition phase, 2) a
classical post-processing phase and 3) a property prediction phase. More
formally, given a set $\Uc$ of unitary matrices equipped with a probability
measure $\mu$, the copies of $\rho$ are evolved by drawing independently at
random $U \sim_{\mu} \Uc$, and measuring in a basis $\Bc$ defined by a set of
projectors $\{\ketbra{b}\}_{b \in \Bc}$. This procedure is equivalent to
the application of a quantum channel $\Mc$ to $\rho$, which for instance takes the form of
a global depolarising channel for random Clifford measurements and of a tensor product
of local depolarising channels for random Pauli measurements
\cite{huang_predicting_2020}. In particular, the classical shadow protocol
ensures that for a tomographically complete set of measurement, 
\begin{equation}
    \rho = \e[\substack{U \sim \mu \\ b \sim \Dc_U}]{\Ms^{-1}(U^\dagger \ketbra{b} U)},
\end{equation}
where $\Dc_U$ is the output measurement distribution after $\rho$ evolved under $U$. 
More precisely, this channel is determined by the distribution $\mu$ and
the measurement basis $\Bc$. Its inverse is in general not physical as it might
not be completely positive, but can it be applied to the measurement outcome
during the classical post-processing phase.

A powerful tool for quantifying the amount of information that can be recovered
from a choice of $\mu$ and $\Bc$ is the \emph{visible space} formalism
\cite{kirk_hardwareefficient_2022}. The visible space associated to $\Uc$, $\mu$
and $\Bc$, $\VS(\Uc, \mu, \Bc)$ (the image of $\Ms$) is the span of $U^\dagger\ketbra{b}U$ for every
${U \in \Uc}$ and ${b \in \Bc}$. The triplet $(\Uc, \mu, \Bc)$
is said to be informationally complete (or tomographically complete) if and only
if ${\VS(\Uc,\mu, \Bc) = \Ls(\Hc)}$.

For a given unitary matrix $U \in \Uc$ and a measurement outcome $b \in \Bc$, we
denote by $\hat \rho_{(U,b)} = \Ms^{-1}(U^\dagger \ketbra{b} U)$ a
\emph{snapshot} of $\rho$---we might drop the subscript and use only $\hat \rho$
when the explicit mention of $(U,b)$ is irrelevant, and we call the collection
of $N$ snapshots
\begin{equation}
    \textsf{S}(\rho, N) = \{\hat \rho_{(U_1,b_1)}, \cdots, \hat \rho_{(U_N,b_N)}\}
\end{equation}
a \emph{classical shadow} of $\rho$ of size $N$. We remind that this differs in our protocol, where we define the \emph{classical shadow} of a state as the list of pairs $(U_i, \s_i)$, since evolving $\ket{\s_i}$ through the linear optical interferometer $U_i^\dagger$ is in general classically inefficient in linear optics. 

By linearity of the trace and the expectation value, linear functions of
the input state satisfy
\begin{equation}\label{eq:trORhoHat}
    \expval{O}_{\rho} = \e[U, b]{\expval{O}_{\hat \rho_{(U,b)}}}.
\end{equation}
The goal, then, simply boils down to computing an empirical estimate of $\smash{\mathbb{E}_{U, b}[\expval{O}_{\hat \rho_{(U,b)}}}]$ from the classical shadow $\textsf{S}(\rho, N)$ and bounding the
approximation error coming from the finite statistics, as in a standard Monte-Carlo scheme. To estimate
the expectation values of a set of $T$ observables $\{O_t\}_{t=1}^T$ to within
additive error $\varepsilon$ with constant failure probability, one only needs
\begin{equation}
    N = \bigo{\log(T)\varepsilon^{-2} \max_t \mathrm{Var}(\expval{O_t}_{\hat \rho})}
\end{equation} 
copies of the unknown state $\rho$.

\onecolumngrid
\appendix

\newpage

\arxiv{\newpage}{}

\begin{center}
 {\Large \textbf{Supplementary Information} }   
\end{center}
\noindent \large \textbf{Table of Contents}

\normalsize
\noindent\makebox[\linewidth]{\rule{\linewidth}{0.4pt}}
\begingroup
\makeatletter
\par 
\renewcommand{\contentsname}{Supplementary Information Contents}
\@starttoc{atoc}
\makeatother
\endgroup

\noindent\makebox[\linewidth]{\rule{\linewidth}{0.4pt}}

\setcounter{figure}{0}

\section{Background}

We first introduce the notations we use hereafter, then recall the classical
shadow protocol and provide the necessary tools of representation theory in
linear optics.

\subsection{Notations}

\paragraph{Linear algebra.}
Let $\Hc$ be a complex Hilbert space and $B$ and orthonormal basis of $\Hc$. We
denote by $\Bs(\Hc)$ the set of bounded operators on $\Hc$, and by
$\Ls(\Hc_m^n)$ the set of linear operators on $\Hc$. $\Ls(\Hc)$ is itself a
Hilbert space equipped with the inner product $\braket{X}{Y} = \tr{X^\dagger
Y}$. The vector space $\Ls(\Ls(\Hc))$ is the space of so-called
\emph{superoperators} on $\Hc$. Given a linear operator $X \in \Ls(\Hc)$, we
define its \emph{vectorised} form by the operator 
\begin{equation}
    \ket{A} = \sum_{\x, \y \in B} \mel*{\x}{X}{\y} \ket*{\x, \y}
\end{equation}
carried by the tensor product space $\Hc \otimes \Hc$. This operation is indeed
compatible with the inner product over $\Ls(\Ls(\Hc)) \simeq \Ls(\Hc \otimes
\Hc)$.

\paragraph{Multi-indices.}
We use bold fonts to denote vectors $\s = (s_1, \cdots, s_m) \in \N^m$ for $m\in
N^*$. We define the following operations on multi-indices:
\begin{equation}
    \begin{aligned}
        |\s| & = s_1 + \cdots + s_m, \\
        \s! & = s_1!\cdots s_m!, \\
        \s \pm \t & = (s_1 \pm t_1, \cdots, s_m \pm t_m), \\
        \ket{\s} & = \ket{s_1, \cdots, s_m}. \\
    \end{aligned}
\end{equation}
Finally, we denote the set of all $m$-tuples of natural numbers summing to $n$
\begin{equation}
    \Phi_{m}^n = \{\s\ | \ |\s| = n\} \subset \N^m.
\end{equation}

\subsection{Classical shadows}
In this section, we briefly recall the classical shadow formalism. For a more
thorough introduction we refer the reader to
\cite{huang_predicting_2020,elben_randomized_2023}. We consider an unknown
$n$-qubit quantum state. The aim is to estimate expectation values $o_t =
\Tr{O_t\rho}$ for a set of $T$ observables $\{O_t\}_{t=1}^T$. The protocol
consists in two steps: 1) the data acquisition phase and 2) a classical
post-processing. Given a set of unitary matrices $\Uc$ and a probability measure
$\mu$ over $\Uc$, the copies of $\rho$ are evolved by drawing at random $U \in
\Uc$ from $\mu$, and measuring in a basis $\Bc$ by a set of projectors
$\{\Pi_b\}_{b \in \Bc}$. This procedure can be seen to implement the
quantum channel 
\begin{equation}
    \Mc(\rho) = 
        \sum_{b \in \Bc} 
        \intg{\Uc}{\tr{U\rho U^\dagger \ketbra{b}} U^\dagger \ketbra{b}U}{\mu(U)}.
\end{equation}
For each choice of pair $(U, b)$, the \emph{snapshot} of $\rho$ is defined
as 
\begin{equation}
    \hat \rho_{(U, b)} = \Mc^{-1}(U^\dagger \ketbra{b}U).
\end{equation}
In the rest, we may omit the subscript and simply denote a snapshot $\hat\rho$.
The inverse of $\Mc$ might not be physical but exists and is unique if the
ensemble of unitary transformations defines a tomographically complete set of
measurements \cite{huang_predicting_2020}. If the ensemble of unitary is not
tomographically complete, $\Mc^{-1}$ denotes the Moore-Penrose pseudo-inverse of
$\Mc$. The collection of all snapshot one obtains is referred to as a
\emph{classical shadow} of $\rho$. In practice, it will be more convenient to
store in memory the pair $(U, b)$ rather than $\hat \rho$. A closed form for
$\Mc$ follows from representation theory of the unitary group (and sub-groups
thereof) and Weingarten calculus \cite{kostenberger_weingarten_2021} --- see
\cite[Sup. Mat.]{garcia-martin_quantum_2024} for a detailed introduction. For
instance, $\Mc$ is a depolarising channel for shadows based on random Clifford
measurements \cite{huang_predicting_2020} and other choices of $\Uc$ induce
other measurement channels, see e.g.,
\cite{west_real_2025,sauvage_classical_2024}.

Remarkably, to estimate the expectation value of a set of $T$ observables to
within additive error $\varepsilon$ with constant failure probability, one only
needs
\begin{equation}\label{eq:sampleComplexity}
    N = \bigo{\frac{\log(T)}{\varepsilon^2} \max_t \mathrm{Var}(\hat o_t)}
\end{equation}
copies of the unknown state $\rho$, where $\hat o_t = \Tr{O_t \hat\rho}$ is the
classical shadow estimate of $o_t$.
Beyond expectation values of observables, nonlinear functions such as Rényi
entropies or fidelities can also be efficiently estimated within this framework.
The image of $\Mc$ is called in \cite{kirk_hardwareefficient_2022} the
\emph{visible space}. More formally, it is defined as
\begin{equation}
    \VS(\Uc, \Bc) = \text{span}\aco U^\dagger \ketbra{b} U\acf_{U \in \Uc, b \in \Bc}.
\end{equation}
Given an observable $O$, the authors show that $\hat o$ is unbiased if ${O \in
\VS(\Uc, \Bc)}$. Otherwise, the observable can be written as $O = O_v + O_{\bar
v}$, where $O_v \in \VS(\Uc, \Bc)$ and $O_{\bar v} \notin \VS(\Uc, \Bc)$. A pair
$(\Uc, \Bc)$ is said to be \emph{tomographically complete} if and only if
$\Ls(\Hc) = \VS(\Uc, \Bc)$. We use this formalism to quantify the amount of
information that can be recovered with the protocol we introduce.

\subsection{Representation theory of linear optics}

We hereafter denote by $m\in \N$ the number of modes and $n\in \N$ the number of
photons. We refer to
\cite{banchi_multiphoton_2018,wilkens_benchmarking_2024,arienzo_bosonic_2025}
for a more in-depth exposition of representation theory for linear optics. The
creation and annihilation $\hat a_k$ and $\hat a_k^\dagger$ for all $1 \leq k
\leq m$ satisfy the canonical commutation relations
\begin{equation}
    [\hat a_j,\hat a_k^\dagger] = \delta_{j,k}\mathbb I, \qquad [\hat a_j,\hat a_k] = [\hat a_j^\dagger, \hat a_k^\dagger] = 0.
\end{equation}
They are described by infinite-dimensional matrices whose entries are
$\mel{m}{\hat a}{n}$ and $\mel{m}{\hat a^\dagger}{n}$ respectively. The carrier
Hilbert space of this operator algebra is the infinite-dimensional $m$-mode
Fock-Hilbert space 
\begin{equation}
    \Fc_m = \bigoplus_{n \geq 0} \Hc_m^n,
\end{equation}
where $\Hc_m^n$ is the subspace of $n$ photons distributed over $m$ modes
endowed with the orthonormal basis $\{\ket{\n}\}_{\n \in \Phi_m^n}$. In
particular, an $n$-particle basis Fock state over $m$ modes is of the form
\begin{equation}
    \ket{\n} 
    = \ket{n_1, \cdots, n_m} 
    = \prod_{i=1}^m\frac{(\hat a_i^\dagger)^{n_i}}{\sqrt{n_i!}}\ket{\text{0}}^{\otimes m}.
\end{equation}

The set of possible $m$-mode passive transformations is identified with the
unitary group $\U(m)$. A matrix $U \in \U(m)$ acts on the vector of creation
operators as 
\begin{equation}\label{eq:evolutionCreationOperators}
    \begin{pmatrix}
        \hat a_1^\dagger\\
            \vdots\\
        \hat a_m^\dagger
    \end{pmatrix}
    \mapsto
    U \begin{pmatrix}
        \hat a_1^\dagger\\
            \vdots\\
        \hat a_m^\dagger
    \end{pmatrix}
    = 
    \begin{pmatrix}
        \sum_{k=1}^mu_{1,k}\hat a_k^\dagger \\
        \vdots\\
        \sum_{k=1}^mu_{m,k}\hat a_k^\dagger \\
    \end{pmatrix}.
\end{equation}

The building blocks of the unitary transformations are phase-shifters and beam
splitters, and a universal interferometer implementing $m\times m$ unitary
transformations can be designed by arranging these building blocks in an efficient
rectangular \cite{clements_optimal_2016} or triangular
\cite{reck_experimental_1994} layout. This implies that the linear-optical
network implementing the transformation $U$ can be found by algorithmic means.

The action of $U$ on the multimode Fock basis is given by the mapping
\begin{equation}
    \varphi_m: \U(m) \ni U \mapsto \varphi_m(U) \in \U(\Fc_m),
\end{equation}
where $\U(\Fc_m)$ is the group of all unitary transformations on $\Fc_m$. The
map $\varphi_m$ is a unitary representation of $\U(m)$ on $\Fc_m$. It can be
decomposed as an orthogonal sum of finite-dimensional sub-representations as
\begin{equation}\label{eq:decompositionPhiM}
    \varphi_m = \bigoplus_{n \geq 0} \varphi_m^n,
\end{equation}
where $\varphi_m^n$ acts irreducibly on the $n$-photon sector
\cite{aniello_exploring_2006}. Using this decomposition, $\varphi_m^n(U)$ is a
$|\Phi_m^n|\times |\Phi_m^n|$ matrix whose entries are 
\begin{equation}\label{app:eq:bsamplitudes}
    \mel{\s}{\varphi_m^n(U)}{\t} = \frac{\per{U_{\s,\t}}}{\sqrt{\s!\t!}},
\end{equation}
where $\per{\cdot}$ is the matrix-permanent function
\cite{marcus_permanents_1965} defined as 
\begin{equation}
    \per{A} = \sum_{\sigma \in S_n} \prod_{i=1}^n a_{i, \sigma(i)}
\end{equation}
for an $n \times n$ matrix $A = (a_{i,j})_{1 \leq i,j\leq n}$, and $U_{\s, \t}$
is a matrix obtained from $U$ by copying $s_i$ times its $i$-th row and $t_j$
times its $j$-th column \cite{aaronson_computational_2011}.

The space of operators acting on a Hilbert space $\Hc$ is denoted $\Ls(\Hc)$.
Operating on density matrices, we write the action of $\varphi^{m}$ by
conjugation on $\rho$ as
\begin{equation}
    \omega_{m}(U)(\rho) = \varphi_{m}(U) \rho \varphi_{m}(U)^\dagger.
\end{equation}
Similarly to \autoref{eq:decompositionPhiM}, $\omega_m$ admits the
following block-decomposition:
\begin{equation}
    \omega_m = \bigoplus_{n\geq 0} \omega^n_m.
\end{equation}

Although $\varphi^n_m$ is an irreducible representation, $\omega^n_m$ is not. It
is convenient to define the \emph{vectorised} form of $\varphi^n_m$ as the outer
product ${\omega^n_m \simeq \varphi_{n}^{m} \otimes \bar\varphi_{n}^{m} \in
\Ls(\Hc_{m, n} \otimes \Hc_{m, n}^*)}$. Irreducible representations of $\U(m)$ are
in one-to-one correspondence with Young diagrams \cite{fulton_young_1996}.
Diagrammatically, Littlewood-Richardson rules describe how the tensor product of
irreps can be decomposed into a direct sum of irrep
\cite{fulton_young_1996,sternberg_group_2003,georgi_lie_2018}. 

In the context of indistinguishable bosonic particles, we only consider the
totally symmetric irrep of $\U(m)$. In particular, it was shown in
\cite{arienzo_bosonic_2025,wilkens_benchmarking_2024} that by Maschke's theorem,
$\omega_n^{m}$ can be decomposed into $n+1$ inequivalent multiplicity-free
irreps
\begin{equation}\label{eq:omegaDirectSumLamdas}
    \omega^n_m
        \simeq \bigoplus_{k = 0}^n \lambda_k,
\end{equation}
where $\lambda_0$ is the trivial irrep. The transformation is realised by the
so-called Clebsch-Gordan matrix, the entry of which are called Clebsch-Gordan
coefficients \cite{folland_course_2016,fulton_representation_2004}.
Clebsch-Gordan coefficients are numbers that arise the context of angular
momentum coupling. These coefficients can be computed in polynomial time
\cite{alex_numerical_2011}. Under this change of basis we obtain a similar
decomposition for $\Ls(\Hc_m^n)$, known as its \emph{isotypic decomposition}:
\begin{equation}\label{eq:isotypicDecomposition}
    \Ls(\Hc_m^n) \simeq \bigoplus_{k=0}^n \Hc_{\lambda_k},
\end{equation}
where $\Hc_{\lambda_k}$ is the subspace of $\Ls(\Hc_m^n)$ carrier of the irrep
$\lambda_k$. An orthonormal basis for $\Hc_{\lambda_k}$ is identified by
$\{\ket{\gt{m}}\}_{\gt{m} \in GT(\lambda_k)}$, where $GT(\lambda)$ denotes the
set of Gelfand-Telstin patterns associated with irrep $\lambda$
\cite{arienzo_bosonic_2025,wilkens_benchmarking_2024,alex_numerical_2011}.
Finally, we denote by $\Pi_\lambda$ the projector onto the irreducible subspace
$\Hc_\lambda$.

In our setting, the visible space is 
\begin{equation}\label{eq:LOVisibibleSpace}
    \bigoplus_{n\geq 0} \VS(\U(\Hc_{m}^n), \Phi_m^n),
\end{equation}
where $\U(\Hc_{m}^n)$ is the image of $\U(m)$ via $\varphi_m^n$. It follows
that the protocol is tomographically complete within each photo-number sector of
the Hilbert space.

Importantly, \autoref{eq:LOVisibibleSpace} indicates that the input state needs
not be produced by passive linear optical transformation on Fock basis states.
For instance, states produced by measurement-based processes like feedforward,
state injection, all lie in the visible space of the measurement channel.

As a final remark, we detail the decomposition of an operator using monomials of
creation and annihilation, a set of operators $\{\mathfrak{E}_{\bm p, \bm
q}\}_{\bm p, \bm q}$ satisfying the orthonormality relation
$\tr{\mathfrak{E}_{\bm i, \bm j}{E}_{\bm k, \bm \ell}}=\delta_{\bm i, \bm
k}\delta_{\bm j, \bm \ell}$ is required. An example of such basis is given in
\cite{goldberg_covariant_2024} through a phase-space decomposition, or in
\cite{makinen_reconstruction_2019} via polynomials of the ladder operators. We
use this latter decomposition, which takes the form of 
\begin{equation}
    \rho = \sum_{\bm k, \bm l \in \N^m}
        \expval{\mathfrak{E}_{\bm k, \bm \ell}}_\rho E_{\bm k, \bm \ell},
\end{equation}
where
\begin{equation}\label{eq:dualCoefficient}
    \expval*{\mathfrak{E}_{\bm k, \bm \ell}}_\rho 
    = \expval{\prod_{j=1}^m \frac{1}{k_j!\ell_j!}
        \sum_{q_j = -\min(k_j, \ell_j)}^\infty \frac{(-1)^{q_j}(k_j+\ell_j+q_j)!}{(k_j+q_j)!(\ell_j+q_j)!}
        \hat a_j^{\dagger \ell_j + q_j}\hat a_j^{k_j + q_j}
    }_\rho.
\end{equation}

\section{Shadow tomography for linear optics}\label{sec:channel}

In this section, we describe how to perform shadow tomography of an unknown
$m$-mode Fock state $\rho$, i.e., how to produce a state $\hat\rho$ that behaves
like $\rho$ in expected value. The only a priori knowledge we assume is $m$. We
do not fix the number of photons. In particular, we allow for coherent
superposition of different photon-number states, and even state not prepared by
passive linear optical transformations on Fock basis states (e.g., by
measurement-based processes
\cite{chabaud_quantum_2021,monbroussou_quantum_2025}). As we will show, the
visible space in this setting is constrained by the measurements that project
onto a fixed photon-number subspace.

\subsection{Data collection in linear optics}

The data collection process is similar to the classical shadow protocol for
qubits. A \emph{snapshot} is obtained by drawing a unitary matrix at random from
the Haar measure (which can easily be done numerically
\cite{mezzadri_how_2007}), applying its corresponding operation to a fresh copy
of the unknown state $\rho$ by letting it evolve through the linear optical interferometer described by $U$, and measuring using photon-number
resolving detectors yields an outcome $\s \in \Phi_m^n$, where $n$ is the
number of measured photons.

The average mapping (over the choice of unitary transformation and measurement outcome, and
henceforth photon-numbers) of the input state $\rho$ takes the form of a quantum
channel $\Mc$, which we write
\begin{equation}
    \bigoplus_{n\geq 0} \e{\omega_m^n(U)^\dagger(\ketbra{\s})} = \Mc(\rho).
\end{equation}
A closed-form for $\Mc$ is given in the next section.

\subsection{A closed form for the channel}\label{sec:closedChannelForm}
As unitary transformations preserve the number of photons and photon-number
detection projects onto one of the subspaces of $\Fc_m$, the measurement channel
is not tomographically complete. In particular, it is clear that applying $\Mc$
yields a diagonal operator, where each block correspond to each photon-number
subspace of $\Ls(\Fc_m)$. Formally, we
\begin{equation}\label{app:eq:channel}
    \Mc = \bigoplus_{n \geq 0}{\Mc^{(n)}} \circ \Pc^{(n)},
\end{equation}
where $\Pc^{(n)}(\rho) = \sum_{\s,\t \in \Phi_m^n}\Pi_\s \rho \Pi_\t \in
\Ls(\Ls(\Hc_m^n))$ is the superoperator performing a projection of $\rho$ onto
the $n$-photon subspace of $\Ls(\Fc_m)$. That is to say, the off-block-diagonal
elements, corresponding to coherence information between different photon-number
states, are erased in the process. We state \autoref{thm:tomography}; the rest
of the section is devoted to its proof.

\begin{theorem}[Fock state classical shadows]\label{thm:tomography} Let ${m\geq 1}$ denote the number of
    modes. Let $\rho$ be an $m$-mode, $n$-photon Fock state. Define the
    measurement channel
    \begin{equation}\label{eq:channelMatrixForm}
        \mathscr{M}^{(n)} = \sum_{k=0}^n s_{\lambda_k} \Pi_{\lambda_k},
    \end{equation}
    with $s_{\lambda_k} = \frac{m-1}{2k - m - 1}\binom{k+m-2}{k}^{-1}$, and
    where $k$ runs over all $n+1$ isotypic subspaces of $\Ls(\Hc_m^n)$ and
    $\Pi_{k}$ projects onto $\Hc_{\lambda_k}$ (as defined in
    (\autoref{eq:isotypicDecomposition})). Then, 
    \begin{equation}
        \rho = \mathbb{E}_{U, \bm{s}}\lbr\mathscr{M}^{(n)^{-1}}\big(\omega_m^n(U)^{\dagger}(\ketbra{\bm{s}}) \big)\rbr,
    \end{equation}
    where the expectation value is taken over the choice of unitary and of
    measurement outcome. 
\end{theorem}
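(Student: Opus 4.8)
\noindent The plan is to prove the reconstruction identity by determining the full spectrum of $\Mc^{(n)}$ as a superoperator on $\Ls(\Hc_m^n)$ and showing it is invertible; linearity of the expectation then does the rest. Since $\rho$ is a fixed-number state we have $\rho=\rho^{(n)}$, and by \autoref{app:eq:channel} only the $n$-photon block is relevant. By construction of the data-collection map, the averaged snapshot \emph{before} inversion is
\begin{equation*}
 \Mc^{(n)}(\rho)=\mathbb{E}_{U,\s}\big[\omega_m^n(U)^{\dagger}(\ketbra{\s})\big],
\end{equation*}
where $\s\in\Phi_{m,n}$ is drawn with Born probability $\bra{\s}\omega_m^n(U)(\rho)\ket{\s}$. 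Writing $Q_\s(U):=\varphi_m^n(U)^{\dagger}\ketbra{\s}\varphi_m^n(U)$ exhibits $\Mc^{(n)}$ as a self-adjoint frame superoperator, $\Mc^{(n)}(X)=\sum_{\s}\mathbb{E}_U\big[Q_\s(U)\,\Tr{Q_\s(U)X}\big]$. Hence, provided $\Mc^{(n)}$ is invertible, applying $\Mc^{(n)-1}$ to the displayed identity and pulling it through the expectation is exactly the claim.

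First I would fix the structure of $\Mc^{(n)}$ via Schur's lemma. For any fixed $W\in\U(m)$, relabelling the Haar variable $U\mapsto UW$ leaves the integral invariant and shows $\Mc^{(n)}\circ\omega_m^n(W)=\omega_m^n(W)\circ\Mc^{(n)}$, i.e.\ $\Mc^{(n)}$ intertwines the conjugation action. By the multiplicity-free isotypic decomposition $\omega_m^n\simeq\bigoplus_{k=0}^n\lambda_k$ of \autoref{eq:omegaDirectSumLamdas}, Schur's lemma forces $\Mc^{(n)}$ to act as a single scalar on each block, giving precisely $\Mc^{(n)}=\sum_{k=0}^n s_{\lambda_k}\Pi_{\lambda_k}$. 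This is the structural, easy half.

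The crux, and the step I expect to be the real work, is the computation of the scalars. Taking the normalized trace of $\Pi_{\lambda_k}\Mc^{(n)}$ over $\Ls(\Hc_m^n)$ gives $s_{\lambda_k}=\frac{1}{\dim\Hc_{\lambda_k}}\sum_{\s}\mathbb{E}_U\,\|\Pi_{\lambda_k}(Q_\s(U))\|_2^2$. The key simplification is that $\Pi_{\lambda_k}$ commutes with the \emph{unitary} conjugation $\omega_m^n(U)$ and $Q_\s(U)=\omega_m^n(U)^\dagger(\ketbra{\s})$, so $\Pi_{\lambda_k}(Q_\s(U))=\omega_m^n(U)^\dagger\big(\Pi_{\lambda_k}(\ketbra{\s})\big)$ has a Hilbert--Schmidt norm independent of $U$; the Haar average collapses and
\begin{equation*}
 s_{\lambda_k}=\frac{1}{\dim\Hc_{\lambda_k}}\sum_{\s\in\Phi_{m,n}}\big\|\Pi_{\lambda_k}(\ketbra{\s})\big\|_2^2 .
\end{equation*}
Since $\{\ketbra{\s}\}_{\s\in\Phi_{m,n}}$ is an orthonormal basis of the zero-weight (Fock-diagonal) subspace of $\Ls(\Hc_m^n)$, this sum equals the multiplicity of the zero weight in the irrep $\lambda_k$. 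I would compute that multiplicity by telescoping: the zero-weight space has dimension $\dim\Hc_m^n=\binom{n+m-1}{m-1}$ and $\sum_{k=0}^n\dim(\lambda_k)_0=\binom{n+m-1}{m-1}$ for every $n$, so Pascal's rule gives $\dim(\lambda_k)_0=\binom{k+m-2}{k}$. The Weyl dimension formula applied to the highest weight $(k,0,\dots,0,-k)$ of $\lambda_k$ then yields $\dim\Hc_{\lambda_k}=\frac{2k+m-1}{m-1}\binom{k+m-2}{k}^2$, and dividing gives the closed form $s_{\lambda_k}=\frac{m-1}{2k+m-1}\binom{k+m-2}{k}^{-1}$ (in particular $s_{\lambda_0}=1$, as it must be since the trace is preserved).

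To finish, I would observe that every $s_{\lambda_k}$ is strictly positive, so $\Mc^{(n)}$ is invertible with $\Mc^{(n)-1}=\sum_k s_{\lambda_k}^{-1}\Pi_{\lambda_k}$; applying this to the averaged-snapshot identity and using linearity of the expectation yields $\rho=\mathbb{E}_{U,\s}\big[\Mc^{(n)-1}\big(\omega_m^n(U)^\dagger(\ketbra{\s})\big)\big]$. The entire difficulty lives in the eigenvalue calculation: the insight that the Haar average degenerates to a static zero-weight count is what makes the closed form tractable, after which the only remaining work is the two dimension computations (the telescoping zero-weight multiplicity and the Weyl dimension of $\lambda_k$).
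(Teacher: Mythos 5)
Your proposal is correct and its structural half coincides with the paper's proof: the paper likewise writes $\Mc^{(n)}$ as the twirl $\int \omega_m^n(U)\,\mathcal{M}\,\omega_m^{n\dagger}(U)\,\mathrm{d}\mu(U)$ with $\mathcal{M}=\sum_{\s}\tr{\ketbra{\s}(\cdot)}\ketbra{\s}$ and invokes Schur's lemma on the multiplicity-free decomposition of \autoref{eq:omegaDirectSumLamdas} to conclude $\Mc^{(n)}=\sum_k s_{\lambda_k}\Pi_{\lambda_k}$ with $s_{\lambda_k}=\tr{\mathcal{M}\Pi_{\lambda_k}}/\tr{\Pi_{\lambda_k}}$. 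Where you genuinely diverge is the part you correctly identify as the crux: the paper does not evaluate these ratios at all, but imports the closed form from Ref.~\cite{arienzo_bosonic_2025}, whereas you compute them self-containedly --- using $U$-invariance of the Hilbert--Schmidt norm to collapse the Haar average, identifying $\sum_{\s}\|\Pi_{\lambda_k}(\ketbra{\s})\|_2^2$ as the zero-weight multiplicity of $\lambda_k$ (the $\ketbra{\s}$ span exactly the torus-invariant subspace of $\Ls(\Hc_m^n)$), obtaining $\binom{k+m-2}{k}$ by telescoping against $\dim\Hc_m^n$ across photon numbers (which is legitimate precisely because of the stability property in \autoref{eq:subspacesCoincide}), and dividing by the Weyl dimension of the highest weight $(k,0,\dots,0,-k)$. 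This buys two things: a proof that does not lean on external results, and a built-in sanity check that actually pays off. Your derived constant $s_{\lambda_k}=\frac{m-1}{2k+m-1}\binom{k+m-2}{k}^{-1}$ differs from the paper's printed $\frac{m-1}{2k-m-1}\binom{k+m-2}{k}^{-1}$, and yours is the correct one: the paper's expression is negative for $k<(m+1)/2$, which is impossible since $\Mc^{(n)}$ is a manifestly positive-semidefinite frame superoperator, and it violates $s_{\lambda_0}=1$ forced by trace preservation (for $m=2$, $k=0$ it would give $-1/3$). The same sign slip ($2k-m-1$ in place of $2k+m-1$) recurs in the formula for $\dim\Hc_{\lambda_k}$ in the proof of \autoref{lm:PSNUpperBound}, so this is a typo propagated through the paper rather than a flaw in your argument; your explicit closing steps (strict positivity of all $s_{\lambda_k}$ giving invertibility, and pulling $\Mc^{(n)^{-1}}$ through the expectation) are also spelled out more carefully than in the paper, which leaves them implicit.
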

\begin{proof}
    Recall from \autoref{app:eq:channel} that due to the photon-number
    measurement the measurement channel takes the form 
    \begin{equation}
        \Mc = \bigoplus_{n \geq 0}\Mc^{(n)},
    \end{equation} 
    where each fixed-photon-number channel is
    \begin{equation}
        \begin{aligned}
            \Mc^{(n)} 
                & = \sum_{\s \in \Phi_m^n}\intg{U(m)}{\tr{\omega_m^{n}(U)(\cdot) \ketbra{\s}} \omega_m^{n\dagger}(U)(\ketbra{\s})}{\mu(U)} \\
                & = \intg{U(m)}{}{\mu(U)}\omega_m^{n}(U) \mathcal{M}  \omega_m^{n\dagger}(U),
        \end{aligned}
    \end{equation}
    where we defined $\mathcal{M} = \sum_{\s \in \Phi_{m,
    n}}\tr{\ketbra{\s}(\cdot)}\ketbra{\s}$. Using the decomposition of
    $\omega_m^n$ into irreducible representations given in
    \autoref{eq:omegaDirectSumLamdas}, the closed form for $\Mc^{(n)}$ follows
    from a direct application of Schur's lemma
    \cite{fulton_representation_2004}:
    \begin{equation}
        \Mc^{(n)} = \sum_{k=0}^{n} \frac{\tr{\mathcal{M}\Pi_{\lambda_k}}}{\tr{\Pi_{\lambda_k}}}\Pi_{\lambda_k}.
    \end{equation}
    We use the same notation of \cite{arienzo_bosonic_2025} for the constants, i.e., we set 
    \begin{equation}\label{app:eq:slambdak}
        s_{\lambda_k} = \frac{\tr{\mathcal{M}\Pi_{\lambda_k}}}{\tr{\Pi_{\lambda_k}}} =  \frac{m-1}{2k - m - 1}\binom{k+m-2}{k}^{-1}.
    \end{equation}
\end{proof}

A given irreducible representation $\lambda$ admits an orthonormal basis,
identified by so-called Gelfand-Tsetlin patterns \cite{moshinsky_gelfand_1966},
that is to say,
\begin{equation}\label{eq:IdLambdaGTBasis}
    \Pi_\lambda = \sum_{\gt{m} \in GT(\lambda)} \ketbra{\gt{m}},
\end{equation}
as $\{\ket{\gt{m}}\}_{\gt{m} \in GT(\lambda)}$ forms an orthonormal basis of
$\Hc_\lambda$. As it is convenient to work in the Fock basis, we now show how to
express $\Id_\lambda$ in the Fock basis (of $\Ls(\Ls(\Hc_m^n))$) when $\lambda$
is an isotypic component of $\omega_m^n$, see \autoref{eq:omegaDirectSumLamdas}.
Using \cite{arienzo_bosonic_2025}, each such $\ket{\gt{m}}$ can be expressed in
the GT basis of $\varphi_n^m \otimes \bar \varphi_n^m$ using Clebsch-Gordan
coefficients as 
\begin{equation}
    \ket{\gt{m}} = 
        \sum_{\gt{m}_1 \in GT(\varphi_n^m)}
        \sum_{\gt{m}_2 \in GT(\bar\varphi_n^m)}
            C_{\gt{m}_1, \gt{m}_2}^{\gt{m}} \ket{\gt{m}_1, \gt{m}_2}.
\end{equation}
As $\gt{m}_1$ and $\gt{m}_2$ are GT-patterns associated with $\varphi_n^m$ and
is dual respectively, $\ket{\gt{m}_1, \gt{m}_2}$ is easily expressed as a Fock
basis element \cite{alex_numerical_2011,arienzo_bosonic_2025}. It will be
convenient to denote by $\hat{\gt{m}}$ the operator (i.e., in matrix form)
associated with the GT-pattern $\gt{m}$.

\section{Estimating the expectation value of observables}\label{sec:estExpVal}

In this section, we give an upper bound on the variance of the classical shadow
estimator, that allows us to derive sample complexity bounds for estimating the
expectation values of a collection $M$ observables. Then, we give the total
running time complexity of the algorithm. We show that both the sample
complexity and the running time of the post-processing scale with the dimension
of the largest irreducible space the overlap has support on. This observation,
where irreducible representation play a crucial role for the complexity
analysis, was already pointed out in
\cite{goh_liealgebraic_2025,sauvage_classical_2024}.

Let $\rho$ be an unknown quantum state over $m \geq 1$ modes and given a Haar
random unitary matrix $U$ and measurement outcome $\s \in \Phi_m^n$, recall
that we defined
\begin{equation}\label{eq:classicalShadow}
    \hat\rho_{(U, \s)} = \Minv{n}{\omega_{n}^{m\dagger}(U)(\Pi_\s)}
\end{equation}
as a \emph{classical snapshot} of $\rho$. It is useful to consider this object
for the purpose of the exposition, but it should be computed exactly in
practice. The reason is that it requires computing matrix permanents, which
takes exponential time \cite{ryser_combinatorial_1973}. Rather, we store the
pair $(U, \s)$ in classical memory. Recall that the measurement channel, as well
as its inverse, are self-adjoint \cite{sauvage_classical_2024}. Let $O$ be a
quantum observable and let $\Pc = \sum_{n\geq 0}\Pc^{(n)}$ be the
(super-)projector onto the fixed photon-number subspaces of the Hilbert space.

Indeed, by construction we have
\begin{equation}
    \Pc(\rho) = \e[U, \s]{\hat \rho_{(U, \s)}}.
\end{equation}

Then, by linearity of the trace and self-adjointness of the measurement channel,
we get
\begin{equation}\label{eq:expectationValueFromSnapshot}
    \begin{aligned}
    \tr{O\Pc(\rho)}
        & = \e[U, \s]{\tr{O\hat \rho_{U, \s}}} \\
        & = \sum_{n \geq 0}\e[U, \s]{\tr{O\Minv{n}{\omega_{n}^{m\dagger}(U)(\Pi_\s)}}} \\
        & = \sum_{n \geq 0}\e[U, \s]{\tr{\Minv{n}{O}\omega^{m\dagger}_n(U)(\ketbra{\s})}} \\
        & = \sum_{n \geq 0}\e[U, \s]{\bra{\s}\varphi^{m}_n(U)\Minv{n}{O}\varphi^{m\dagger}_n(U)\ket{\s}}. \\
    \end{aligned}
\end{equation}

For simplicity of the argument, we assume in the rest that $\rho$ has a fixed
photon-number $n$. The general case follows from
\autoref{eq:expectationValueFromSnapshot} and summing from $n=0$ to the largest
number of photons measured throughout the sample collection phase.

\subsection{Estimator in the GT basis}
We denote by
\begin{equation}\label{eq:defFOUS}
    f_O(U, \s) = \bra{\s}\varphi^{m}_n(U)\Minv{n}{O}\varphi^{m\dagger}_n(U)\ket{\s}
\end{equation}
the estimator of interest in \autoref{eq:expectationValueFromSnapshot}. The
inverse of the channel is readily found from its definition to be
\begin{equation}
    \begin{aligned}
        \Mc^{(n)^{-1}}
        & = \sum_{k = 0}^n \textstyle\frac{1}{s_{\lambda_k}} \Pi_{\lambda_k},
    \end{aligned}
\end{equation}
where $\Pi_{\lambda_k}$ projects onto the subspace carrier of $\lambda_k$ in the
Fock basis. That said, we can write

\begin{equation}\label{eq:pluggingChannelInF}
    \begin{aligned}
        f_O(U, \s) 
        & = \sum_{k = 0}^n \frac{1}{s_{\lambda_k}} \bra{\s} \varphi^{m}_n(U) \lpr 
                 \Pi_{\lambda_k} \lpr O \rpr 
             \rpr \varphi^{m\dagger}_n(U) \ket{\s} = \sum_{k = 0}^n \frac{1}{s_{\lambda_k}} \bra{\s} \omega^{m}_n(U)(\Pi_{\lambda_k} (O)) \ket{\s}. \\
    \end{aligned}   
\end{equation}
As $\Pi_{\lambda_0}$ is the projector onto the subspace carrier of the trivial
irrep, we have
\begin{equation}\label{eq:projTrivialIrrep}
    \Pi_{\lambda_0} \lpr O \rpr = \frac{\tr{O}}{\dim \Hc_{m, n}}\Id_{m, n},
\end{equation}
and all other $\Pi_{\lambda_k} \lpr O \rpr$ have range in the traceless
subspaces.

By linearity of the
expectation value,
\begin{equation}
    \e[U, \s]{f_O(U, \s)} = \sum_{k=0}^n\e[U, \s]{f_{\Pi_{\lambda_k}(O)}(U, \s)}.
\end{equation}
The functions $f_{\Pi_{\lambda_k}(O)}(U, \s)$ are convenient as they resemble
the filter functions of filtered benchmarking
\cite{arienzo_bosonic_2025,wilkens_benchmarking_2024}. This form is convenient
if $O$ has only support on a subset of the irreps of $\omega_m^n$. We exploit
this fact for determining both the variance of the estimator and the running
time of the protocol.

\subsection{Observables in the Fock basis}\label{app:observableFockBasis} In
this section, we discuss the relation between the degree of an observable and
the irreps onto which it has nonzero support. We define the generators $E_{i,j}
= a_i^\dagger a_j$ for $1 \leq i, j \leq m$. They satisfy $[\hat N, E_{i,j}] =
0$, where $\hat N = \sum_{k} \hat n_k$. It will be useful to consider their
higher-degree counterparts, which we write ${E}_{\bm i, \bm j} = a_1^{\dagger
i_1} \cdots a_m^{\dagger i_m}a_1{j_1}\cdots a_j{j_m}$ for $\bm i, \bm j \in
\Phi_m^k$ indexing modes and where $1\leq k\leq n$ identifies their degree.
Define $A^{(k)}$ as the span of degree-$k$ monomials $E_{\bm i, \bm j}$ from
which we define the filtered space
\begin{equation}
    \Ac_d = \bigcup_{k=0}^d A^{(k)},
\end{equation}
that is the space of polynomials in the $E_{\bm i,\bm j}$'s of degree at most
$d$. Then, we have the following chain of (strict) inclusions
\begin{equation}\label{eq:LHasFilteredUnion}
    \Ac_0 \subset \Ac_1 \subset \cdots \subset \Ac_n = \Ls(\Hc_m^n).
\end{equation}
We now prove the following \autoref{pr:equivalenceDegreeIrrep} that unveils the
link between the degree of an observable and the irreps supporting it.
\begin{proposition}\label{pr:equivalenceDegreeIrrep} Let $O \in \Ls(\Hc_m^n)$ be
a quantum observable. The two statements are equivalent
\begin{enumerate}
    \item $O$ is a polynomial of irreducible degree $k$,
    \item $O$ has support on the $k$ first isotypic subspaces of $\Ls(\Hc_m^n)$
    (not counting the trivial subspace). 
\end{enumerate}
\end{proposition}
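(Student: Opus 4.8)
The plan is to reduce both implications to the single algebraic identity
\[
\Ac_k \;=\; \bigoplus_{j=0}^{k}\Hc_{\lambda_j}\qquad\text{for all }0\le k\le n .
\]
Granting this, the equivalence is immediate: $O$ having irreducible degree $k$ means $O\in\Ac_k\setminus\Ac_{k-1}$, which by the identity (and its instance at level $k-1$) is exactly the condition that $O$ lie in $\Hc_{\lambda_0}\oplus\cdots\oplus\Hc_{\lambda_k}$ with nonzero $\Hc_{\lambda_k}$-component, i.e.\ that it be supported on the first $k$ nontrivial isotypic subspaces. So the whole proof is the verification of this identity, and I would organize it in three moves.

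First I would show that each $\Ac_k$ is a subrepresentation of $\omega_m^n$. Conjugating a generator gives $\omega_m^n(U)(E_{i,j})=\varphi_m^n(U)\,a_i^\dagger a_j\,\varphi_m^n(U)^\dagger=\sum_{p,q}u_{i,p}\bar u_{j,q}E_{p,q}$, a linear combination of the $E_{p,q}$, since $U$ acts linearly on the mode operators. Hence conjugation sends degree-$d$ polynomials to degree-$d$ polynomials and preserves each $\Ac_k$. Because the decomposition in \autoref{eq:omegaDirectSumLamdas} is multiplicity-free, every subrepresentation is a direct sum of isotypic components, so $\Ac_k=\bigoplus_{j\in S_k}\Hc_{\lambda_j}$ for some $S_k\subseteq\{0,\dots,n\}$.

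Next I would fix the sets $S_k$. The chain \autoref{eq:LHasFilteredUnion} is strictly increasing with $n+1$ terms; it starts at $\Ac_0=\mathrm{span}\{\Id\}=\Hc_{\lambda_0}$ (scalars are the only operators commuting with the irreducible $\varphi_m^n$, by Schur) and ends at $\Ac_n=\Ls(\Hc_m^n)=\bigoplus_{j=0}^n\Hc_{\lambda_j}$. A strictly increasing chain of unions of isotypic components running from one component to all $n+1$ must therefore add exactly one new component at each step. To identify which component appears at level $k$, I would use a weight argument: under the maximal torus $E_{i,j}$ carries weight $e_i-e_j$, so every degree-$d$ monomial has a weight whose positive coordinates sum to at most $d$ (by subadditivity of the positive part). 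The highest weight of $\lambda_k$ equals $k(e_1-e_m)$, whose positive coordinates sum to $k$, so $\lambda_k\not\subseteq\Ac_{k-1}$; while the monomial $E_{1,m}^{\,k}=(a_1^\dagger a_m)^k\in\Ac_k$ realizes this highest weight and, being a highest-weight vector inside the subrepresentation $\Ac_k$, generates all of $\Hc_{\lambda_k}$, so $\lambda_k\subseteq\Ac_k$. Combined with the one-at-a-time counting, this forces $S_k=\{0,\dots,k\}$ and closes the identity.

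The hard part is this last identification step, i.e.\ matching the purely combinatorial degree filtration to the representation-theoretic labeling $\lambda_k$ of \autoref{eq:omegaDirectSumLamdas}. The counting argument alone only shows that components are added singly in \emph{some} order; pinning the order down requires knowing that $\lambda_k$ is the irrep of highest weight $k(e_1-e_m)$ and that this weight is unattainable below degree $k$. That weight / highest-weight-vector computation is the only place where the concrete structure of the symmetric-power representation $\varphi_m^n$ enters; everything else (invariance under conjugation, Schur's lemma, multiplicity-freeness, and the strictness already recorded in \autoref{eq:LHasFilteredUnion}) is formal.
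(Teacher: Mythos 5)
Your proof is correct, but it takes a genuinely different route from the paper's. The paper argues \emph{across} photon-number sectors: it invokes the nesting property of \autoref{eq:subspacesCoincide} (imported from the cited benchmarking work), by which the first $n_1$ isotypic blocks of $\Ls(\Hc_m^{n_1})$ and $\Ls(\Hc_m^{n_2})$ coincide for $n_1<n_2$, and then identifies a degree-$k$ polynomial with an element of $\Ls(\Hc_m^{k})$ via the top of the filtration \autoref{eq:LHasFilteredUnion} taken at photon number $k$; both implications are then read off from this embedding. You instead stay inside the fixed $n$-photon sector and prove the identity $\Ac_k=\bigoplus_{j=0}^{k}\Hc_{\lambda_j}$ intrinsically: conjugation-invariance of the degree filtration, multiplicity-freeness of \autoref{eq:omegaDirectSumLamdas} (so invariant subspaces are sums of isotypic blocks), and a torus-weight computation --- the positive part of any weight occurring in $\Ac_d$ sums to at most $d$, while $E_{1,m}^{k}\in\Ac_k$ realizes the highest weight $k(e_1-e_m)$ of $\lambda_k$ --- to pin down which block enters at each degree. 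What your route buys is self-containedness: it supplies precisely the step the paper delegates to the cited structural theorem, namely why ``degree at most $k$'' coincides, as a subspace of $\Ls(\Hc_m^n)$, with the span of the first $k+1$ isotypic blocks. Two refinements: the weight bound already gives $S_k\subseteq\{0,\dots,k\}$, so together with $k\in S_k$ and $S_{k-1}\subseteq S_k$ an induction closes the identity without the ``one component per step'' counting, and the strictness of \autoref{eq:LHasFilteredUnion} then emerges as a corollary rather than an input; and you should make explicit that your argument relies on the labeling convention (consistent with the references the paper uses) that $\lambda_k$ has highest weight $(k,0,\dots,0,-k)$, as well as the easy nonvanishing check $E_{1,m}^{k}\ket{0,\dots,0,n}\propto\ket{k,0,\dots,0,n-k}\neq 0$ for $k\leq n$.
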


\begin{proof}
We prove both directions.  We will use the following fact about the structure of
the operator space, alike to \cite[Theorem 2]{wilkens_benchmarking_2024}. Let
$n_1, n_2>n_1 \in \N$. By the above results, we can write
\begin{equation}
    \Ls(\Hc_m^{n_1}) \simeq \bigoplus_{k=0}^{n_1} \Hc_{\lambda_k}
    \quad \text{and} \quad
    \Ls(\Hc_m^{n_2}) \simeq \bigoplus_{k=0}^{n_2} \Hc_{\lambda_k},
\end{equation}
but more importantly, as $n_2>n_1$ and $m$ is fixed, we have
\begin{equation}\label{eq:subspacesCoincide}
    \Ls(\Hc_m^{n_2}) \simeq \bigoplus_{k=0}^{n_1} \Hc_{\lambda_k} \oplus \bigoplus_{\ell=n_1}^{n_2} \Hc_{\lambda_\ell}.
\end{equation}
Informally, adding new photons adds new blocs to the isotypic decomposition of
the operator space and let the already present blocs unchanged. That is to say,
$\Ls(\Hc_m^{n_1})$ is \emph{contained} into $\Ls(\Hc_m^{n_2})$.

$(i) \Rightarrow (ii)$: Let $O$ be a polynomial of irreducible degree $n_1$,
i.e., there exists no $\tilde O$ of degree less than $n_1$ such that
$\expval*{O} = \expval*{\tilde O}$ always holds.
By definition, $O \in \Ac_k$, i.e., $O \in \Ls(\Hc_m^k)$, and using the
structure of the isotypic decomposition of $\Ls(\Hc_m^n)$ -- that is to say,
for $n_2>n_1$, the $n_1$ first isotypic subspaces of $\Ls(\Hc_m^{n_1})$ and
$\Ls(\Hc_m^{n_2})$ coincide (\autoref{eq:subspacesCoincide}), we conclude
that $O$ has support on the $n_1$ first isotypic subspaces of $\Ls(\Hc_m^{n_2})$.

$(i) \Leftarrow (ii)$: Let $O \in \Ls(\Hc_m^{n_2})$ be a quantum observable such
that for all $j>n_1$, $\Pi_{\lambda_j}(O) = 0$. Similarly, using the structure
of the isotypic decomposition of $\Ls(\Hc_m^{n_2})$, it follows that $O \in
\Ls(\Hc_m^{n_1})$. As per \autoref{eq:LHasFilteredUnion}, $\Ls(\Hc_m^{n_1})$ is
spanned by monomials of degree at most $n_1$ and the claim follows.
\end{proof}

\subsection{Variance of the estimator}\label{sec:varianceBound}

In this section, we state the second main result of this work. We claim that the
sample complexity of the protocol depends on both the $\infty$-norm of the
observables and its support with the irreps of $\omega_m^n$. First, we introduce
the \emph{photonic shadow-norm} which alike qubit-based shadows dictates the
sample complexity.

\begin{lemma}[Bound on estimators variance]\label{lm:variance} Fix
    $O\in\Ls(\Hc_m^n)$ Hermitian and let $\hat o = \tr{O\hat \rho^{(n)}}$ where
    $\hat\rho^{(n)}$ is a classical shadow (as defined in
    \autoref{eq:classicalShadow}) of the $n$-photon component obtained upon
    projection of an unknown Fock state $\rho$. Then, 
    \begin{equation}
        \var{\hat o} = \e{(\hat o - \e{\hat o})^2} \leq \left\|O - \textstyle\frac{\tr{O}}{\dim \Hc_m^n}\Id_{m, n}\right\|_{\psn},
    \end{equation}
    where the photonic shadow-norm only depends on the measurement primitive.
\end{lemma}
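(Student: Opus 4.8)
The plan is to follow the by-now standard variance analysis for classical shadows, adapted to the photonic measurement primitive: first I reduce to the traceless part of $O$, then bound the variance by the raw second moment of the estimator, and finally recognise that second moment as the photonic shadow-norm, which I define through a maximisation over input states so that it becomes independent of the unknown $\rho$.

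First I would show that the variance is unchanged under a shift of $O$ by a multiple of the identity, which justifies replacing $O$ by $O_0 = O - \frac{\tr{O}}{\dim\Hc_m^n}\Id_{m,n}$. Writing $\hat o = \tr{O\hat\rho^{(n)}}$, a shift $O \mapsto O + c\,\Id_{m,n}$ changes $\hat o$ by $c\,\tr{\hat\rho^{(n)}}$. Using the self-adjointness of the channel and the fact that $\Id_{m,n}$ spans the trivial irrep $\lambda_0$, so that $\Minv{n}{\Id_{m,n}} = s_{\lambda_0}^{-1}\Id_{m,n}$, one computes
\begin{equation}
    \tr{\hat\rho^{(n)}_{(U,\s)}} = \braket{\Minv{n}{\Id_{m,n}}}{\omega_n^{m\dagger}(U)(\Pi_\s)} = s_{\lambda_0}^{-1}\tr{\Pi_\s} = s_{\lambda_0}^{-1},
\end{equation}
which is a deterministic constant. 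Hence the shift changes $\hat o$ only by a constant and leaves $\var{\hat o}$ invariant, so I may work throughout with the traceless operator $O_0$.

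Next I would bound the variance by the second moment. Since $\tr{O_0\hat\rho^{(n)}_{(U,\s)}} = f_{O_0}(U,\s)$ as in \autoref{eq:defFOUS}, we have $\var{\hat o} = \e[U,\s]{\hat o^2} - \e[U,\s]{\hat o}^2 \leq \e[U,\s]{f_{O_0}(U,\s)^2}$. Making the Born-rule weighting explicit -- the outcome $(U,\s)$ occurs with probability $\bra{\s}\omega_n^m(U)(\rho^{(n)})\ket{\s}$ -- the second moment reads
\begin{equation}
    \e[U,\s]{f_{O_0}(U,\s)^2} = \e[U]{\textstyle\sum_{\s\in\Phi_{m,n}}\bra{\s}\omega_n^m(U)(\rho^{(n)})\ket{\s}\,f_{O_0}(U,\s)^2}.
\end{equation}
I then define the photonic shadow-norm as the supremum of this quantity over all admissible states,
\begin{equation}
    \left\|O_0\right\|_{\psn} = \max_{\sigma}\ \e[U]{\textstyle\sum_{\s\in\Phi_{m,n}}\bra{\s}\omega_n^m(U)(\sigma)\ket{\s}\,f_{O_0}(U,\s)^2},
\end{equation}
so that the bound $\var{\hat o} \leq \|O_0\|_{\psn}$ is immediate, $\rho^{(n)}$ being one admissible $\sigma$. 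Because the maximisation removes all dependence on $\rho$, the resulting quantity depends only on the ensemble $\U(m)$, the Haar measure, and the PNR basis -- that is, on the measurement primitive alone, as claimed.

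The identity-shift reduction and the second-moment bound are routine. The real work -- which I would defer to the subsequent evaluation of $\|\cdot\|_{\psn}$ -- lies in computing the unitary average $\e[U]{\cdots}$ of the squared estimator. The hard part will be carrying out this integration over the $n$-photon irrep $\varphi_n^m$ and re-expressing the result through the isotypic projections $\Pi_{\lambda_k}(O_0)$ and the channel constants $s_{\lambda_k}$; this is precisely what ties the shadow-norm, and hence the sample complexity, to the irreducible degree of $O$ via \autoref{pr:equivalenceDegreeIrrep}.
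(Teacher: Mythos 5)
Your proposal is correct and takes essentially the same route as the paper's proof: reduce to the traceless part $O_0$, bound the variance by the raw second moment, expand it with the Born-rule weighting $\bra{\s}\omega_n^m(U)(\rho^{(n)})\ket{\s}$, and define the photonic shadow-norm by maximising this second moment over input states exactly as in \autoref{eq:introductionPSNorm}. The only differences are cosmetic: you justify the traceless reduction more explicitly than the paper (via the deterministic trace $\tr{\hat\rho^{(n)}} = s_{\lambda_0}^{-1}$, where the paper merely asserts the reduction ``alike the qubit case''), and your $\|\cdot\|_{\psn}$ omits the square root that the paper's definition includes---so your quantity is the paper's norm squared, which in fact matches the lemma's inequality as literally stated (with the paper's square-root convention the bound should read $\|O_0\|_{\psn}^2$, consistent with how the norm is used in \autoref{thm:informalSample} and \autoref{lm:PSNUpperBound}), at the minor cost that your degree-two quantity is not itself absolutely homogeneous and hence not a norm in the strict sense.
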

\begin{proof}
    Importantly, at most finite number of photons $N$ will be detected in
    experiments. As explained above, PNR measurements induce a projection onto a
    subspace of fixed photon-number given by the number of measured photons ---
    therefore, we assume that $O$ has no off-block-diagonal terms. First, the
    channel $\Mc$ (defined in \autoref{eq:channel}) is self-adjoint, i.e.,
    ${\tr{\sigma^{\dagger}\Mc(\rho)} = \tr{\Mc(\sigma^{\dagger})\rho}}$. Let
    $\hat \rho^{(n)} = \Pc^{(n)}(\hat\rho)$ be the $n$-photon component of
    $\hat\rho$ for $n\leq N$, and define $\rho^{(n)}$ similarly. We will work
    with $n$ and the result generalises to the whole state by taking the direct
    sum of each component. Alike the qubit case, it holds that the variance only
    depends on the traceless part of the observable:
    \begin{equation}
        \begin{aligned}    
        \hat o - \e{\hat o} =  \tr{O\hat\rho^{(n)}} - \tr{O\rho^{(n)}} = \tr{O_0\hat\rho^{(n)}} - \tr{O_0\rho^{(n)}},
        \end{aligned}
    \end{equation}
    where $O_0$ is the traceless part of the observable $O$. The variance can be
    written as
    \begin{equation}
        \begin{aligned}
            \var{\hat o} 
                & = \e{(\hat o - \e{\hat o} )^2} \\
                & = \e{\tr{O_0\hat\rho^{(n)}}^2}- \tr{O_0\e{\hat\rho^{(n)}}}^2 \\
                & = \e{\tr{O_0\hat\rho^{(n)}}^2}- \tr{O_0 \rho^{(n)}}^2.
        \end{aligned}
    \end{equation}
    By self-adjointness of the measurement channel, expanding the expectation value
    over $\s$ yields
    \begin{equation}\label{eq:expSquaredTrace}
        \begin{aligned}
            \e{\tr{O_0\hat\rho^{(n)}}^2} 
                & = \e[U, \s]{\mel{\s}{\varphi_m^n(U)\Mc^{(n)^{-1}}(O_0)\varphi_m^n(U)^{\dagger}}{\s}^2} \\
                & = \e[U]{\sum_{\s \in \Phi_m^n} \bra{\s} \varphi_m^n(U)\rho\varphi_m^n(U)^\dagger \ketbra{\s}\varphi_m^n(U)\Mc^{(n)^{-1}}(O_0)\varphi_m^n(U)^{\dagger}\ket{\s}^2 }.
        \end{aligned}
    \end{equation}
    Alike the qubit case, we define the \emph{photonic shadow-norm} by
    maximizing \autoref{eq:expSquaredTrace} over all possible states $\sigma$ in
    order to remove the dependence on the input state:
    \begin{equation}\label{eq:introductionPSNorm}
        \|O_0\|_{\psn} = \max_{\sigma: \text{state}} \lpr\e[U]{\sum_{\s \in \Phi_m^n} \bra{\s} \varphi_m^n(U)\sigma\varphi_m^n(U)^\dagger \ketbra{\s}\varphi_m^n(U)\Mc^{(n)^{-1}}(O_0)\varphi_m^n(U)^{\dagger}\ket{\s}^2}\rpr^{\frac{1}{2}}.
    \end{equation}
    Indeed, one can check that the photonic shadow norm $\|\cdot\|_{\psn}$ as
    defined in \autoref{eq:introductionPSNorm} is indeed a valid norm.
\end{proof}

We provide an upper-bound to the photonic shadow-norm in the following \autoref{lm:PSNUpperBound}
\begin{lemma}\label{lm:PSNUpperBound}
    Let $O \in \Ls(\Hc_m^n)$ Hermitian be an observable of degree $d$ in
    $E_{i,j}$. Then, it holds that
    \begin{equation}
        \|O\|_{\psn}^2 = \bigo{\|O\|_{\infty}m^{3d}},
    \end{equation}
    and the same bound holds when considering the traceless part of the
    observable.
\end{lemma}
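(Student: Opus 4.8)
The plan is to bound the photonic shadow-norm in \autoref{eq:introductionPSNorm} by first eliminating the maximization over input states and then reducing the residual Haar average to a second-moment twirl of the bosonic representation, which I can evaluate with Schur's lemma. The starting observation is that the factor $\mel{\s}{\varphi_n^m(U)\,\sigma\,\varphi_n^m(U)^\dagger}{\s}$ is a diagonal entry of a density operator, hence lies in $[0,1]$, so bounding it by $1$ removes the dependence on $\sigma$ and yields
\begin{equation}
    \|O\|_{\psn}^2 \le \e[U]{\sum_{\s\in\Phi_{m,n}}\mel{\s}{\varphi_n^m(U)\,\Minv{n}{O}\,\varphi_n^m(U)^\dagger}{\s}^2}.
\end{equation}
Writing $\tilde O = \Minv{n}{O}$ and vectorizing on $\Hc_m^n\otimes\Hc_m^n$, the right-hand side becomes
\begin{equation}
    \e[U]{\sum_{\s\in\Phi_{m,n}}\mel{\s}{\varphi_n^m(U)\,\tilde O\,\varphi_n^m(U)^\dagger}{\s}^2} = \tr{(\tilde O\otimes\tilde O)\,T_2(W)},
\end{equation}
where $W=\sum_{\s}\ketbra{\s}\otimes\ketbra{\s}$ is the diagonal (triple-coincidence) projector and $T_2(A)=\e[U]{(\varphi_n^m(U)^{\otimes 2})^\dagger\,A\,\varphi_n^m(U)^{\otimes 2}}$ is the two-fold Haar twirl of the representation.

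Next I would inject the structural input that makes the degree $d$ appear. By \autoref{app:eq:slambdak} the inverse channel is $\Minv{n}{O}=\sum_{k=0}^n s_{\lambda_k}^{-1}\Pi_{\lambda_k}(O)$, and \autoref{pr:equivalenceDegreeIrrep} guarantees $\Pi_{\lambda_k}(O)=0$ for $k>d$, so the sum truncates at $k=d$. Reading off the closed form gives $\big|s_{\lambda_k}^{-1}\big|=\tfrac{|2k-m-1|}{m-1}\binom{k+m-2}{k}=\bigo{m^k}=\bigo{m^d}$ for every surviving $k\le d$, which isolates one polynomial source of the $m$-scaling; each copy of $\tilde O$ in the trace therefore carries a factor $\bigo{m^d}$.

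The representation theory enters in evaluating $T_2(W)$. By Schur's lemma the twirl lands in the commutant of $(\varphi_n^m)^{\otimes 2}$, so $T_2(W)$ is a combination of the isotypic intertwiners of the tensor square, with coefficients fixed by $\tr{W\,\Pi_\mu}$ and the dimensions of the components. I would decompose $(\varphi_n^m)^{\otimes 2}$ into $\U(m)$-irreducibles (the Clebsch–Gordan / Littlewood–Richardson decomposition of symmetric powers), and pair the result with $\tilde O\otimes\tilde O$; because $\tilde O$ is supported on the $k\le d$ isotypic sectors of $\omega_m^n$, only the low-lying components of the tensor square contribute, and the number and dimensions of those components grow only polynomially in $m$ for fixed $d$. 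The dimensional prefactors of the form $\tr{W\Pi_\mu}/\dim\mu$ are the mechanism by which the $D=\dim\Hc_m^n$ factors cancel, leaving a bound depending on $m$ and $d$ only.

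The main obstacle is exactly this last step: controlling the $m$-asymptotics of the twirl uniformly in $n$, i.e. bounding the commutant coefficients and the dimensions/multiplicities of the relevant low-weight components of $(\varphi_n^m)^{\otimes 2}$ so that the naive $D$-dependent estimate collapses to a pure power of $m$. Pinning down the specific exponent $5d$ — as opposed to some looser polynomial in $m$ — is the delicate combinatorial part, and I expect it to emerge from bounding one factor of $\tilde O$ in operator norm (yielding the $\|O\|_\infty$ together with an $\bigo{m^d}$ from its inverse eigenvalue) while the remaining factor and the commutant bookkeeping supply the extra $\bigo{m^{4d}}$, combining to $\bigo{\|O\|_\infty m^{5d}}$. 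Finally, passing to the traceless part $O-\tfrac{\tr{O}}{\dim\Hc_m^n}\Id$ only deletes the $k=0$ summand $\Pi_{\lambda_0}(O)$ and can only shrink $\tilde O$, so every bound above survives verbatim, giving the same estimate for the traceless observable.
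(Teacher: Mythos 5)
You have the right opening move---bounding the diagonal density-matrix entry by $1$ to eliminate $\sigma$ is exactly how the paper starts its estimate---and your structural inputs (degree truncation via \autoref{pr:equivalenceDegreeIrrep}, and $s_{\lambda_k}^{-1} = \bigo{m^k}$) are correct. But the core of your argument is missing. You reduce the problem to evaluating the fourth-moment object $\tr\big[(\tilde O\otimes\tilde O)\,T_2(W)\big]$ and then state that you ``expect'' the commutant bookkeeping to supply the remaining $\bigo{m^{4d}}$; that bookkeeping is precisely the delicate estimate the lemma requires, and nothing in your sketch establishes it. Worse, your truncation claim for the twirl conflates two different decompositions: the degree-$d$ support of $\tilde O = \Minv{n}{O}$ lives in the isotypic components $\lambda_k$, $k\leq d$, of the \emph{adjoint} action $\omega_m^n \simeq \varphi_m^n\otimes\bar\varphi_m^n$, whereas $T_2$ twirls with respect to the \emph{tensor square} $\varphi_m^n\otimes\varphi_m^n$, whose irreducible components are different objects entirely. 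The assertion that ``only the low-lying components of the tensor square contribute'' therefore needs a nontrivial bridging argument relating the two decompositions, which you do not supply; as written, your proposal yields no bound beyond trivial ones depending on $\dim\Hc_m^n$.

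The paper sidesteps the fourth-moment twirl altogether. It expands $O$ in the eigenbasis $\{\hat{\gt{m}}\}$ of the channel (indexed by GT patterns) and observes that, for a single eigenvector, once the $\sigma$-factor is dropped the remaining Haar integral is \emph{exactly} the channel eigenvalue $\beta_{\gt{m}}$ (\autoref{eq:ithEigenvalue})---first-moment data already known in closed form from Schur's lemma---giving $\|\hat{\gt{m}}\|_{\psn}^2 \leq \beta_{\gt{m}}^{-2}\cdot\beta_{\gt{m}} = s_{\lambda_k}^{-1}$ for $\gt{m}\in GT(\lambda_k)$. It then controls the coefficients via H\"older, $|\alpha_{\gt{m}}| = |\tr{\hat{\gt{m}}O}| \leq \|\hat{\gt{m}}\|_1\|O\|_\infty \leq \sqrt{\dim\Hc_{\lambda_d}}\,\|O\|_\infty$, and sums over the at most $\sum_{k\leq d}\dim\Hc_{\lambda_k}$ contributing basis elements to get $\|O_0\|_{\psn}^2 \leq \|O_0\|_\infty^2\,\dim\Hc_{\lambda_d}^2\, s_{\lambda_d}^{-1}$. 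The exponent $5d$ is then pure dimension counting, $\dim\Hc_{\lambda_d} = \bigo{m^{2d}}$ and $s_{\lambda_d}^{-1} = \bigo{m^d}$, rather than emerging from any tensor-square twirl. To complete your route you would have to decompose $\varphi_m^n\otimes\varphi_m^n$, compute or bound each weight $\tr{W\Pi_\mu}/\dim\mu$, and prove the truncation of the pairing with $\tilde O\otimes\tilde O$---strictly harder than the paper's counting argument, and unnecessary for the stated bound.
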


\begin{proof}
    
For a fixed photon-number $n$, $\Mc^{(n)}$ admits a set of eigenvectors
$\{\hat{\gt{m}}\}_{\gt{m}\in GT(\lambda_k), 0\leq k \leq n}$, indexed by
GT-patterns, that forms an orthonormal basis of $\Ls(\Hc_{m, n})$. As
GT-patterns are a convenient way to label basis elements of the irreducible
representations of $\Hc_m^n \otimes \Hc_m^n \simeq \Ls(\Hc_m^n)$ in vectorised
form, we denote by $\hat{\gt{m}}$ the operator associated with the GT-pattern
$\gt{m}$. This implies that $\Mc^{(n)}$ as a quantum channel acts on
$\hat{\gt{m}}$ as $\Mc^{(n)}(\hat{\gt{m}}) = \beta_i\hat{\gt{m}}$. The set
$\Lambda_n = \{\beta_i\}_{1 \leq i \leq d_m^{n^2}}$ of eigenvalues of
$\Mc^{(n)}$ coincides with the set $\{s_{\lambda_k}\}_{0 \leq k \leq n}$ as
defined in \autoref{app:eq:slambdak} taken with multiplicity: each
$s_{\lambda_k}$ has multiplicity $d_{\lambda_k}$ in $\Lambda_n$. The eigenvalues
of the measurement channel can thus be written
\begin{equation}\label{eq:ithEigenvalue}
    \beta_{\gt{m}}
        = \tr{\Mc^{(n)}(\hat{\gt{m}})\hat{\gt{m}}} 
        = \sum_{\s \in \Phi_m^n}\intg{\U(m)}{\tr{\omega_m^n(U)^{\dagger}(\Pi_{\s})\hat{\gt{m}}}}{\mu_H(U)}^2.
\end{equation}

Let $\hat e_{\gt{m}} = \tr{\hat\rho^{(n)}\hat{\gt{m}}}$ be an estimate obtained
from the $n$-photon snapshot $\hat\rho^{(n)}$. Using
\autoref{eq:expSquaredTrace,eq:ithEigenvalue}, we bound the variance of the
estimate as
\begin{equation}
    \begin{aligned}    
        \|\hat{\gt{m}}\|_{\psn}^2 
            & = \max_{\sigma} \e[U] {\sum_{\s \in \Phi_m^n}
                \tr{\omega_m^n(U)^\dagger(\Pi_{\s})\sigma}
                \tr{\omega_m^n(U)(\Pi_{\s})\Mc^{(n)^{-1}}(\hat{\gt{m}})}^2} \\
            & \leq  \beta_{\gt{m}}^{-2}\intg{\U(m)}{{\sum_{\s \in \Phi_m^n}
                \tr{\omega_m^n(U)^{\dagger}(\Pi_{\s})\hat{\gt{m}}}^2}}{\mu_H(U)}\\
            & = \beta_{\gt{m}}^{-1}.
    \end{aligned}
\end{equation}
As the set $\{\hat{\gt{m}}\}_{\gt{m}}$ forms an orthonormal basis of
$\Ls(\Hc_m^n)$, any observable $O$ admits a decomposition in this basis of
the form of $O = \sum_{\gt{m}} \alpha_{\gt{m}}\hat{\gt{m}}$.
By combining the above results, we get the following generic bound:
\begin{equation}\label{eq:genericVarianceBound}
    \begin{aligned}
        \|O\|_{\psn}^2 
            \leq \sum_{\gt{m}}\alpha_{\gt{m}}^2 \|\hat{\gt{m}}\|_{\psn}^2
            \leq \sum_{\gt{m}}\alpha_{\gt{m}}^2 \beta_{\gt{m}}^{-1}.
    \end{aligned}
\end{equation}

Considering the traceless part of the observable $O_0 = O -
\frac{\tr{O}}{\dim\Hc_m^n}\Id$, by decomposing the observable along the irreducible representations, the above \autoref{eq:genericVarianceBound} is
upper-bounded by:
\begin{equation}
    \|O_0\|_{\psn}^2
   \leq
    \sum_{k=0}^{\deg(O)} s_{\lambda_k}^{-1}\sum_{\gt{m} \in GT(\lambda_k)} \alpha_{\gt{m}}^2
    \leq
    \sum_{k=1}^{\deg(O)} s_{\lambda_k}^{-1} \|\Pi_{\lambda_k}(O_0)\|_2^2 
    \leq 
    \|O\|_{\infty}^2\sum_{k=1}^n s_{\lambda_k}^{-1}  \dim \Hc_{\lambda_k}
\end{equation}
where the second step
uses $\Pi_{\lambda_0}(O_0) = 0$. 
Recall that 
\begin{equation}
    \dim \Hc_{\lambda_k} = \frac{2k - m - 1}{m-1}\binom{k+m-2}{k}^2
    \qquad  \text{and} \qquad
    s_{\lambda_k}^{-1} = \frac{2k - m - 1}{m-1}\binom{k+m-2}{k}.
\end{equation}
Combining the two above identities, we get that for a degree $d$ observable $O$,
which we recall using \autoref{pr:equivalenceDegreeIrrep} is such that
$\Pi_{\lambda_{d'}}(O) = 0$ for all $d'>d$,

\begin{equation}
\|O_0\|_{\psn}^2 \leq \|O_0\|_{\infty}^2 d \dim \Hc_{\lambda_d} s_{\lambda_d}^{-1}, 
\end{equation}
thus it holds that $\|O_0\|_{\psn}^2 = \bigo{\|O_0\|^2_{\infty}m^{3d}}$ and the
proof is complete.
\end{proof}

The main tool to characterize the sample complexity of the scheme is the
median-of-means estimators, which we recall in \autoref{def:mom}.

\begin{definition}[Median-of-means estimator
\cite{huang_predicting_2020}]\label{def:mom} Let $X$ be a random variable with
variance $\sigma^2$. Then, $K$ independent sample means of size $N =
\bigo{\sigma^2/\varepsilon^2}$ suffice to construct a median-of-means estimator
$\mu(N, K)$ that obeys 
\begin{equation}
    \pr{|\mu(N, K) - \e{X}| \geq \varepsilon} \leq 2 e^{K/2}
\end{equation}
for all $\varepsilon > 0$.
\end{definition}

We consider a classical shadow consisting in a collection of $NK$ snapshots
$\{\hat \rho_1, \cdots \hat \rho_{NK}\}$. The final estimate $\hat o_i$ is
obtained using the median-of-mean estimator:
\begin{equation}
    \hat o_i(N, K) = \mathrm{median}\aco \hat o_i^{(k)}(N, k)\acf_{k = 1}^K,
\end{equation}
where ${\hat o_i^{(k)}(N, k) = \frac{1}{N} \sum_{j =N(k-1) +1}^{Nk} \tr{o_i \hat
\rho_j}}$ for $1 \leq k \leq K$ is a sample mean obtained from $N$ independent
snapshots.

\begin{theorem}[Sample complexity]\label{thm:sampleComplexity} Given a
    collection $O_1, \cdots, O_T$ quantum observables and accuracy parameters
    $\varepsilon, \delta \in [0,1]$, set 
    \begin{equation}
        N = \bigo{\varepsilon^{-2} \max_{t} \| O_{t0}\|_{\psn}^2}
        \qquad \text{and} \qquad
        K = 2 \log(2T/\delta),
    \end{equation}
    then a classical shadow $\S(\rho, NK)$ allows, via the median-of-means
    estimator, for estimating with probability $1-\delta$ all properties
    $\tr{O_i\rho}$ to within additive precision $\pm\varepsilon$.
\end{theorem}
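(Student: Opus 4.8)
The plan is to assemble the theorem from three ingredients that are already in place: the unbiasedness of the single-shot estimator, the variance bound of \autoref{lm:variance}, and the concentration guarantee of the median-of-means estimator (\autoref{def:mom}). The guiding observation is that a classical shadow $\S(\rho, NK)$ is a collection of $NK$ i.i.d.\ snapshots that does not depend on which observable we wish to estimate; hence the same shadow can be reused for all $T$ observables simultaneously, and the only remaining work is to convert the per-observable variance control into a simultaneous high-probability guarantee.

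First I would fix an observable $O_t$ and record that its shadow estimate $\hat o_t = \tr{O_t \hat\rho}$ is unbiased for $\tr{O_t \rho}$. Since PNR detection erases coherences between distinct photon-number sectors, the estimator in fact targets $\tr{O_t \Pc(\rho)}$, as in \autoref{eq:expectationValueFromSnapshot}; under the standing assumption that $O_t$ is block-diagonal in photon number (equivalently, after replacing $O_t$ by its block-diagonal part) this coincides with $\tr{O_t \rho}$. Next I would invoke \autoref{lm:variance} to bound $\var{\hat o_t} \leq \|O_{t0}\|_{\psn}^2$, where $O_{t0}$ is the traceless part, using the fact that the variance is insensitive to the identity component. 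Choosing $N = \bigo{\varepsilon^{-2}\max_t \|O_{t0}\|_{\psn}^2}$ then guarantees $N \geq c\,\var{\hat o_t}/\varepsilon^2$ simultaneously for every $t$, which is exactly the sample size demanded by \autoref{def:mom}; the $\max_t$ is precisely what makes a single choice of $N$ valid across the whole collection.

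With these bounds fixed, I would apply the median-of-means construction of \autoref{def:mom} to the $K$ disjoint groups of $N$ snapshots: for each $t$, the estimator $\hat o_t(N,K)$ obeys $\pr{|\hat o_t(N,K) - \tr{O_t\rho}| \geq \varepsilon} \leq 2e^{-K/2}$, the required independence holding because the $K$ group means are computed from disjoint, i.i.d.\ blocks of the shadow. A union bound over the $T$ observables then caps the total failure probability by $2T e^{-K/2}$, and the choice $K = 2\log(2T/\delta)$ makes this at most $\delta$, yielding the claimed simultaneous success with probability $1-\delta$; the total shadow size is $NK$, as stated. I do not expect a genuine obstacle here: the analytic heavy lifting lives in \autoref{lm:variance,lm:PSNUpperBound}, and the remaining steps are the standard median-of-means-plus-union-bound bookkeeping. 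The only points demanding care are confirming that the single shadow $\S(\rho,NK)$ legitimately serves all $T$ observables at once and that the block-diagonal reduction of each $O_t$ leaves the target quantity $\tr{O_t\rho}$ unchanged.
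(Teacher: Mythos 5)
Your proposal is correct and follows essentially the same route as the paper's proof: the variance bound of \autoref{lm:variance} fixes $N$, the median-of-means guarantee of \autoref{def:mom} with $K = 2\log(2T/\delta)$ drives the per-observable failure probability down to $\delta/T$, and a union bound over the $T$ observables finishes the argument (the paper leaves the union bound implicit, and you correctly write the tail as $2e^{-K/2}$, fixing a sign typo in the paper's statement of \autoref{def:mom}). Your added care about the estimator targeting $\tr{O_t\,\Pc(\rho)}$ rather than $\tr{O_t\rho}$ is a welcome explicit treatment of a reduction the paper makes silently via its standing block-diagonality assumption.
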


\begin{proof}
From \autoref{lm:variance} one can compute an estimate of $\hat o_i(N, K)$ to
within additive error $\epsilon$ with probability $1-\delta$ by setting 
\begin{equation}
    N = \bigo{\varepsilon^{-2} \max_{t} \| O_{t0}\|_{\psn}^2
    },
\end{equation}
where the value of $N$ follows from the variance of the estimator and the performance bound of the median-of-mean
estimator. $K$ is chosen such that 
\begin{equation}
    \pr{ |\hat o_i(N, K) - \tr{O_i \rho} | \geq \varepsilon } \leq \frac{\delta}{T},
\end{equation}
i.e., obtain that merely $K = 2 \log(2T/\delta)$ independent estimates $\hat
o_i^{(k)}(N, k)$ are needed to compute an estimate of $o_i$ using the
median-of-means estimator (see \autoref{def:mom}).
\end{proof}

\subsection{Time-complexity analysis}\label{app:timeComplexity}
The running time of the protocol is often overlooked, yet it constitutes a
critical bottleneck for the overall efficiency of classical shadow methods.
In this section, we provide two techniques for performing the post-processing.
The first technique detailed in \autoref{app:approximationTechnique} is based on
the best known approximation algorithm \cite{lim_efficient_2025}, for which the
main figure of merit is the eigenspectrum of the observables. The second method,
detailed in \autoref{app:exactTechnique} is based on an algorithm for computing
exactly expectation values of observables in time scaling exponentially with
their degree.

This further stresses the role of low dimensional irreducible representations of
$\Ls(\Hc_m^n)$, as we show that the time complexity of the protocol scales
with the dimension of the irreps the observables have overlap with. Recall that
the goal is to collectively estimate the expectation value of $M$ observables
$\{O_1, \dots, O_M\}$. 

These techniques are independent and complementary and must be chosen depending
on the observables of interest. Nonetheless, the exact algorithm seems more
useful in practice as the approximate algorithm exhibits a time-dependency on
the observables $2$-norm which is likely exponential in the photon-number for
low-degree observables.

\subsubsection{Exact algorithm}\label{app:exactTechnique}

In this section, we prove with \autoref{pr:degreeKObs} that the expectation
value of a degree-$k$ monomial can be computed exactly in time scaling
exponentially with $k$. This further implies that the expected value of
constant-degree observables can be computed in polynomial time, matching the
sample-efficient regime for the classical shadow protocol. This follows from the
fact that the number of independent degree-$k$ monomials is also exponential in
$k$.

\begin{proposition}\label{pr:degreeKObs} Let $\bm p, \bm q \in \Phi_m^k$ be
    multi-indices and define ${E}_{\bm p, \bm q}=a_1^{\dagger p_1}\cdots
    a_m^{\dagger p_m} a_1^{q_1} \cdots a_m^{q_m}$ to be a monomial observable of
    degree $k$, let $U \in \U(m)$ be a unitary matrix and $\bs n \in \Phi_m^n$,
    i.e., $\ket{\bs n}$ is a Fock basis state. Denote by $\ket{\psi} =
    \varphi_m^n(U)\ket{\bs n}$. Then, $\expval*{{E}_{\bm p, \bm
    q}}_{\ket{\psi}}$ can be computed exactly in time $\bigo{k2^k\binom{\min(k,
    n-k) + s - 1}{s - 1}}$ with $s = |\{i \ |\ n_i > 0\}|$ via
    \begin{equation}
    \expval*{{E}_{\bm p, \bm q}}_{\ket{\psi}}  
        = \sum_{\substack{\bm \ell \in \Phi_m^k \\ \bm \ell \preccurlyeq \bm n}}\frac{\bm n!}{(\bm n - \bm \ell)!\bm\ell!^2} \per{U_{\bm p, \bm \ell}}^* \per{U_{\bm q, \bm \ell}},
    \end{equation}
    where $\bm \ell \preccurlyeq \bm n$ is the partial entry-wise order.
\end{proposition}

\begin{proof}
We introduce vector $\vec i, \vec j \in \{1, \cdots, m\}^k$, such that $E_{\bm
p, \bm q} = a_{i_1}^\dagger \cdots a_{i_k}^\dagger a_{j_1}\cdots a_{j_k}$. For
such vectors, we introduce the function $\varDelta_{\vec i, \vec j}$ defined as 
\begin{equation}
    \varDelta_{\vec i, \vec j} = \begin{cases}
        1 & \text{if } \exists\ \sigma \in S_k: \vec i = \sigma(\vec j),\\
        0 & \text{otherwise},
    \end{cases}
\end{equation}
that is to say, $\varDelta_{\vec i, \vec j} = \delta_{\bm p, \bm q}$. Now,
observe that as ${E}_{\vec i, \vec j} = \prod_{\alpha=1}^m
a_{i_\alpha}^{\dagger} a_{j_\alpha}$, it holds that
\begin{equation}\label{eq:expValHopping}
        \expval*{{E}_{\bm p, \bm q}}_{\ket{\bs n}}
            = \delta_{\bm p, \bm q} \frac{\n!}{(\n - \bm p)!} 
            = \varDelta_{\vec{i}, \vec{j}} \prod_{\substack{\alpha = 1\\ 0 \leq \#_{\vec i}(\alpha) \leq n_{i_\alpha}}}^m\frac{n_{i_\alpha}!}{(n_{i_\alpha} - \#_{\vec i}(\alpha))!},
\end{equation}
where we defined $\#_{\vec i}(\alpha)$ as the number of $\alpha$'s in $\vec i$,
i.e., $\#_{\vec i}(\alpha) = \sum_{\beta = 1}^k \delta_{i_\beta, \alpha}$.
\begin{subequations}
    \begin{align}
        \expval*{{E}_{\bm p, \bm q}}_{\ket{\psi}} 
            & = \expval*{\prod_{\alpha = 1}^k \lpr\sum_{x_\alpha = 1}^m U_{i_\alpha, x_\alpha}^*a^\dagger_{x_\alpha} \rpr
                \prod_{\beta = 1}^k \lpr \sum_{y_\alpha = 1}^m  U_{j_\beta, y_\beta}a_{y_\beta}\rpr }_{\ket{\bs n}} \label{subeq:line1}\\
            & = \sum_{\vec x, \vec y \in \{1, \cdots, m\}^k} \expval*{\prod_{\alpha, \beta = 1}^k  U_{i_\alpha, x_\alpha}^* U_{j_\beta, y_\beta}a^\dagger_{x_\alpha} a_{y_\beta} }_{\ket{\bs n}}\\
            & = \sum_{\vec x, \vec y \in \{1, \cdots, m\}^k} \prod_{\alpha, \beta = 1}^k  U_{i_\alpha, x_\alpha}^* U_{j_\beta, y_\beta}\expval*{ a^\dagger_{\vec x} a_{\vec y} }_{\ket{\bs n}}\\
            & = \sum_{\vec x, \vec y \in \{1, \cdots, m\}^k} \varDelta_{\vec x, \vec y}  \prod_{\alpha, \beta = 1}^k  U_{i_\alpha, x_\alpha}^* U_{j_\beta, y_\beta} \prod_{\gamma=1}^m \frac{n_\gamma!}{(n_\gamma - \#_{\vec \ell}(\gamma))!}\label{subeq:line4}\\
            & = \sum_{\vec x \in \{1, \cdots, m\}^k}  \prod_{\alpha = 1}^k  U_{i_\alpha, x_\alpha}^* \lpr \sum_{\sigma \in S(\vec x)}  \prod_{\beta = 1}^k U_{j_\beta, \sigma(\vec x)_\beta} \rpr \prod_{\gamma=1}^m \frac{n_\gamma!}{(n_\gamma - \#_{\vec x}(\gamma))!}\label{subeq:line5}\\
            & = \sum_{\ell_1\leq \cdots \leq \ell_k = 1}^m \sum_{\tau \in S(\vec \ell)}\prod_{\alpha = 1}^k  U_{i_\alpha, \tau(\vec \ell)_\alpha}^* \lpr \sum_{\sigma \in S(\tau(\vec \ell))}  \prod_{\beta = 1}^k U_{j_\beta, \sigma \circ \tau(\vec \ell)_\beta} \rpr \prod_{\gamma=1}^m \frac{n_\gamma!}{(n_\gamma - \#_{\vec \ell}(\gamma))!}\\
            & = \sum_{\ell_1\leq \cdots \leq \ell_k = 1}^m \lpr \sum_{\tau \in S(\vec \ell)} \prod_{\alpha = 1}^k  U_{i_\alpha, \tau(\vec \ell)_\alpha}^* \rpr \lpr \sum_{\sigma \in S(\vec \ell)}  \prod_{\beta = 1}^k U_{j_\beta, \sigma(\vec \ell)_\beta} \rpr \prod_{\gamma=1}^m \frac{n_\gamma!}{(n_\gamma - \#_{\vec\ell}(\gamma))!}\label{subeq:line7}.
    \end{align}
\end{subequations}

\autoref{subeq:line1} expands the evolution of creation and annihilation
operators under unitary transformation (see
\autoref{eq:evolutionCreationOperators}). \autoref{subeq:line4} uses the
identity of \autoref{eq:expValHopping}. In \autoref{subeq:line5}, the delta
function is simplified and $S(\vec x)$ ranges over all permutations of $\vec x$.
Finally, \autoref{subeq:line7} follows from the fact that ${S(\vec \ell) =
S(\sigma(\vec\ell))}$ for any permutation $\sigma \in S(\vec \ell)$. 

Recall that $\vec i, \vec j$ are obtained from two multi-indices $\bm p, \bm q$.
A careful review of the expressions in the brackets of \autoref{subeq:line7}
reveals the formula of the matrix permanent and the above reduces to
\begin{equation}\label{eq:expValPerm}
    \expval*{{E}_{\bm p, \bm q}}_{\ket{\psi}}  
    = \sum_{\substack{\bm \ell \in \Phi_m^k \\ \bm \ell \preccurlyeq \bm n}}\frac{\bm n!}{(\bm n - \bm \ell)!\bm\ell!^2} \per{U_{\bm p, \bm \ell}}^* \per{U_{\bm q, \bm \ell}}.
\end{equation}

The matrices $U_{\bm p, \bm \ell}$ and $U_{\bm q, \bm \ell}$ are $k\times k$
matrices whose permanent can be computed exactly in time $\bigo{k2^k}$ using
Ryser's method \cite{ryser_combinatorial_1973}. For large $k$, it is still
possible to compute the permanent exactly if the matrix has small rank using
Barvinok's algorithm \cite{barvinok_two_1996}. 

Indeed, there are at most $|\Phi_m^k| = \binom{m+k-1}{k}$ such matrix
permanents. However, this bound can be very loose, as for instance if $\n = (n,
0, \dots, 0)$, only $\bm \ell = (k, 0, \dots, 0)$ satisfies $\bm \ell
\preccurlyeq \n$; this nonetheless suffices for our approach as we are limited
to $k = O(1)$.  As a final remark, we observe that $\bm p = \bm q$ allows us to
recover the result of \cite{mayer_counting_2011}.
\end{proof}

Now, we show how the projection of an operator in a given irrep is found as a
linear combination of fixed degree monomials. Formally, we prove the next
\autoref{lm:shadowEstimateCstDegree} using the following technical
\autoref{pr:individualProj}.
\begin{proposition}\label{pr:individualProj} For $k>0$, it holds that
    \begin{equation}\label{eq:recursiveProj}
        \Pi_{\lambda_k}(O) = 
            \sum_{{E}_{\bm p, \bm q} \in \Bc_{m, k}} 
                \tr{\lpr O - \sum_{\ell = 0}^{k-1}\Pi_{\lambda_{\ell}}(O)\rpr^\dagger {E}_{\bm p, \bm q}}\mathfrak{E}_{\bm p, \bm q},
    \end{equation}
    where $\Bc_{m, k}$ is the set of all independent degree $k$ monomials,
    satisfying $\abs{\ \Bc_{m, k}} = \dim \Hc_{\lambda_k} = \binom{m+k-1}{k}^2 -
    \binom{m+k-2}{k-1}^2$, and $\{\mathfrak{E}_{\bm p, \bm q}\}_{\bm p, \bm q}$
    is a set of operators satisfying the orthonormality relation
    $\tr{\mathfrak{E}_{\bm i, \bm j}{E}_{\bm k, \bm \ell}}=\delta_{\bm i, \bm
    k}\delta_{\bm j, \bm \ell}$.
\end{proposition}
\begin{proof}
First, from \autoref{eq:projTrivialIrrep} we have the projection onto the
trivial irrep satisfies
\begin{equation}
    \Pi_{\lambda_0} \lpr O \rpr = \frac{\tr{O}}{\dim \Hc_{m, n}}\Id_{m, n}.
\end{equation}
We observe that $O - \sum_{\ell = 0}^{k-1}\Pi_{\lambda_{\ell}}(O)$ has indeed no
support on the first $k$ isotypic subspaces $\Hc_{\lambda_0}, \cdots,
\Hc_{\lambda_{k-1}}$ of $\Ls(\Hc_m^n)$. Then, from
\autoref{pr:equivalenceDegreeIrrep} follows that any $E_{\bm p, \bm q} \in
\Bc_{m, k}$ is such that $\Pi_{\lambda_{k'}}(E_{\bm p, \bm q}) = 0$ for all $k'
> k$. As we note that $\Bc_{m, k}$ spans $\Hc_{\lambda_k}$ as it is a basis of
$\Ls(\Hc_m^k)$, the sum of \autoref{eq:recursiveProj} runs over polynomially
many summands. However, $\Bc_{m, k}$ does not form an orthonormal basis. In this
light, let $\{\mathfrak{E}_{\bm p, \bm q}\}_{\bm p, \bm q}$ be a set of
operators satisfying $\tr{\mathfrak{E}_{\bm i, \bm j}{E}_{\bm k, \bm
\ell}}=\delta_{\bm i, \bm k}\delta_{\bm j, \bm \ell}$, such that any operator
can be decomposed as
\begin{equation}
    \rho = \sum_{\bm p, \bm q} \tr{{E}_{\bm p, \bm q} \rho}\mathfrak{E}_{\bm p, \bm q}.
\end{equation}

Now, a word on the time complexity on computing the traces. Let $\Gamma_{\bm p}
= \{ t : p_t > 0 \}$ be the set of distinct modes a diagonal operator ${E}_{\bm
p, \bm p}$ acts on, satisfying ${\lb\Gamma_{\bm p} \rb = d \leq \deg {E}_{\bm
p, \bm p}}$. Let us reorder the elements of $\bm p$ in such a way that the
nonzero elements appear first and denote by $\tilde{\bm p}$ the nonzero
elements, e.g., $\bm p = (0, 2, 0, 1)$ becomes $(2, 1, 0, 0)$ and $\tilde p =
(2, 1)$. This step amount to permuting the modes. Then, thanks to
\autoref{eq:expValHopping}, we find that
\begin{equation}
    \begin{aligned}
        \tr{{E}_{\bm p, \bm q}}
        & = \sum_{\bs n \in \Phi_m^n} \mel{\bs n}{{E}_{\bm p, \bm q}}{\bs n} \\
        & = \delta_{\bm p, \bm q} \sum_{\bs n \in \Phi_m^n} \frac{\bm n!}{(\bm n - \bm p)!}\\
        & = \delta_{\bm p, \bm q} \sum_{s = 1}^n  \binom{m-d + n-s - 1}{n-s} \sum_{\tilde{\bs n} \in \Phi_d^s} \frac{\tilde{\bs n}!}{(\tilde{\bs n} - \tilde{\bs p})!}
    \end{aligned}
\end{equation}
where the last step accounts for all possible ways of occupying the remaining
$m-d$ modes by $n-s$ photons. There is a total of $\sum_{s=1}^n \lb \Phi_d^s \rb
= \binom{n+d}{d} - 1$ summands, which makes the computation time of
$\tr{{E}_{\bm p, \bm q}}$ polynomial in $n$ (of degree $d$) when the degree of
the monomial is constant (as this implies $d = O(1)$).

\end{proof}

\begin{lemma}\label{lm:shadowEstimateCstDegree} Let $O$ be a constant-degree
    observable. Then, the shadow estimator $\hat o = \tr{O \hat\rho_{(U_i, \bs
    s_i)}}$ can be computed exactly in polynomial-time.
\end{lemma}
\begin{proof}
    The self-adjointness property of the channel allows us to apply its inverse
    to the observable rather than $\varphi_m(U_i)^\dagger\ketbra{\bs
    s_i}\varphi_m(U_i)$. We combine \autoref{pr:degreeKObs,pr:individualProj} to
    give a recipe for computing $\tr{O \hat\rho_{(U_i, \bs s_i)}}$ scaling
    exponentially with the degree of the observable and polynomially with the
    photon-number and modes. Expanding the action of the channel along each
    isotypic subspace of $\Ls(\Hc_m^n)$ yields
    \begin{equation}
        \begin{aligned}
            \tr{O \hat\rho_{(U_i, \bs s_i)}}
            = \expval*{\Mc^{(n)^{-1}}(O)}_{\varphi_m^{n}(U_i^\dagger)\ket{\bs s_i}}
            = \sum_{k = 0}^{\deg O} s_{\lambda_k}^{-1}\expval{\Pi_{\lambda_k}(O)}_{\varphi_m^{n}(U_i^\dagger)\ket{\bs s_i}}.
        \end{aligned}
    \end{equation}
    \autoref{pr:individualProj} states that $\Pi_{\lambda_k}(O)$ is supported by
    at most $\dim \Hc_{\lambda_k}$ degree-$k$ monomials, the expectation value
    of each of which can be computed efficiently if $k = O(1)$ from
    \autoref{pr:degreeKObs}. Moreover, observe that $\dim \Hc_{\lambda_k}$ is
    polynomial (in $m$) when $k = O(1)$, i.e., there is a polynomial number of
    such expectation values to compute. 
\end{proof}

\subsubsection{Approximation algorithm}\label{app:approximationTechnique}

A useful result to classically estimate expectation value of operators in a
linear optical is stated in \autoref{pr:estimateLOObervables}.

\begin{proposition}[Approximating observables expectation values in linear
optics \cite{lim_efficient_2025}]\label{pr:estimateLOObervables} Consider an
$m$-mode linear optical circuit $U$ and a product observable $O$. Then, for an
input product state $\ket\psi$ the expectation value
$\tr{OU\ketbra*{\psi}U^{\dagger}}$ can be approximated within additive error
$\varepsilon$ w.p. $1-\delta$ in time
\begin{equation}
    \bigo{\frac{m^2}{\varepsilon^2} \log\frac{1}{\delta}\|O\|_2^2}.
\end{equation}
\end{proposition}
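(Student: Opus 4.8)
The plan is to reconstruct the randomized classical algorithm behind \cite{lim_efficient_2025} and phrase it for the present setting: for each $t$ build a cheap unbiased estimator of $o_t = \tr{O_t U\ketbra{\psi}U^{\dagger}} = \bra\psi U^{\dagger}O_t U\ket\psi$, then boost and collect these estimators with the median-of-means device of \autoref{def:mom} together with a union bound over the $T$ observables. First I would exploit the product structure of both $\ket\psi$ and $O_t$: expanding each single-mode factor of $O_t$ in a fixed orthonormal operator basis and linearising the $n$-fold products over photons by a Glynn/Gurvits-type randomisation $\mathbf x$ (a vector of random phases, or a Gaussian vector), one writes $o_t = \e[\mathbf x]{\hat o_t(\mathbf x)}$, where a single draw of $\hat o_t(\mathbf x)$ requires one propagation of $\mathbf x$ through the mode transformation $U$ --- a single $m\times m$ matrix--vector multiplication, hence $\bigo{m^2}$ time --- followed by a mode-by-mode evaluation of the local factors. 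By construction this estimator is unbiased for $o_t$.

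The technical heart is then to show $\e[\mathbf x]{\lvert\hat o_t(\mathbf x)\rvert^2} = \bigo{\|O_t\|_2^2}$. Two facts drive this. Conjugation by the passive transformation $U$ is an isometry of the Hilbert--Schmidt norm, so $\|U^{\dagger}O_tU\|_2 = \|O_t\|_2$; and the chosen randomisation has second moment equal, up to an absolute constant, to the sum of squared coefficients of $O_t$ in the orthonormal basis, which is $\|O_t\|_2^2$ by Parseval's identity. Together these pin the single-sample variance to $\bigo{\|O_t\|_2^2}$, independently of $m$ and of the photon number.

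With variance $\bigo{\|O_t\|_2^2}$, a sample mean over $N = \bigo{\|O_t\|_2^2/\varepsilon^2}$ independent draws has standard deviation $\bigo{\varepsilon}$. Feeding $K = \bigo{\log(T/\delta)}$ such means into the median-of-means estimator of \autoref{def:mom} yields an estimate of each $o_t$ within $\pm\varepsilon$ with failure probability at most $\delta/T$; a union bound over $t=1,\dots,T$ then gives overall success probability $1-\delta$. Since a single stream of random draws can be reused for all $T$ observables, the collective cost is the total sample count $NK$ times the per-sample cost $\bigo{m^2}$, which rearranges into the claimed $\bigo{\tfrac{m^2}{\varepsilon^2}\log\tfrac1\delta\,\max_t\|O_t\|_2^2\,\log\tfrac T2}$.

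The step I expect to be the \emph{main obstacle} is the simultaneous control in the first two paragraphs. The Heisenberg-evolved observable $U^{\dagger}O_tU$ is generically dense across all $m$ modes, so a naive evaluation is expensive, and one must use the product form of $O_t$ and $\ket\psi$ together with the mode-wise linear action of $U$ to keep each sample at cost $\bigo{m^2}$ while Parseval and the unitary invariance of $\|\cdot\|_2$ simultaneously force the variance down to $\|O_t\|_2^2$. Producing an estimator that is cheap, unbiased, and low-variance all at once --- rather than trading one of these against another --- is the crux.
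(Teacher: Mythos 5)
A preliminary point: the paper itself does not prove \autoref{pr:estimateLOObervables}. It is imported verbatim from \cite{lim_efficient_2025} and used as a black box in the proof of \autoref{lm:approxhatoi}, so there is no in-paper argument to compare against; the only question is whether your blind reconstruction would actually establish the cited result. It would not, as written. Your final paragraph (median-of-means via \autoref{def:mom}, union bound over the $T$ observables, cost accounting at $\bigo{m^2}$ per sample) is routine and correct, but the entire weight of the proof sits in your second paragraph, where you assert that a Glynn/Gurvits-type randomisation yields an estimator that is simultaneously unbiased, computable in $\bigo{m^2}$ time per draw, and of second moment $\bigo{\|O_t\|_2^2}$. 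The two ``facts'' you invoke do not deliver this. Unitary invariance of the Hilbert--Schmidt norm applies to conjugation by the Fock-space representation $\varphi_m^n(U)$ acting on $\Ls(\Hc_m^n)$, whereas your estimator propagates an $m$-dimensional random vector through the $m\times m$ matrix $U$; relating the second moment of that mode-level randomisation to the $n$-photon Hilbert--Schmidt norm of $O_t$ is precisely the nontrivial content of \cite{lim_efficient_2025}, not a consequence of Parseval. For Gurvits-style estimators the naive second moment involves products of local norms (or row norms of $U$-dependent matrices), which can substantially exceed $\|O_t\|_2^2$; taming it to the global 2-norm requires the specific decomposition and importance sampling of the cited work.

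Two further points your sketch glosses over. The evolved state $U\ket{\psi}$ is \emph{not} a product state in the Fock basis --- its amplitudes are permanents, per \autoref{app:eq:bsamplitudes} --- so ``mode-by-mode evaluation of the local factors'' after a single matrix--vector multiplication is exactly the step that needs justification, and it is where the linearisation trick must do real work. Relatedly, the stated runtime is independent of the photon number $n$, a genuinely surprising feature that any correct proof must explain and that your argument never addresses. You candidly flag the construction of a cheap, unbiased, low-variance estimator as ``the crux'' and ``the main obstacle'' --- but naming the crux is not supplying it, so the proposal amounts to the theorem statement plus standard concentration boilerplate, with the essential estimator-construction and variance lemma assumed rather than proven.
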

The time complexity stated in \autoref{pr:estimateLOObervables} can be improved
for observables acting nontrivially on only a subsystem of $\varphi_m^n(U)\ket{\psi}$.
Nonetheless, it allows us to derive the following \autoref{lm:approxhatoi}.

\begin{lemma}[Approximation of $\hat o_i(N, K)$]\label{lm:approxhatoi}
    One gets an additive error approximation of $\hat o_i(N, K)$ to within
    $\pm \varepsilon$ with probability $1 - \delta$ in time 
    \begin{equation}
        \bigo{NK\lpr \frac{mn}{\varepsilon s_{\lambda_{\deg(O_i)}}} \rpr^2
            \log\lpr\frac{n}{\delta}\rpr\max_{t, \ell} \|\Pi_{\ell}(O_t)\|_2^2}.
    \end{equation}
\end{lemma}
\begin{proof}
    
Considering \autoref{eq:pluggingChannelInF},  we can write 
\begin{equation}
   \begin{aligned}
    \hat o_i^{(k)}(N, k) 
        & = \frac{1}{N} \sum_{j =N(k-1) +1}^{Nk} \tr{O_i \hat \rho_j} \\
        & = \frac{1}{N} \sum_{j =N(k-1) +1}^{Nk} \sum_{\ell = 0}^n \frac{1}{s_{\lambda_\ell}}\tr{\omega_m^n(U_j)(\Pi_{\lambda_\ell} (O_i)) \ketbra{\s_j}}, \\ 
   \end{aligned} 
\end{equation}
where $\hat\rho_j$ is assumed to be the snapshot associated with unitary matrix
$U$ and measurement outcome $\ket{\s}$. As $\Pi_{\lambda_0}(O)$ admits a closed
form of any $O$ (see \autoref{eq:projTrivialIrrep}) estimating $f_O(U, \s)$
using \autoref{pr:estimateLOObervables} reduces to approximating $N$ collections
of $n$ expectation value of observables (each being a projection of the
observable in the isotypic subspaces of $\Ls(\Hc_m^n)$). 

We use \autoref{pr:estimateLOObervables} to estimate individually each $\tr{O_i
\hat \rho_j}$ for all $i$ and $j$. We explained in with
\autoref{pr:individualProj} how each projection can be obtained in a recursive
way. This allows us to get an estimate of the inner sum to within
$\pm\varepsilon n/s_{\lambda_n}$. Therefore, the error in each approximation
must grow like $\varepsilon s_{\lambda_{\deg(O_i)}}/n$. Hence, getting a $\pm
\varepsilon$ approximation of $\hat o_i^{(k)}(N, k)$ takes classical time
\begin{equation}
    \bigo{Nn \lpr \frac{mn}{\varepsilon s_{\lambda_{d_{\max}}}} \rpr^2
        \log\lpr\frac{1}{\delta}\rpr\max_{i, \ell} \|\Pi_{\ell}(O_i)\|_2^2},
\end{equation}
where $d_{\max} = \max_t\deg(O_t)$. To conclude, observe that this must be
applied to all the $K$ means required in the median-of-means estimator.
\end{proof}

The dependency in $1/s_{\lambda_{d_{\max}}}$ implies an exponential running time
in the general case as $1 \leq d_{\max} \leq n$. With the same reasoning, this
running becomes polynomial (in $n$ and $m$) whenever the observables only have
components in the low-dimensional isotypic subspaces of $\Ls(\Hc_m^n)$.

\section{Pseudo-photon number resolving measurement}\label{sec:ppnr}

\paragraph{Principle of Pseudo-PNR measurement.} The protocol we present assumes
access to photon-number detectors, i.e., detectors that output the occupation
number of the measured mode. However, the complexity of building such detectors
grows with the maximal photon-number they are expected to resolve. Simpler
detectors---threshold detectors---distinguishing between the vacuum and one or
more photons are more widespread. It is possible to mimic the behaviour of PNR
detectors using threshold detectors in a larger interferometer as follows. A
small $p$-mode interferometer $\Fc_p$ is appended to each of the $m$ output
modes, thus forming a new $M=mp$ mode interferometer 
\begin{equation}
    \tilde U = (\Fc_p)^{\otimes m} P (U \otimes \Id_{(m-1)p}),
\end{equation}
where $P$ is a suitable permutation that ensures that each output modes of $U$
becomes the top mode of one of the $\Fc_p$. The $(m-1)p$ additional modes are
filled with the vacuum, so that the new $M$-mode input state is $\rho \otimes
\ket{0}^{(m-1)p}$. An example is shown in \autoref{fig:ascella} for $p=3$ and
$m=4$.

As threshold detectors only distinguish whether the mode is occupied, the
purpose of $\Fc_p$ is to maximize the spread of the input photons among the $p$
modes. Formally, for an input state $\ket{n} \otimes \ket{0}^{\otimes p-1}$, the
goal is for $\Fc_p$ to maximize 
\begin{equation}
    g_{p, n} = \sum_{\substack{\bs b \in \{0,1\}^p \\ |\bs b| = n}} 
    \abs{\mel*{\bs b}{\varphi_p^n(\Fc_p)}{n, 0, \cdots, 0}}^2,
\end{equation}
where the sum ranges over all detection events where the threshold detectors
clicked for a single photon. This indeed enforces the constraint $p \geq n$. In
the case $p=n$, $g_{p, n}$ is maximised by the interferometer implementing the
discrete Fourier transform defined as 
\begin{equation}
    \Fc_p = \frac{1}{\sqrt{p}}(\omega^{kl})_{0\leq k,l\leq p-1},
\end{equation}
with $\omega = e^{-2\imath \pi/p}$ is a primitive $p$-th root of unity
\cite{aaronson_generalizing_2014}. We use this interferometer even for the cases
$p \neq n$,
in which case according to \autoref{app:eq:bsamplitudes} 
\begin{equation}\label{app:eq:scalingFactorPPNRT}
    g_{p, n} = \binom{p}{n} \frac{n!}{p^n}.
\end{equation}

Each PNR detector is therefore emulated by several threshold detectors and a
Fourier interferometer. Upon measuring a bit-string $\bs b$ at the output of the
Fourier interferometer, the occupation for the mode corresponding to this
pseudo-PNR circuit is $|\bs b|$. However, the fact that $g_{p,n} \neq 1$ implies
that some measurement outcomes are discarded as threshold detectors cannot
resolve all occupations. This results in a distribution---which refer to as the
pseudo-PNR distribution---that is biased with respect to the actual
distribution. We now described how this bias can be mitigated.

With PPNR, however, we can only reconstruct a single block $\Pc^{(n)}(\rho)$
which induces an additional constraint on the photon-number. This constraint is
a result of the experimental apparatus we use for the demonstration rather than
a limitation of the protocol itself. More precisely, if $\rho$ is a
superposition of different photon-number states, the only block one can
construct using pseudo-PNR is that corresponding to the maximal photon-number.
This comes from the fact that with pseudo-PNR, it is not possible to distinguish
between the measurement of an actual $(n-1)$-photon Fock state and two photons
clicking the same threshold detector (yielding $n-1$ clicks on an $n$-photon
state) when $n$-photon can be expected.

\paragraph{Mitigating pseudo-PNR distributions.}
We fix $n$ without loss of generality and assume that $p \geq n$ as otherwise
some output states, those with occupation larger than $p$, cannot be resolved.
To describe the bias, we show that for any two $\s, \t \in \Phi_m^n$, 
\begin{equation}
\abs{\mel{\s}{\varphi_m^n(U)}{\t}}^2\prod_{1 \leq i \leq m} g_{p, s_i} = 
\sum_{\substack{\bs b_1, \cdots, \bs b_m \in \{0, 1\}^{p}\\ |\bs b_i| = s_i}} \abs{\mel*{(\bs b_1, \cdots, \bs b_m)}{\varphi_M^n(\tilde U)}{\bs t, 0^{\otimes (m-1)p}}}^2,
\end{equation}
i.e., that each probability of the pseudo-PNR distribution is biased by a known
factor. In order to remove the explicit application of $\varphi_m$ we note that
as it is a group homomorphism, it satisfies $\varphi_m(AB) =
\varphi_m(A)\varphi_m(B)$ and $\varphi_{mp}(A^{\otimes m}) =
\varphi_p(A)^{\otimes m}$.
\begin{equation}\label{eq:derivationScalingFactorPPNR}
    \begin{aligned}
        \sum_{\substack{\bs b_1, \cdots, \bs b_m \in \{0, 1\}^{p}\\ |\bs b_i| = s_i}} 
            \abs{\mel*{\bs b_1, \cdots, \bs b_m}{\tilde U}{\bs t, 0^{\otimes (m-1)p}}}^2 
        & = \sum_{\substack{\bs b_1, \cdots, \bs b_m \in \{0, 1\}^{p}\\ |\bs b_i| = s_i}} 
            \abs{\mel*{\bs b_1, \cdots, \bs b_m}{\Fc_p^{\otimes m} P (U \otimes \Id_{(m-1)p})}{\bs t, 0^{\otimes (m-1)p}}}^2 \\
        & =  \sum_{\substack{\bs b_1, \cdots, \bs b_m \in \{0, 1\}^{p}\\ |\bs b_i| = s_i}} 
            \prod_{1 \leq i \leq m}\frac{s_i!}{p^{s_i}}\abs{\mel*{\bs s, 0^{\otimes (m-1)p}}{U \otimes \Id_{(m-1)p}}{\bs t, 0^{\otimes (m-1)p}}}^2 \\
        & =  \abs{\mel*{\bs s}{U}{\bs t}}^2
            \sum_{\substack{\bs b_1, \cdots, \bs b_m \in \{0, 1\}^{p}\\ |\bs b_i| = s_i}} 
                \prod_{1 \leq i \leq m}\frac{s_i!}{p^{s_i}} \\
        & =  \abs{\mel*{\bs s}{U}{\bs t}}^2 \prod_{1 \leq i \leq m}g_{p, s_i}.\\
    \end{aligned}
\end{equation}
In the second step, we use the fact that $\abs{\mel{\bs
b_i}{\Fc_p}{s_i,0^{\otimes p-1}}}^2 = \frac{s_i!}{p^{s_i}}$ since $|\bs b_i| =
s_i$ by definition, together with the fact that $P$ is designed to match all but
the first modes of $\bra{\bs b_i}\Fc_p$ with the vacuum. The last step, we
use the fact that $\frac{s_i!}{p^{s_i}}$ is independent of $\bs b_i$, i.e., it
does not depend on the arrangements of zeros and ones in it, and the fact that
there are $\binom{p}{s_i}$ bitstrings of size $p$ of Hamming-weight $s_i$.
Therefore, each probability of the pseudo-PNR distribution is biased by a factor
\begin{equation}
    g_{p, \bs s} = \prod_{1 \leq i \leq m} g_{p, s_i}.
\end{equation}
The true distribution is recovered by multiplying each individual probability by
this factor.

\paragraph{Using mitigated pseudo-PNR distributions for classical shadows.}
We sample $T$ samples from the true distribution using pseudo-PNR measurements
as follows. First, sample $T_0$ times the output of the quantum circuit using
pseudo-PNR measurements to get a collection $\mathscr C_0 = (\bs s_1, \dots, \bs
s_{T_0})$. Construct a new collection $\mathscr C_1$ from $\mathscr C_0$ by
putting $g_{p, \bs s_i}$ copies of the $i$-th sample $\bs s_i$: 
\begin{equation}
    \mathscr C_1 = (\underbrace{\bs s_1, \cdots, \bs s_1}_{g_{p, \bs s_1}},\underbrace{\bs s_2, \cdots, \bs s_2}_{g_{p, \bs s_2}}, \cdots, \underbrace{\bs s_{T_0}, \cdots, \bs s_{T_0}}_{g_{p, \bs s_{T_0}}}).
\end{equation}
Now, sample $T$ elements of $\mathscr C_1$ uniformly at random with replacement.
Asymptotically (as $T_0$ grows), the empirical sample distribution of $\mathscr
C_1$ converges (in TVD) to the true distribution. We illustrate with an example
in \autoref{fig:TVDPPNRExperiment}. This indeed implies a sample overhead as
$T_0$ needs to be large. Said otherwise, even single-shot shadows will require
many shots in order to sample form the right distribution using pseudo-PNR.

\paragraph{Mixing PPNR with restricted PNR}
We extend the method presented above of to measurements performed with PNR
resolving few photons. The mitigation technique boils down to deriving the
suitable scaling factor alike \autoref{app:eq:scalingFactorPPNRT}. It is
obtained by considering not only bitstrings in the sum of
\autoref{eq:derivationScalingFactorPPNR}, but all tuples whose entries are
smaller than the PNR resolution. 

We write $\lambda \vdash n$ a partition of $n$, i.e., a tuple $\lambda =
(\lambda_1, \cdots, \lambda_\ell)$ satisfying $\sum_{i=1}^\ell \lambda_i = n$
and $\lambda_1 \geq \lambda_2 \geq \cdots \geq \lambda_\ell \geq 0$. We denote
by $\ell(\lambda)$ its length and $\lambda \vdash_{k}n$ a partition of $n$ of
length $k$. Counting the possible zeros in $\lambda$, each $\lambda$ is uniquely
associated with a vector $\alpha(\lambda)$ satisfying $n = \sum_{i=0}^n i
\alpha(\lambda)_i$ -- $\alpha(\lambda)$ simply accounts for the multiplicities
in $\lambda$. For instance, the partition $\lambda = (2,2,1,0) \vdash_4 5$ is
associated with $\alpha(\lambda) = (1,1,2)$, as $5 = 1 \times 0 + 1 \times 1 + 2
\times 2$.

Partitions of $n$ of length $m$ allow us to identify elements of $\Phi_m^n$
up to permutations. Therefore, a measurement outcome can be seen a permuted
partition of the total number of measured photons of length equal to the number
of modes. Interestingly, the probability of resolving the output of the Fourier
interferometer only depends on the partition and not the permutation (assuming
all detectors have the same resolution). Given that, the zeros in the partitions
are important.

The probability for the output of a Fourier interferometer to be properly
resolved when measured with PNR detectors with resolution $r$ is, for the input
state $\ket*{n, 0^{\otimes (p-1)}}$, 
\begin{equation}
    h_{p, n, r} = \sum_{\substack{\lambda\ \vdash_p n \\ \max_i \lambda_i \leq r}}
    \binom{p}{\alpha(\lambda)} \abs{\mel*{\lambda}{\Fc_p}{n, 0^{\otimes (p-1)}}}^2 = 
    \sum_{\substack{\lambda\ \vdash_p n \\ \max_i \lambda_i \leq r}}
    \binom{p}{\alpha(\lambda)} \frac{n!}{p^n \lambda_1!\cdots\lambda_p!},
\end{equation} 
where $\binom{p}{\alpha(\lambda)}$ is the multinomial coefficient defined as
\begin{equation}
    \binom{n}{k_1, \cdots, k_m} = \frac{n!}{k_1! \cdots k_m!}
\end{equation} 
that counts all distinct permutations of $(k_1, \cdots, k_m)$. An illustrative
example is shown in \autoref{fig:ppnrEfficiency}. Following the same lines as
\autoref{eq:derivationScalingFactorPPNR} each probability of the pseudo-PNR
distribution performed with PNR detectors with resolution $r$ is biased by a
factor
\begin{equation}\label{eq:biasPPNRResolutionR}
    h_{p, \bs s, r} = \prod_{1 \leq i \leq m} h_{p, s_i, r}.
\end{equation}
\autoref{eq:biasPPNRResolutionR} can be tweaked to account for detectors with
different resolutions by considering a vector ${\bs r = (r_1, \cdots, r_M)}$
filled with the resolution of each detector.

\begin{figure}[!ht]
    \includegraphics[width=.55\textwidth]{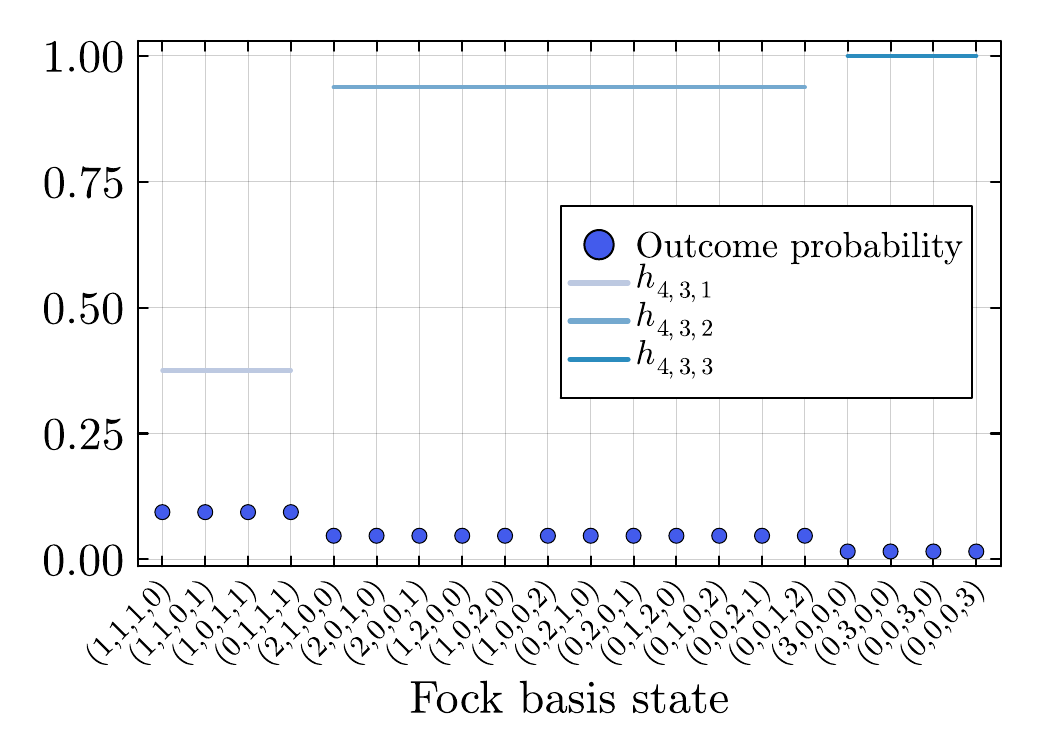}
    \caption{\justifying \textbf{Resolution efficiency of pseudo-PNR.}
    Illustration of different values of $h_{p,n,r}$ for a $4$ mode Fourier
    interferometer with input state $\ket{3, 0, 0, 0}$. The higher the
    resolution $r$ of the detectors, the better the efficiency of pseudo-PNR as
    it directly depends on the proportion of states that can be resolved
    unambiguously.}
    \label{fig:ppnrEfficiency}
\end{figure}

\section{Numerical simulations: the Bose-Hubbard Hamiltonian}\label{sec:BoseHubbard}
We demonstrate that the classical shadow protocol we introduce in this work can
efficiently predict properties of photonic states not produced by Fock basis
states evolved through a linear optical network. Moreover, this section serves
as an example of how the decomposition of an operator along the isotypic
subspaces of the operator space can be found. For illustration purposes, we
estimate the ground energy of the Bose-Hubbard Hamiltonian
\cite{childs_bosehubbard_2014} defined as:
\begin{equation}\label{eq:BHHamiltonianApp}
    \hat H = -J \sum_{0 \leq i < j \leq m} (\hat a_i^\dagger \hat a_j + \hat a_j^\dagger \hat a_i)
    - \frac{U}{2} \sum_{k = 1}^m \hat n_k(\hat n_k - 1),
\end{equation}
where $J > 0$ is the hopping parameter and $U\geq0$ is the many-body interaction
parameter. To characterize the Hamiltonian, we use the dimensionless parameter
$\Lambda = \frac{nU}{J}$. It was shown that in the non-interacting limit
(${\Lambda \rightarrow 0}$)---the so-called \emph{superfluid regime}---the
ground state of the Bose-Hubbard Hamiltonian can be produced by mean of a linear
optical network from a Fock basis state. Conversely, computational capabilities
beyond passive linear optics are required to prepare the ground state in
general, for instance using the nonlinear Kerr gates
\cite{yalouz_encoding_2021}. In the rest, we show that in spite of this, the
classical shadow protocol we introduce allows one to estimate the ground energy
efficiently in the interacting regime ($\Lambda > 0$). The following
\autoref{lm:BHDecomposition}, that exhibits the decomposition of Bose-Hubbard
Hamiltonian along the isotypic subspaces of $\Ls(\Hc_m^n)$, will be useful.

\begin{lemma}\label{lm:BHDecomposition}
    Let $\hat H = \hat H_{\text{hop}} + \hat H_{\text{int}}$ be the Bose-Hubbard
    Hamiltonian as defined in \autoref{eq:BHHamiltonianApp} where $\hat
    H_{\text{hop}}$ ($\hat H_{\text{int}}$) denotes the hopping (interacting)
    term. Then, it admits the following decomposition along the irreducible
    subspaces of $\Ls(\Hc_m^n)$:
\begin{equation}
    \begin{aligned}
        \Pi_{\lambda_0}(\hat H) & = -\frac{Un(n-1)}{m+1}\Id, \\
        \Pi_{\lambda_1}(\hat H) & = \hat H_{\text{hop}}, \\
        \Pi_{\lambda_2}(\hat H) & = \hat H_{\text{int}} - \Pi_{\lambda_0}(\hat H).
    \end{aligned}
\end{equation}
\end{lemma}

\begin{proof}
    First, by a counting argument we have 
    \begin{equation}
        \tr{\hat H} 
            = \tr{\hat H_{\text{int}}} 
            = -\frac{Un(n-1)}{m+1}\binom{n+m-1}{n},
    \end{equation}
    from what the expression of $\Pi_{\lambda_0}(\hat H)$ follows. Next, we
    decompose the interaction Hamiltonian $\hat H_{\text{int}}$. Using the
    canonical commutation relations $[\hat a_i, \hat a_j^\dagger] =
    \delta_{i,j}$, we find that     
    \begin{equation}
        \hat H_{\text{int}} 
            = -\frac{U}{2}\sum_{i=1}^m \hat n_i(\hat n_i - 1) 
            = -\frac{U}{2}\sum_{i = 1}^m \hat a_i^{\dagger 2}\hat a_i^2,
    \end{equation}
    from what follows that 
    \begin{equation}
        \Pi_{\lambda_0}(H_{\text{int}}) = \Pi_{\lambda_0}(\hat H).
    \end{equation}

    To find the projection of $\hat H_{\text{int}}$ into $\Hc_{\lambda_1}$, we
    first exhibit a basis of $\Ls(\Hc_m^1) \simeq \Hc_{\lambda_0} \oplus
    \Hc_{\lambda_1}$. We use this basis to express $\Bc_{m, 1}$ and characterize
    the projection via \autoref{pr:individualProj}. A convenient basis for the
    algebra of linear optical Hamiltonian consists of the following $m^2$
    operators \cite{parellada_lie_2024}:
    \begin{equation}
        \begin{aligned}
            \Xc = \aco\ \frac{1}{\sqrt{2}}\lpr \hat a_i^\dagger a_j + \hat a_j^\dagger \hat a_i\ \rpr\acf_{1 \leq i<j\leq m},\ 
            \Yc = \aco\ \frac{\imath}{\sqrt{2}}\lpr \hat a_i^\dagger a_j - \hat a_j^\dagger \hat a_i\ \rpr\acf_{1 \leq i<j\leq m},\ 
            \Zc = \aco\ \hat n_i\ \acf_{i=1}^m.
        \end{aligned}
    \end{equation}
    Using the identity $\hat n_m = n- \sum_{i=1}^{m-1} \hat n_i$, i.e., elements
    of $\Zc$ are not linearly independent, we obtain $\dim \Hc_{\lambda_1} =
    m^2-1$ independent operators for a basis of $\Hc_{\lambda_1}$ by removing
    e.g., $\hat n_m$ from $\Zc$. That is to say, 
    \begin{equation}
        \Bc_{m, 1} = \Xc \cup  \Yc \cup \Zc - \{\hat n_m\}.
    \end{equation}

    The operators associated with nonzero coefficient $\expval*{\hat c_{\bm k,
    \bm \ell}}_{\hat H_{\text{int}}}$ in \autoref{eq:dualCoefficient} are the
    diagonal operators, i.e. in $\Zc$. Observe that for $\hat n_i \in \Zc -
    \{\hat n_m\}$, $\expval*{\hat c_{\bm k, \bm \ell}}_{\hat H_{\text{int}}}$ is
    a constant that depends on $m$ and $n$ but independent of $i$, from what
    follows that $\Pi_{\lambda_0}(H_{\text{int}}) +
    \Pi_{\lambda_1}(H_{\text{int}}) \propto \Id$. This further implies that
    $\Pi_{\lambda_1}(H_{\text{int}}) = 0$ as $\Hc_{\lambda_0}$ is spanned by the
    identity and the claim follows.
\end{proof}

\subsection{Numerical experiments}

For the numerical experiments, the input state is either obtained by exact
diagonalization of the Hamiltonian (perfect scenario) or produced from imperfect
single-photons (noisy scenario). In both setting the classical shadow protocol
is performed on the said input state and the collection $\{(U_i, \bm
s_i)\}_{i=1}^N$ is returned for different values of $N$. Then, the ground energy
estimate
\begin{equation}
    E_0 \approx \frac{1}{N}\sum_{i=1}^N \expval{\Ms^{(n)^{-1}}(\hat H)}_{\varphi_m^n(U_i^\dagger)\ket{\bm s_i}}
\end{equation}
is obtained via the method exposed in \autoref{pr:degreeKObs} and the
decomposition shown in \autoref{lm:BHDecomposition}. That is to say, the density
matrices $\hat \rho_{(U_i, \bm s_i)}$ are never computed. As expected, the main
bottleneck to performing numerical simulation is the data-collection phase,
which rapidly becomes intractable.

Above, we proved that states produced by computational capabilities beyond
passive linear optical transformations acting on Fock basis states could as well
be learnt. Such states could be produced by e.g., adaptive measurements or via
non-linear optical gates. Here, we aim to provide numerical evidences supporting
this claim. For the numerical experiments reported in
\autoref{fig:prefectBHsim}, we fix $\Lambda = \frac{nU}{J} = 6$. That is to say, the ground
state cannot be produced by linear optics. Nonetheless, our results suggest that
meaningful information (i.e., estimate of the ground energy) can indeed be
recovered. To run the numerical simulations, we exactly diagonalise the
Hamiltonian and use the vector associated with the lowest energy as input
state-vector for the classical simulation. In a real-life scenario, the ground
state would be produced by a machine with capabilities beyond linear optics
\cite{yalouz_encoding_2021}.

\begin{figure}[!ht]
    \centering
    \subfloat[\justifying Size $N$ of the classical shadow required to get an
    approximation of the ground energy to within $5\%$ for values of ${m = 2, 4,
    6, 8}$. For each plot, the line represents the average value of $N$ over 20
    runs. For clarity, only the subset of runs with size less than 600 steps are
    represented as points.\label{fig:BHSizeShadow}]{
        \includegraphics[width=\textwidth]{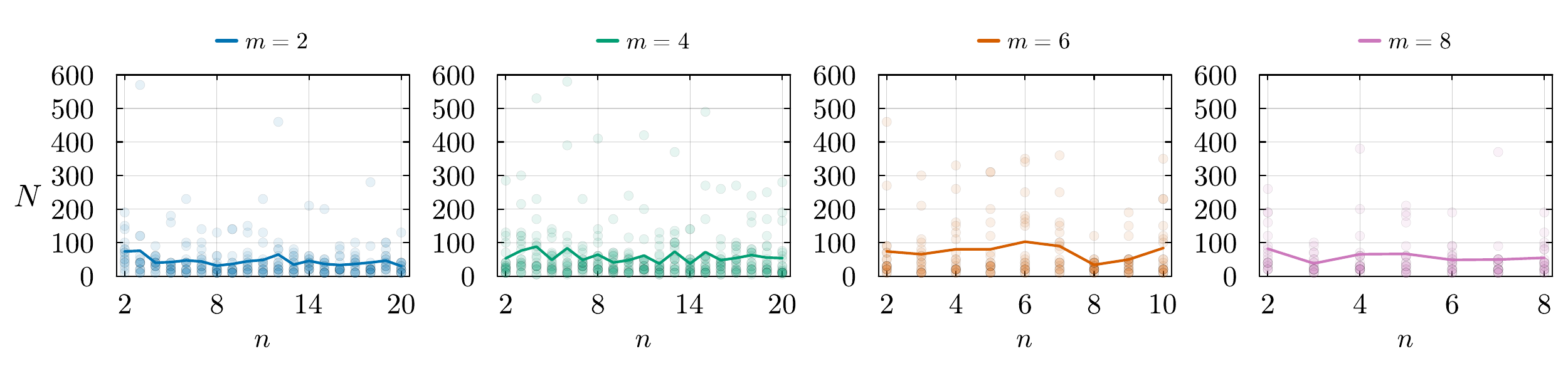}
    }
    \hfill
    \subfloat[\justifying Convergence rate of estimating the ground state
    energy for $\Lambda = 6$. \label{fig:BHConvergenceRate}]{
        \includegraphics[width=.8\textwidth]{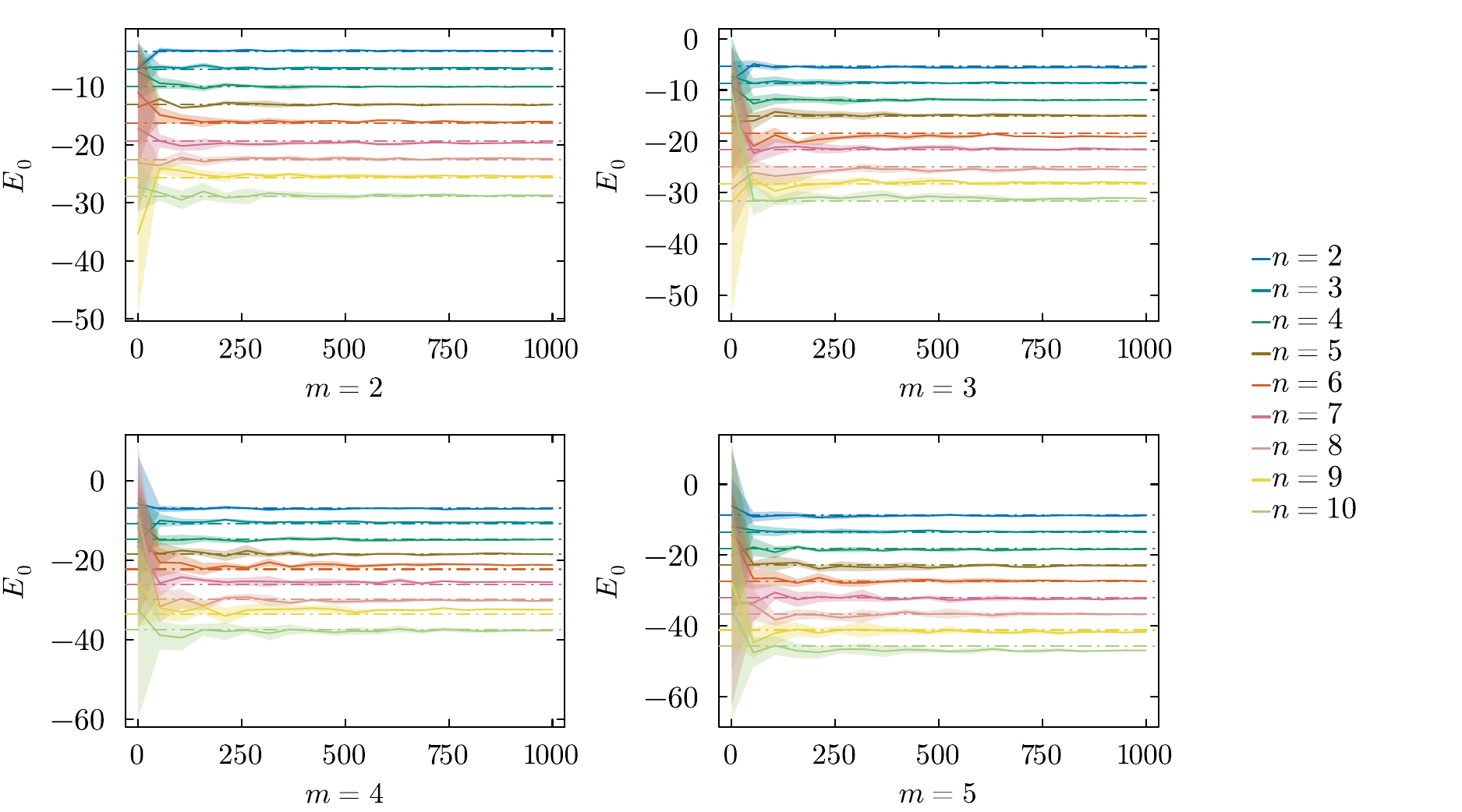}
    }

    \caption{\justifying \textbf{Resource estimation for estimating ground energy.}
    We fix $\Lambda = 6$, i.e., the ground state cannot be produced by linear optics acting on a Fock state alone. The numerical results suggest that the energy can be efficiently estimated with classical shadows.
    \autoref{fig:BHSizeShadow}: Shadow size for estimating ground energy. Interestingly, as the degree of the observable is constant throughout, the size $N$ of the classical shadow required to estimate the ground energy within $5\%$ seems to be independent of the number of modes and photons.
    \autoref{fig:BHConvergenceRate}: Convergence rate and distribution of the
    estimation. }
    \label{fig:prefectBHsim}
\end{figure}

\subsection{Experimental implementation}

We experimentally demonstrate the shadow tomography protocol on the task of
estimating the ground energy of the Bose-Hubbard Hamiltonian on a
twenty-four-mode, twelve-photon universal photonic processor. As our machine
implements arbitrary quadratic Hamiltonians, we implement the ground state of
the Bose-Hubbard Hamiltonian in the superfluid regime ($U=0$). We perform
experiments of various sizes, and give an overview of the architecture in
\autoref{fig:BelenosForBH}. Considering the overhead in circuit size induced by
the resource-preparation (see below) and the pseudo-PNR steps (see
\autoref{sec:ppnr}), an $m$-site, $n$-photon experiment effectively requires $M
= n(m + 1) - 1$ modes. The ground state can be prepared by a cascade of
beam-splitters acting on the Fock input state $\ket{n, 0^{\otimes m-1}}$ as
described in \cite{yalouz_encoding_2021}. The resource state $\ket{n}$ is
obtained by from $n$ single-photons via a QFT interferometer and post-selecting
on the first $n-1$ modes being occupied by the vacuum--giving a success
probability of $p_{\ket{n}} = \frac{n!}{n^n}$, which remains not too small for
the values of $n$ considered for these experiments. Because of the probabilistic
nature of the resource state preparation step, we sampled $10^7$ and $10^8$
shots for the $m=3$ and $4$-site experiments respectively, where a shot is an
event with as least one detector clicking.

\arxiv{}{
\begin{figure}[!ht]
    \includegraphics{Belenos.pdf}
    \caption{\justifying \textbf{Illustration of the circuit implemented on
    Belenos}. The resource preparation step consists in preparing the Fock state
    $\ket{n}$ from $n$ single-photons via a QFT interferometer and
    post-selecting on the first $n-1$ modes being occupied by the vacuum.
    $U_{\text{prep}}$ implements the ground state via a cascade of
    beam-splitters. The remainder of
    the circuit, which consists of a random evolution followed by pseudo-PNR
    measurement, mimics that the experimental runs on Ascella.}
    \label{fig:BelenosForBH}
\end{figure}}

\end{document}